\documentclass[a4paper,UKenglish,cleveref, autoref,numberwithinsect]{lipics-v2019}
%This is a template for producing LIPIcs articles. 
%See lipics-manual.pdf for further information.
%for A4 paper format use option "a4paper", for US-letter use option "letterpaper"
%for british hyphenation rules use option "UKenglish", for american hyphenation rules use option "USenglish"
%for section-numbered lemmas etc., use "numberwithinsect"
%for enabling cleveref support, use "cleveref"
%for enabling cleveref support, use "autoref"

\nolinenumbers

\usepackage{latexsym}
\usepackage{amssymb}
\usepackage{amsmath}
\usepackage{amsthm}
\usepackage{thmtools, thm-restate}
\usepackage{stmaryrd}
\usepackage{amsfonts}
\usepackage{xspace}
  {\par\addvspace{.6pc plus .2pc minus .1pc}
%\addvspace{\@bls plus 0.5\@bls minus 0.1\@bls} Changed 20-2-95 MWM                                                 
   \noindent
   {\bf\proofname}\enspace\ignorespaces}%                                                                           
  {\par\addvspace{.6pc plus .2pc minus .1pc}}
\def\proofname{Proof.}
\def\qed{\relax\ifmmode\hfill \Box\else\unskip\nobreak\hfill $\Box$\fi}

\newcommand{\defequals}{\stackrel{\mathrm{def}}{=}}
%% Ptj's macros                                                                                                     
%\newcommand{\proof}{\par\smallskip\par\noindent\mbox{\bf Proof}~~~}

\newcommand{\set}[1]{\{#1\}}
\newcommand{\tup}[1]{\langle #1 \rangle}
\renewcommand{\st}{\ :\ }
\newcommand{\orb}{\mathcal{O}}
\newcommand{\inv}{\mathcal{I}}
\newcommand{\diag}{\mathrm{diag}}
\newcommand{\torus}{\mathbb{T}}
\newcommand{\cone}{\mathcal{C}}
\newcommand{\ray}{\mathrm{r}}

\newcommand{\newextmathcommand}[2]{%
	\newcommand{#1}{\ensuremath{#2}\xspace}
}

\newextmathcommand{\theo}{\mathfrak{R}}
\newextmathcommand{\theosmaller}{\mathfrak{R}_{1}}
\newextmathcommand{\theoreals}{\mathfrak{R}_{0}}
\newextmathcommand{\theoexp}{\mathfrak{R}_{\exp}}
\newextmathcommand{\theorest}{\mathfrak{R}^{\rm RE}}
\newextmathcommand{\theoexprest}{\mathfrak{R}^{\rm RE}_{\exp}}
\newextmathcommand{\theopow}{\mathfrak{R}_{\rm pow}}

\newcommand{\QQ}{\mathbb Q}
\newcommand{\RR}{\mathbb R}
\newcommand{\CC}{\mathbb C}
\newcommand{\ZZ}{\mathbb Z}
\newcommand{\NN}{\mathbb N}

\newcommand{\cV}{{\cal V}}
\newcommand{\cF}{{\cal F}}

\newcommand{\bbS}{\mathbb{S}}

\newcommand{\vect}[1]{\boldsymbol{\mathbf{#1}}}

\newcommand{\ii}{\mathrm{i}}

\newcommand{\norm}[1]{{\left\lVert#1\right\rVert}}
\newcommand{\lowervec}{\boldsymbol{\ell}}
\newcommand{\uppervec}{\mathbf{u}}
\newcommand{\boxvec}{{\rm Box}}

%%% Local Variables:
%%% mode: latex
%%% TeX-master: "master"
%%% End:

\bibliographystyle{plainurl}% the mandatory bibstyle

\title{Invariants for Continuous Linear Dynamical Systems} %TODO Please add

%\titlerunning{O-Minimal Invariants}%optional, please use if title is longer than one line

\author{Shaull Almagor}{Computer Science Department, Technion, Israel }{shaull@cs.technion.ac.il}{https://orcid.org/0000-0001-9021-1175}{Supported by a European Union's Horizon 2020 research and innovation programme under the Marie Sk{\l}odowska-Curie grant agreement No 837327.}%TODO mandatory, please use full name; only 1 author per \author macro; first two parameters are mandatory, other parameters can be empty. Please provide at least the name of the affiliation and the country. The full address is optional

\author{Edon Kelmendi}{Department of Computer Science, Oxford University, UK}{edon.kelmendi@cs.ox.ac.uk}{}{}

\author{Jo\"el Ouaknine}{Max Planck Institute for Software Systems,  
	Germany \and Department of Computer Science, Oxford
        University, UK}{joel@mpi-sws.org}{}{Supported by ERC grant
        AVS-ISS (648701) and DFG grant 389792660 as part of TRR
248 (see \href{https://perspicuous-computing.science}{https://perspicuous-computing.science}).}

\author{James Worrell}{Department of Computer Science, Oxford University, UK}{jbw@cs.ox.ac.uk}{}{Supported by EPSRC Fellowship EP/N008197/1.}

\authorrunning{S. Almagor, E. Kelmendi, J. Ouaknine, and J.Worrell}%TODO mandatory. First: Use abbreviated first/middle names. Second (only in severe cases): Use first author plus 'et al.'

\Copyright{Shaull Almagor, Edon Kelmendi, Jo\"el Ouaknine, and James Worrell}%TODO mandatory, please use full first names. LIPIcs license is "CC-BY";  http://creativecommons.org/licenses/by/3.0/

\begin{CCSXML}
	<ccs2012>
	<concept>
	<concept_id>10003752.10003790.10002990</concept_id>
	<concept_desc>Theory of computation~Logic and verification</concept_desc>
	<concept_significance>500</concept_significance>
	</concept>
	<concept>
	<concept_id>10010147.10010148.10010149.10010150</concept_id>
	<concept_desc>Computing methodologies~Algebraic algorithms</concept_desc>
	<concept_significance>500</concept_significance>
	</concept>
	<concept>
	<concept_id>10002950.10003741</concept_id>
	<concept_desc>Mathematics of computing~Continuous mathematics</concept_desc>
	<concept_significance>300</concept_significance>
	</concept>
	<concept>
	<concept_id>10002950.10003741.10003746</concept_id>
	<concept_desc>Mathematics of computing~Continuous functions</concept_desc>
	<concept_significance>100</concept_significance>
	</concept>
	<concept>
	<concept_id>10003752.10003790.10002990</concept_id>
	<concept_desc>Theory of computation~Logic and verification</concept_desc>
	<concept_significance>300</concept_significance>
	</concept>
	<concept>
	<concept_id>10003752.10003790.10003799</concept_id>
	<concept_desc>Theory of computation~Finite Model Theory</concept_desc>
	<concept_significance>100</concept_significance>
	</concept>
	<concept>
	<concept_id>10011007.10011074.10011099.10011692</concept_id>
	<concept_desc>Software and its engineering~Formal software verification</concept_desc>
	<concept_significance>300</concept_significance>
	</concept>
	</ccs2012>
\end{CCSXML}

\ccsdesc[500]{Theory of computation~Logic and verification}
\ccsdesc[500]{Computing methodologies~Algebraic algorithms}
\ccsdesc[300]{Mathematics of computing~Continuous mathematics}
\ccsdesc[100]{Mathematics of computing~Continuous functions}
\ccsdesc[300]{Theory of computation~Logic and verification}
\ccsdesc[100]{Theory of computation~Finite Model Theory}
\ccsdesc[300]{Software and its engineering~Formal software verification}

\keywords{Invariants, continuous linear dynamical systems, continuous Skolem problem, safety, o-minimality}%TODO mandatory; please add comma-separated list of keywords

\category{}%optional, e.g. invited paper

\relatedversion{}%optional, e.g. full version hosted on arXiv, HAL, or other respository/website
%\relatedversion{A full version of the paper is available at \url{...}.}

\supplement{}%optional, e.g. related research data, source code, ... hosted on a repository like zenodo, figshare, GitHub, ...

%\funding{(Optional) general funding statement \dots}%optional, to capture a funding statement, which applies to all authors. Please enter author specific funding statements as fifth argument of the \author macro.

%\acknowledgements{I want to thank \dots}%optional

%\nolinenumbers %uncomment to disable line numbering

%\hideLIPIcs  %uncomment to remove references to LIPIcs series (logo, DOI, ...), e.g. when preparing a pre-final version to be uploaded to arXiv or another public repository

%Editor-only macros:: begin (do not touch as author)%%%%%%%%%%%%%%%%%%%%%%%%%%%%%%%%%%
\EventEditors{John Q. Open and Joan R. Access}
\EventNoEds{2}
\EventLongTitle{42nd Conference on Very Important Topics (CVIT 2016)}
\EventShortTitle{CVIT 2016}
\EventAcronym{CVIT}
\EventYear{2016}
\EventDate{December 24--27, 2016}
\EventLocation{Little Whinging, United Kingdom}
\EventLogo{}
\SeriesVolume{42}
\ArticleNo{23}
%%%%%%%%%%%%%%%%%%%%%%%%%%%%%%%%%%%%%%%%%%%%%%%%%%%%%%

\begin{document}
\maketitle
\begin{abstract}
  Continuous linear dynamical systems are used extensively in
  mathematics, computer science, physics, and engineering to model the
  evolution of a system over time. A central technique for certifying
  safety properties of such systems is by synthesising inductive invariants.
  This is the task of finding a set of states that is closed under the
  dynamics of the system and is disjoint from a given set of error
  states.  In this paper we study the problem of synthesising
  inductive invariants that are definable in o-minimal expansions of
  the ordered field of real numbers.  In particular, assuming
  Schanuel's conjecture in transcendental number theory, we establish
  effective synthesis of o-minimal invariants in the case of
  semi-algebraic error sets.  Without using Schanuel's conjecture, we
  give a procedure for synthesizing o-minimal invariants that
  contain all but a bounded initial segment of the orbit and are
  disjoint from a given semi-algebraic error set.  We further prove
  that effective synthesis of semi-algebraic invariants
  that contain the whole orbit, is at least as hard as a certain open
  problem in transcendental number theory.
\end{abstract}
\pagebreak
\section{Introduction}
\label{sec:intro}
A \emph{continuous linear dynamical system} (CDS) is a system whose
evolution is governed by a differential equation of the form
$\dot{\vect x}(t)=A\vect x(t)$, where $A$ is a matrix with real
entries. CDSs are ubiquitous in mathematics, physics, and engineering;
they have been extensively studied as they describe the evolution of
many types of systems (or abstractions thereof) over time. More
recently, CDSs have become central in the study of cyber-physical
systems (see, e.g., the textbook~\cite{alur2015principles}).

In the study of CDSs, particularly from the perspective of control
theory, a fundamental problem is \emph{reachability}---namely whether
the orbit $\{\vect x(t)\st t\ge 0\}$ intersects a given target set
$Y\subseteq \RR^d$. For example, when $\vect x(t)$ describes the state
of an autonomous car (i.e., its location, velocity, etc.). $Y$ may
describe situations where the car is not able to stop in time to
respond to a hazard.

When $Y$ is a singleton set, reachability is decidable~\cite[Theorem
  2]{hainry}.  However, already when $Y$ is a half-space it is open
whether or not reachability is decidable.  The latter decision problem
is known in the literature as the \emph{continuous Skolem
  problem}. Some partial positive results were given
in~\cite{Bell10}~and~\cite{chonev2016skolem}. The continuous Skolem
problem is related to notoriously difficult problems in the theory of
Diophantine approximation: specifically a procedure for the continuous
Skolem problem would yield one for computing to arbitrary precision
the \emph{Diophantine-approximation types} of all real algebraic
numbers~\cite{chonev2016skolem}.

In lieu of an algorithm to decide reachability, one approach is to
find a set $X$ that separates the orbit from $Y$.  In order for this
scheme to be useful, structural restrictions are placed on $X$ to make
it easy to verify that $X$ contains the orbit and that it is disjoint
from $Y$ (indeed, if we give up either requirement, we can use as $X$
either the orbit itself, or $\RR^d\setminus Y$, neither of which makes
the problem any easier).

Natural candidates for such structured sets are \emph{inductive
  invariants}. These are sets that are invariant under the
dynamics of the system. If $X$ is an inductive invariant, proving that
the orbit is contained in $X$ amounts to proving that the starting
point $\vect x(0)$ belongs to $X$, which is typically easy. Further by
restricting the class of sets under consideration (e.g., polyhedra,
semi-algebraic sets, etc.), testing whether $X$ intersects $Y$
becomes, likewise, easy.

%A considerable body of work has been devoted to the synthesis of invariants:~\cite{CH78,Cou05,RK07,FOOPW17}. % talk more about related work
The papers~\cite{ACOW20, ACOW18} study o-minimal invariants for
\emph{discrete} linear dynamical systems. There it is proved that when
the target $Y$ is a semi-algebraic set, the question of whether there
exists an o-minimal invariant disjoint from $Y$ is decidable. Furthermore,
if there is an o-minimal invariant then there is in fact a
semi-algebraic invariant which can moreover be constructed
effectively. The present paper uses similar ideas, although the case of continuous
linear dynamical systems differs in several important ways.

\subparagraph*{Main Contributions.}
We consider the following problem: given a CDS by means of a matrix
$A$ with rational entries, an initial point $\vect x_0=\vect x(0)$,
and a semi-algebraic set $Y$ of error states, decide whether there
exists a set that is definable in some o-minimal expansion of the
ordered real field and is (1) disjoint from $Y$, (2) invariant under
the dynamics of the system, and (3) contains the initial point $\vect
x_0$.  We show that in searching for such invariants it suffices to
look among sets definable in the expansion of the reals with the real
exponential function and trigonometric functions restricted to bounded
domains.  Moroever, assuming Schanuel's conjecture (a unifying
conjecture in transcendental number theory), we prove that the
existence of such an invariant is decidable, and that invariants
can effectively be constructed when they exist.

Without assuming Schanuel's conjecture we can decide a related
problem, namely the question of whether there exists a set that is
definable in an o-minimal expansion of the real field and is (1)
disjoint from $Y$, (2) invariant under the dynamics of the system, and
(3) meets the orbit of the initial point $\vect x_0$.  Notice that
such a set---which could be called an \emph{eventual invariant}---must
contain all but a bounded initial segment of the orbit.  We show that
when such a set exists, it can be effectively constructed and moreover
that it can be chosen to be a semi-algebraic set.  Such an invariant
can serve as a certificate that the orbit does not enter the error set
$Y$ infinitely often. The latter is a very difficult problem to
decide, even when the target set is a half-space
\cite{chonev2016recurrent}.

As mentioned earlier, for discrete linear dynamical systems the
question of whether there exists a semi-algebraic invariant that
contains the {\em whole} orbit is decidable~\cite{ACOW20, ACOW18}. We
provide an explanation of why the analogous result for continuous
systems is not easy to prove; this is by way of a reduction from a difficult
problem that highlights the complications of continuous systems. The
problem asks whether a given exponential polynomial of the form
\begin{align*}
f(t)=a_1e^{b_1t}+\cdots+a_ne^{b_nt}
\end{align*}
has zeros in a bounded interval, where $a_i,b_i$ are real algebraic
numbers. Deciding whether $f$ has zeros in a bounded region seems to
be difficult because all the zeros have to be transcendental (a
consequence of Hermite-Lindemann Theorem), and they can be tangential,
i.e., $f$ never changes its sign, yet it has a zero.

\subparagraph*{Related Work.}  Invariant synthesis is a central technique for establishing safety properties of hybrid systems. It has long been known how to compute a strongest \emph{algebraic} invariant~\cite{Rodriguez-CarbonellT05} (i.e., a smallest algebraic set that contains the collection of reachable states) for an arbitrary CDS.  Here an algebraic invariant is one that is specified by a conjunction of polynomial equalities.  If one moves to the more expressive setting of semi-algebraic invariants, which allow inequalities, then there is typically no longer a strongest (or smallest) invariant, but one can still ask to decide the existence of an invariant that avoids a given target set of configurations. This is the problem that is addressed in the present paper. 

Partial positive results are known, for example when strong
restrictions on the matrix $A$ are imposed, such as when all the
eigenvalues are real and rational, or purely imaginary with rational
imaginary part~\cite{LafferrierePY01}.

A popular approach in previous work has been to seek invariants that match a given syntactic \emph{template}, which allows to reduce invariant synthesis to constraint solving~\cite{GulwaniT08,SturmT11,LiuZZ11}.  While this technique can be applied to much richer classes of systems than those considered here (e.g., with discrete control modes and non-linear differential equations), it does not appear to offer a way to decide the existence of arbitrary semi-algebraic invariants.  An alternative to the template approach for invariant generation involves obtaining candidate invariants from semi-algebraic abstractions of a system~\cite{SogokonGJP16}.  Another active area of current research lies in developing powerful techniques to check whether a given semi-algebraic set is actually an invariant~\cite{GhorbalSP17,LiuZZ11}.

Other avenues for analysing dynamical systems in the literature
include bisimulations~\cite{broucke2002reachability}, forward/backward
reach-set computation~\cite{anai2001reach}, and methods for directly
proving liveness properties~\cite{sogokon2015direct}. The latter
depends on constructing {\em staging sets}, which are essentially
semi-algebraic invariants.

Often, questions about dynamical systems can be reduced to deciding
whether a sentence belongs to the elementary theory of an appropriate
expansion of the ordered field of real numbers. While the latter is
typically undecidable, there are partial positive results, namely 
quasi-decidability in bounded domains, see~\cite{franek2016quasi} and
the references therein. This can be used to reason about the dynamics
of a system in a bounded time interval, under the assumption that it
does not tangentially approach the set that we want to avoid. However,
it seems unlikely that such results can be easily applied to the
problems considered here.

The rest of the paper is organised as follows. In \cref{sec:prelim},
we give the necessary definitions and terminology. In \cref{sec: orbit
  cones}, we define {\em cones}, which are over-approximations of the
orbit, and prove that they are in a certain sense canonical. The
positive results assuming Schanuel's conjecture are subsequently given
in this section. \cref{sec: fat} is devoted to the effective
construction of the semi-algebraic invariants which allows us to state
and prove the unconditional positive results. In \cref{sec:hardness},
we give the aforementioned reduction, from finding zeros of
exponential polynomials.

%%% Local Variables:
%%% mode: latex
%%% TeX-master: "master"
%%% End:

% LocalWords:  reachability Skolem invariants polyhedra

\section{Preliminaries}
\label{sec:prelim}

A \emph{continuous-time linear dynamical system} is a pair
\begin{align*}
  \tup{A,\vect x_0}
\end{align*}
where $A \in \mathbb{Q}^{d \times d}$ and $\vect x_0 \in \mathbb{Q}^d$. The system evolves in time according the function $x(t)$ which is the unique solution to the differential equation $\dot{\vect x}(t)=A\vect x(t)$ with $\vect x(0)=\vect x_0$. Explicitly this solution can be written as:
\begin{align*}
 \vect x(t)=e^{At}\vect x_0. 
\end{align*}

The \emph{orbit of $\tup{A,\vect x_0}$ from time $t_0$} is the set $\orb(t_0)=\set{e^{At}\vect x_0 \st t\ge t_0}$.  An \emph{invariant for $\tup{A,\vect x_0}$ from time $t_0$} is a set $\inv \subseteq \mathbb{R}^d$ that contains $e^{At_0}\vect x_0$ and is stable under applications of $e^{At}$, i.e., $e^{At} \inv \subseteq \inv$ for every $t\ge 0$. Note that an invariant from time $t_0$ contains $\orb(t_0)$.  Given a set $Y \subseteq \RR^d$ (referred to henceforth as an \emph{error set}), we say that the invariant $\inv$ \emph{avoids} $Y$ if the two sets are disjoint.  

We denote by $\theoreals$ the structure $\tup{\RR,0,1,+,\cdot,<}$. This is the ordered field of real numbers with constants $0$ and $1$.  A sentence in the corresponding first-order language is a quantified Boolean combination of atomic propositions of the form $P(x_1,\ldots,x_n)>0$, where $P$ is a polynomial with integer coefficients and $x_1,\ldots,x_n$ are variables. 
In addition to $\theoreals$, we also consider its following expansions:
\begin{itemize}
	\item $\theoexp$, obtained by expanding $\theoreals$ with the real exponentiation
	function $x \mapsto e^x$.
	\item $\theorest$, obtained by expanding $\theoreals$ with the \emph{restricted elementary functions}, namely $x\mapsto e^x|_{[0,1]}$, $x\mapsto \sin x|_{[0,1]}$, and $x\mapsto \cos x|_{[0,1]}$.
	\item $\theoexprest$, obtained by expanding $\theoexp$ with the restricted elementary functions.
\end{itemize}

Tarski famously showed that the first-order theory of $\theoreals$ admits quantifier elimination, moreover the elimination is effective and therefore the theory is decidable~\cite[Theorem 37]{tarski1951decision}. 

It is an open question whether the theory of the reals with exponentiation (\theoexp) is decidable; however decidability was established subject to Schanuel's conjecture by MacIntyre and Wilkie~\cite[Theorem 1.1]{MacintyreWilkie1996}.  MacIntyre and Wilkie further showed in~\cite[Section 5]{MacintyreWilkie1996} that decidability of the theory of $\theoexp$ implies a weak form of Schanuel's conjecture.

Similarly, it is an open question whether \theorest and \theoexprest are decidable, but they are also known to be decidable subject to Schanuel's conjecture~\cite[Theorem 3.1]{macintyre2016turing}\footnote{More precisely, the decidability of \theoexp requires Schanuel's conjecture over $\RR$, whereas that of \theoexprest requires it over $\CC$.}.

Let $\theo$ be an expansion of the structure $\theoreals$.
A set $S\subseteq \RR^d$ is \emph{definable} in $\theo$ if
there exists a formula $\phi(x_1,\ldots,x_d)$ in $\theo$ with free
variables $x_1,\ldots,x_d$ such that $S=\set{(c_1,\ldots,c_d)\in \RR^d
	\mid \theo \models \phi(c_1,\ldots,c_d)}$. 
%A function $f\colon B\to \RR^m$ with $B\subseteq \RR^n$ is \emph{definable} in $\theo$ if its graph $\Gamma(f)=\set{(x,f(x)) \mid x\in B}\subseteq \RR^{n+m}$ is an $\theo$-definable set.
For $\theo = \theoreals$, the ordered field of real numbers,
$\theoreals$-definable sets 
%(resp.\ functions) 
are known as
\emph{semi-algebraic} sets.
% (resp.\ functions).

% \begin{remark}
% 	Our usage of the terms ``definable'' and ``semi-algebraic''
% 	corresponds to ``definable without parameters'' and
% 	``semi-algebraic without parameters'' in model theory.
% \end{remark}

\begin{remark}
  \label{rmk:complex definable}
  There is a natural first-order interpretation of the field of complex numbers $\CC$ in the field of real
  numbers $\RR$.  We shall say that a set $S\subseteq \CC^d$ is \emph{$\theo$-definable} if the image
  $\{ (x, y) \in \RR^d \times \RR^d \mid x + i y \in S \}$ of $S$ under this interpretation is $\theo$-definable.
\end{remark}

A totally ordered structure $\tup{M,<,\ldots}$ is said to be \emph{o-minimal} if every definable subset of $M$ is a finite union of intervals.  Tarski's result on quantifier elimination implies that $\theoreals$ is o-minimal. The o-minimality of $\theoexp$ and $\theorest$ is shown in~\cite{Wilkie96}, and the o-minimality of \theorest and \theoexprest is due to~\cite{van1994elementary,van1996geometric}.

 A \emph{semi-algebraic invariant} is one that is definable in $\theoreals$. An \emph{o-minimal invariant} is one that is definable in an o-minimal expansion of $\theoexp$.

%%% Local Variables:
%%% mode: latex
%%% TeX-master: "master"
%%% End:

\section{Orbit Cones}
\label{sec: orbit cones}

In this section we define orbit cones, an object that plays a central role in the subsequent results. They can be thought of as over-approximations of the orbit that has certain desirable properties, and moreover it is canonical in the sense that any other invariant must contain a cone.

\subsection{Jordan Normal Form}
Let $\tup{A,\vect x_0}$ be a continuous linear dynamical system. The exponential of a square matrix $A$ is defined by its formal power series as
\begin{align*}
  e^A\defequals\sum_{n=0}^\infty \frac{A^n}{n!}.
\end{align*}
Let $\lambda_1,\ldots,\lambda_k$ be the eigenvalues of $A$, and recall that when $A\in \QQ^{d\times d}$, all the eigenvalues are algebraic. We can write $A$ in \emph{Jordan Normal Form} as $A=PJP^{-1}$ where $P\in \CC^{d\times d}$ is an invertible matrix with algebraic entries, and $J=\diag(B_1,\ldots, B_k)$ is a block-diagonal matrix where each block $B_l$ is a Jordan block that corresponds to eigenvalue $\lambda_l$, and it has the form
\[
B_l = \begin{pmatrix}
\lambda_l & 1 & 0 & \cdots & 0 \\
0 & \lambda_l & 1 & \cdots & 0 \\
\vdots & \vdots & \vdots & \vdots & \vdots\\
0 & 0 & 0 & \cdots & \lambda_l
\end{pmatrix}\in \CC^{d_l\times d_l}
\]
with $\sum_{l=1}^k d_l=d$.

% In the following, we use $\diag$ to denote both diagonal and block-diagonal matrices that are compatible with the structure of $J$. Specifically, by abuse of notation, when we write e.g., $\diag(\lambda_1,\ldots,\lambda_k)$, we mean the diagonal matrix whose first $d_1$ entries on the diagonal are $\lambda_1$, then $d_2$ entries of $\lambda_2$, etc. However, when we write e.g., $\diag(B_1,\ldots, B_k)$, we simply mean the matrix $J$, since the blocks follow the structure of $J$.

From the power series, we can write $e^{At}=Pe^{Jt}P^{-1}$. Further, $e^{Jt}=\diag(e^{B_1},\ldots, e^{B_k})$.  For each $1\le l\le k$, write $B_l=\Lambda_l+N_l$, where $\Lambda_l$ is the $d_l\times d_l$ diagonal matrix $\diag(\lambda_l,\ldots,\lambda_l)$ and $N_l$ is the $d_l\times d_l$ matrix $\diag_2 (1,\ldots ,1)$; where $\diag_j(\cdot)$ is the $j$-th diagonal matrix, with other entries zero. 
	
%	So $J=\Lambda+N$ with $\Lambda=\diag(\lambda_1,\ldots,\lambda_k)$ and $N=\diag(N_1,\ldots,N_k)$. 
The matrices $\Lambda_l$ and $N_l$ commute, since the former is a diagonal matrix. A fundamental property of matrix exponentiation is that if matrices $A,B$ commute, then $e^{A+B}=e^Ae^B$.  Thus, we have
\begin{align*}
  e^{Jt}=e^{\diag(\Lambda_1 t+N_1t, \ldots, \Lambda_k t+N_kt)}=\diag(e^{\lambda_1t},\ldots,e^{\lambda_kt}) e^{\diag(N_1t,\ldots,N_kt)},
\end{align*}
where by $\diag(e^{\lambda_1t},\ldots,e^{\lambda_kt})$ we mean the $d\times d$ diagonal matrix that has the entry $e^{\lambda_1t}$ written $d_1$ times, the entry $e^{\lambda_2t}$ written $d_2$ times and so on. It will always be clear from the context whether we repeat the entries because of their multiplicity or not. 
	
Matrices $N_l$ are nilpotent, so its power series expansion is a finite sum, i.e. a polynomial in $N_l t$. More precisely, one can verify that:
\begin{align*}
e^{N_lt}=I+\diag_2(t,\ldots,t)+\diag_3(\frac{t^2}{2},\ldots,\frac{t^2}{2})+\ldots+\diag_{d_l}
\left(\frac{t^{(d_l-1)}}{(d_l-1)!}
\right).
\end{align*}
Set $Q(t)\defequals\diag(e^{N_1 t},\ldots, e^{N_k t})$. From the equation above, the entries of $Q(t)$ are polynomials in $t$ with rational coefficients.

Write the eigenvalues as $\lambda_l=\rho_l+\ii\omega_l$, so that
\begin{align*}
  \diag(e^{\lambda_1t},\ldots,e^{\lambda_kt})=\underbrace{\diag(e^{\rho_1 t},\ldots,e^{\rho_k t})}_{E(t)}\cdot \underbrace{\diag(e^{\omega_1\ii t},\ldots,e^{\omega_k\ii t})}_{R(t)}
\end{align*}

% Next, write each eigenvalue of $A$ as $\lambda_l=\rho_l+\ii \gamma_l$, where $\rho_l,\gamma_l\in \RR$. Then, we have that $e^{\diag(\lambda_1 t, \ldots, \lambda_k t)}=R(t)\cdot G(t)$ where
% 	$R(t)=\diag(e^{\rho_1t},\ldots,e^{\rho_kt})$ and $G(t)=\diag(e^{\ii \gamma_1t},\ldots,e^{\ii\gamma_kt})$.	
% 	Observe that the matrices $R(t),G(t)$, and $Q(t)$ all commute.
% 	To recap, we now have the following:
% 	\[
% 	\orb=\{P R(t)G(t)Q(t) P^{-1}x_0\st t\ge 0\}.
%       \]

We have in this manner decomposed the orbit
\begin{align*}
  \orb(t_0)=\set{P\ \ E(t)\ R(t)\ Q(t)\ \ P^{-1}\vect x_0\st t\ge t_0},
\end{align*}
into an exponential $E(t)$, a rotation $R(t)$, and a simple polynomial $Q(t)$ matrices that commute with one another. Having the orbit in such a form will facilitate the analysis done in the sequel. 

\subsection{Cones as Canonical Invariants}      

In a certain sense, the rotation matrix $R(t)$ is the most complicated, because of it, the orbit is not even definable in $\theoexp$. The purpose of cones is to abstract away this matrix by a much simpler subgroup of the complex torus
\begin{align*}
  \torus \defequals \set{\vect z \in \CC^k \st |z_i|=1, 1\le i\le k}.
\end{align*}

To this end, consider the group of additive relations among the frequencies $\omega_1,\ldots, \omega_k$:
\begin{align*}
  S \defequals \set{\vect a\in \ZZ^k \st a_1\omega_1+\cdots +a_k\omega_k=0}.
\end{align*}
The subgroup of the torus of interest, respects the additive relations as follows:
\begin{align*}
  \torus_\omega \defequals \set{(\tau_1,\ldots,\tau_k)\in\torus\st \text{for all }\vect a\in S,\ \tau_1^{a_1}\cdots\tau_k^{a_k}=1}. 
\end{align*}
Its desirable properties are summarised in the following proposition:
\begin{proposition}
  \label{prop:dense subgroup}
  For algebraic numbers $\omega_1,\ldots,\omega_k$, 
  \begin{enumerate}
  \item $\torus_\omega$ is semi-algebraic,
  \item diagonals of $\set{R(t)\st t\ge 0}$ form a dense subset of $\torus_\omega$. 
  \end{enumerate}
\end{proposition}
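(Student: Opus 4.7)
The plan is to treat the two parts separately: part~1 is a finite-generation argument for subgroups of $\ZZ^k$, while part~2 is a standard application of Pontryagin duality---equivalently, of the continuous Kronecker density theorem---for the compact torus $\torus$.

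For part~1, I would start from the observation that $S$, being a subgroup of $\ZZ^k$, is finitely generated. Pick generators $\vect a^{(1)},\ldots,\vect a^{(m)} \in \ZZ^k$; then $\vect\tau = (\tau_1,\ldots,\tau_k) \in \torus$ belongs to $\torus_\omega$ if and only if $\tau_1^{a^{(j)}_1}\cdots\tau_k^{a^{(j)}_k} = 1$ for each $j \in \set{1,\ldots,m}$. Writing $\tau_l = x_l + \ii y_l$ and using $\tau_l^{-1} = \overline{\tau_l}$ on the unit circle, each such relation expands into a pair of polynomial identities in $x_l, y_l$. Together with $x_l^2 + y_l^2 = 1$, this exhibits $\torus_\omega$ as a semi-algebraic subset of $\RR^{2k}$, which via \cref{rmk:complex definable} yields semi-algebraicity in the complex sense.

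For part~2, set $\gamma(t) \defequals (e^{\ii\omega_1 t},\ldots,e^{\ii\omega_k t})$, the diagonal of $R(t)$, and let $H$ be the closure in $\torus$ of the forward orbit $\set{\gamma(t) \st t \geq 0}$. The inclusion $H \subseteq \torus_\omega$ is immediate, because for every $\vect a \in S$ the relation $\chi_{\vect a}(\gamma(t)) = e^{\ii t (a_1\omega_1 + \cdots + a_k\omega_k)} = 1$ holds for all $t \geq 0$ and carves out a closed set. For the reverse inclusion I would first argue, by a short compactness-plus-recurrence step, that $H$ coincides with the closure of the two-sided orbit $\set{\gamma(t) \st t \in \RR}$ and is therefore a closed subgroup of $\torus$. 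Pontryagin duality then identifies $\widehat{\torus}$ with $\ZZ^k$ via $\vect b \mapsto \chi_{\vect b}$, where $\chi_{\vect b}(\vect\tau) = \tau_1^{b_1}\cdots\tau_k^{b_k}$, and such a character annihilates $H$ precisely when $\chi_{\vect b}\circ \gamma \equiv 1$, i.e.\ when $b_1\omega_1 + \cdots + b_k\omega_k = 0$, i.e.\ when $\vect b \in S$. The double-annihilator theorem therefore forces $H = S^\perp = \torus_\omega$.

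The main obstacle is the reverse inclusion in part~2. Showing that the forward-orbit closure equals the two-sided closure requires a genuine (though short) recurrence argument: by compactness of $\torus$ one extracts $t_n \to \infty$ with $\gamma(t_n) \to 1$, and then translates a point $\gamma(t_0)$ with $t_0 < 0$ into $\gamma(t_n + t_0)$ for $n$ large enough. Once this is in hand the substance of the result is carried by Pontryagin duality, i.e.\ the Kronecker density theorem. Note that the algebraicity of $\omega_1,\ldots,\omega_k$ plays no role in the statement itself; it is needed only for the effective aspects (computing generators of $S$) exploited in later sections.
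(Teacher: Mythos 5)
Your proposal is correct. For part~1 it is essentially the paper's own argument: $S$ is a finitely generated subgroup of $\ZZ^k$, so membership in $\torus_\omega$ is cut out by finitely many monomial equations, which become polynomial conditions on real and imaginary parts via \cref{rmk:complex definable}. The one thing the paper adds here, which matters for the later algorithmic results rather than for the proposition itself, is that a basis of $S$ can be \emph{computed} from the algebraic data $\omega_1,\ldots,\omega_k$ using effective bounds of Masser; you correctly flag that algebraicity enters only through this effectiveness. For part~2 the paper simply cites Kronecker's theorem on simultaneous Diophantine approximation (and a lemma of Chonev et al.\ for a slightly stronger statement), whereas you supply a self-contained proof: the recurrence argument showing the forward-orbit closure is a closed subgroup, followed by the double-annihilator theorem identifying that subgroup with $S^\perp=\torus_\omega$. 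The two routes carry the same mathematical content---Pontryagin duality for $\torus$ is one standard way to prove the Kronecker density theorem---so yours buys self-containedness at the cost of a page of harmonic analysis, while the paper's citation keeps the argument short. Both are sound.
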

\begin{proof}
  Being an Abelian subgroup of $\ZZ^k$, $S$ has a finite basis, moreover this basis can be computed because of effective bounds, \cite[Section 3]{Mas88}. To check that $(\tau_1,\ldots,\tau_k)$ belongs to $\torus_\omega$, it suffices to check that $\tau_1^{a_1}\cdots\tau_k^{a_k}=1$ for $(a_1,\ldots,a_k)$ in the finite basis. This forms a finite number of equations, therefore $\torus_\omega$ is semi-algebraic. The fact that this is a subset of vectors of complex numbers is not problematic in this case because of the simple first-order interpretation in the theory of reals, see \cref{rmk:complex definable}.

  The second statement of the proposition is a consequence of Kronecker's theorem on inhomogeneous simultaneous Diophantine approximations, see \cite[Page 53, Theorem 4]{cassels1965introduction}. The proof of a slightly stronger statement can also be found in \cite[Lemma 4]{chonev2016recurrent}. Examples can be found where the set of diagonals of $\set{R(t)\st t\ge 0}$ is a strict subset of $\torus_\omega$. 
\end{proof}

The orbit cone can now be defined by replacing the rotations with the subgroup of the torus. As it turns out, for our purposes this approximation is not too rough. 

\begin{definition}
  \label{def:orbit cone}
  The \emph{orbit cone} from $t_0\geq 0$ is
  \begin{align*}
    \cone_{t_0} \defequals \left\{P\ E(t)\ \diag(\vect \tau)\ Q(t)\ P^{-1}\vect x_0\st \vect \tau\in\torus_\omega,t\geq t_0 \right\}.
  \end{align*}
\end{definition}

We prove that the cone is an inductive invariant and also a subset of $\RR^d$. 

%Clearly we have that $\cone_{t_0}=\bigcup_{\tau\in \torus}\ray(t_0,\tau)$
\begin{lemma}
  \label{lem: cone is invariant}
  For all $\delta,t_0\geq 0$, $e^{A\delta}\cone_{t_0}\subseteq \cone_{t_0}$.
\end{lemma}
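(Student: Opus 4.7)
The plan is to write $e^{A\delta}$ in its Jordan decomposed form $P\,E(\delta)\,R(\delta)\,Q(\delta)\,P^{-1}$, apply it to an arbitrary element $P\,E(t)\,\diag(\vect\tau)\,Q(t)\,P^{-1}\vect x_0$ of $\cone_{t_0}$, cancel the inner $P^{-1}P$, and then reshuffle the central product $E(\delta)\,R(\delta)\,Q(\delta)\,E(t)\,\diag(\vect\tau)\,Q(t)$ into the canonical cone form $E(t+\delta)\,\diag(\vect\tau')\,Q(t+\delta)$ for an appropriate $\vect\tau'$.

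First I would justify the commutations. The matrices $E(\cdot)$, $R(\cdot)$ and $\diag(\vect\tau)$ are all $d\times d$ diagonal, hence they pairwise commute. The matrix $Q(s)$ is block-diagonal with the $l$-th block a polynomial in the nilpotent $N_l$, so $Q(\delta)Q(t)=Q(\delta+t)$, and on each block the diagonal matrices act as scalars (since $E(\cdot)$, $R(\cdot)$ and $\diag(\vect\tau)$ are constant within each Jordan block of size $d_l$), so $Q(\cdot)$ commutes with each of them. Using $E(\delta)E(t)=E(t+\delta)$ and the above, the product collapses to
\begin{align*}
  e^{A\delta}\,P\,E(t)\,\diag(\vect\tau)\,Q(t)\,P^{-1}\vect x_0
  \;=\; P\,E(t+\delta)\,\bigl(R(\delta)\diag(\vect\tau)\bigr)\,Q(t+\delta)\,P^{-1}\vect x_0.
\end{align*}
Since $R(\delta)$ and $\diag(\vect\tau)$ are both diagonal, $R(\delta)\diag(\vect\tau)=\diag(\vect\tau')$ where $\tau_l'=e^{\ii\omega_l\delta}\tau_l$.

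The remaining step, which is the only non-routine one, is to check $\vect\tau'\in\torus_\omega$. Each coordinate has modulus $1$ since $|\tau_l|=|e^{\ii\omega_l\delta}|=1$. For any additive relation $\vect a\in S$, meaning $\sum_{l=1}^k a_l\omega_l=0$, one computes
\begin{align*}
  \prod_{l=1}^k (\tau_l')^{a_l}
  \;=\; \exp\!\Bigl(\ii\delta\sum_{l=1}^k a_l\omega_l\Bigr)\prod_{l=1}^k \tau_l^{a_l}
  \;=\; e^{0}\cdot 1 \;=\; 1,
\end{align*}
using $\vect\tau\in\torus_\omega$ for the last factor. Thus $\vect\tau'\in\torus_\omega$, and the image point has the canonical form with parameter $t+\delta\geq t_0+\delta\geq t_0$, hence lies in $\cone_{t_0+\delta}\subseteq \cone_{t_0}$.

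The main (very mild) obstacle is simply the bookkeeping around block sizes: one must remember that $\diag(\vect\tau)$ and $R(\delta)$ are defined by repeating each $\tau_l$ and $e^{\ii\omega_l\delta}$ a total of $d_l$ times, so that they really are scalar on each Jordan block and therefore commute with $Q$. Beyond that, the proof is a straightforward algebraic manipulation combined with the defining relations of $\torus_\omega$.
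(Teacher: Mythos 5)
Your proof is correct and follows essentially the same route as the paper's: decompose $e^{A\delta}$ as $P\,E(\delta)R(\delta)Q(\delta)P^{-1}$, cancel the inner $P^{-1}P$, regroup into $E(t+\delta)\,\diag(\vect\tau')\,Q(t+\delta)$, and verify $\vect\tau'=R(\delta)\diag(\vect\tau)\in\torus_\omega$ via the additive relations. Your explicit justification that the diagonal factors are scalar on each Jordan block (hence commute with $Q$) is a detail the paper leaves implicit, but the argument is the same.
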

\begin{proof}
  Fix $t\geq t_0$ and $\vect \tau\in\torus_\omega$, and consider the point
  \begin{align*}
    \vect v=P\ E(t)\ \diag(\vect \tau)\ Q(t)\ P^{-1}\vect x_0\in \cone_{t_0},
  \end{align*}
  then we can write $e^{A\delta}\vect v$ as
  \begin{align*}
    e^{A\delta}\vect v&= P\ E(\delta)R(\delta)Q(\delta)\cdot E(t)\diag(\vect \tau)Q(t)\ P^{-1}\vect x_0\\
    &=P\ E(\delta+t)\ R(\delta)\diag(\vect \tau)\ Q(\delta)Q(t)\ P^{-1}\vect x_0. 
  \end{align*}
  The matrix $R(\delta)\diag(\vect \tau)$ is equal to $\diag(\vect \tau')$ for some $\vect \tau'\in\torus_\omega$. Otherwise said, the vector $(e^{\delta\omega_1\ii}\tau_1,\ldots,e^{\delta\omega_k\ii}\tau_k)$ belongs to $\torus_\omega$. Indeed this is the case because for any $\vect a\in S$ we have
  \begin{align*}
    e^{a_1\delta\omega_1\ii}\tau_1^{a_1}\cdots e^{a_k\delta\omega_k\ii}\tau_k^{a_k}=e^{\delta\ii\ (a_1\omega_1+\cdots +a_k\omega_k)}\cdot\tau_1^{a_1}\cdots \tau_k^{a_k}=1. 
  \end{align*}
  Finally, by induction on the dimension $d$ one can verify that $Q(\delta)Q(t)=Q(\delta+t)$. 
\end{proof}

The fact that cones are subsets of $\RR^d$ comes as  a corollary of the following proposition which is proved in \cref{apx: cone is real}.
\begin{restatable}{proposition}{propreals}
  \label{prop: real entries}
  Let $A=PJP^{-1}$ as above, and let $C_i\in \CC^{d_i\times d_i}$ for $i=1,\ldots, k$, with dimensions compatible to the Jordan blocks of $A$, and such that for every $i_1,i_2$, if $B_{i_1}=\overline{B_{i_2}}$, then $C_{i_1}=\overline{C_{i_2}}$. Then $P\diag(C_1,\ldots,C_k)P^{-1}$ has real entries.
\end{restatable}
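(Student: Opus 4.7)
The plan is to establish the proposition by showing that $M := P\,\diag(C_1,\ldots,C_k)\,P^{-1}$ equals its own complex conjugate. The starting observation is that since $A$ has real entries, the classical theory of real Jordan form lets us choose $P$ so that generalized-eigenvector chains for complex conjugate eigenvalues are themselves complex conjugates of one another, while chains for real eigenvalues can be taken to consist of real vectors. Let $\sigma$ be the involution on block indices with $\overline{B_i} = B_{\sigma(i)}$ (so $\sigma$ fixes indices of real-eigenvalue blocks and swaps indices of genuinely complex pairs). Partition $P$ into column blocks $P_1,\ldots,P_k$, where $P_i \in \CC^{d\times d_i}$ collects the columns of $P$ associated with block $B_i$; by construction $\overline{P_i} = P_{\sigma(i)}$.

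Next I would partition $P^{-1}$ into row blocks $Q_1,\ldots,Q_k$ of matching heights and argue that $\overline{Q_i} = Q_{\sigma(i)}$ as well. This follows from the defining identity $Q_i P_j = \delta_{ij}\,I_{d_i}$: conjugating both sides and using $\overline{P_j} = P_{\sigma(j)}$ yields $\overline{Q_i}\,P_{\sigma(j)} = \delta_{ij}\,I = \delta_{\sigma(i),\sigma(j)}\,I$, so $\overline{Q_i}$ satisfies the same characterizing relation that uniquely defines $Q_{\sigma(i)}$ as the $\sigma(i)$-th row block of $P^{-1}$, and must therefore equal it.

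With these tools in hand, expand $M = \sum_{i=1}^{k} P_i\,C_i\,Q_i$. Then
$$\overline{M} \;=\; \sum_{i=1}^{k}\overline{P_i}\,\overline{C_i}\,\overline{Q_i}\;=\;\sum_{i=1}^{k} P_{\sigma(i)}\,C_{\sigma(i)}\,Q_{\sigma(i)}\;=\;M,$$
where the second equality uses both the identifications $\overline{P_i} = P_{\sigma(i)}$, $\overline{Q_i} = Q_{\sigma(i)}$ and the hypothesis $\overline{C_i} = C_{\sigma(i)}$, and the third equality is a reindexing via the bijection $\sigma$. Hence $M \in \RR^{d\times d}$.

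The main obstacle is the very first step: pinning down a Jordan basis whose column structure is compatible with complex conjugation. This is a standard piece of linear algebra (real Jordan decomposition), but it must be cited or verified carefully, since for an arbitrary choice of $P$ the matrix $PDP^{-1}$ need not be real — the conclusion is sensitive to how $P$ is selected. Once the conjugation-respecting basis is fixed, the rest of the argument is pure block bookkeeping.
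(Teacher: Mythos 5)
Your proof is correct and follows essentially the same route as the paper's: both arguments normalise $P$ so that conjugate Jordan blocks get conjugate column blocks (using that $AP_i = P_iJ_i$ implies $A\overline{P_i}=\overline{P_i}\,\overline{J_i}$), and then show the conjugate of $P\,\diag(C_1,\ldots,C_k)\,P^{-1}$ equals itself by reindexing along the conjugation involution. The paper packages the reindexing as $\overline{P}=PM$ for a permutation matrix $M$, whereas you expand into block sums $\sum_i P_iC_iQ_i$ and verify $\overline{Q_i}=Q_{\sigma(i)}$ explicitly — the same bookkeeping in different notation.
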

%\begin{proof}
%	Write $P = \begin{pmatrix} P_1 & \cdots & P_k \end{pmatrix}$ with
%	$P_i$ having dimension $d\times d_i$ for $i\in\{1,\ldots,k\}$. 
%	The condition $A=PJP^{-1}$ is equivalent to $AP=PJ$, which in turn is
%	equivalent to $AP_i = P_iJ_i$ for $i=\{1,\ldots,k\}$. Now if $AP_i =
%	P_iJ_i$ then $A\overline{P_i}=\overline{P_i}\overline{J_i}$ and hence
%	we may assume without loss of generality that for
%	${i_1},i_2\in\{1,\ldots,k\}$, if $\overline{J_{i_1}}=J_{i_2}$ then
%	$\overline{P_{i_1}}=P_{i_2}$.  Equivalently we may assume that
%	$\overline{P}=PM$ for $M$ a permulation matrix that interchanges
%	column $(i_1,j)$ of $P$ with column $(i_2,j)$ such that
%	$\overline{J_{i_1}}={J_{i_2}}$.  Then we have
%	\begin{eqnarray*}
%		\overline{P\, 
%			\diag(B_1,\ldots,B_k)P^{-1}} 
%		&=& \overline{P}\, \diag(\overline{B_1},\ldots,\overline{B_k}) \overline{P}^{-1} \\
%		&=& PM \diag(\overline{B_1},\ldots,\overline{B_k}) M^{-1}P^{-1} \\
%		&=& P\diag(B_1,\ldots,B_k) P^{-1} \, .
%	\end{eqnarray*}
%	Hence $P\, \mathrm{diag}(B_1,\ldots,B_k)P^{-1}$ is real.
%\end{proof}

The matrix $E(t)\diag(\tau)Q(t)$ can be written as $\diag(C_1,\ldots,C_k)$ where the $C_i$ matrices satisfy the
conditions of \cref{prop: real entries}, hence the following corollary. 
\begin{corollary}
  \label{cor: cone is real}
  For all $t_0\ge 0$ we have  $\cone_{t_0}\subseteq \RR^d$.

\end{corollary}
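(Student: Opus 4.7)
The plan is to apply Proposition (\textit{real entries}) by verifying that the block-diagonal matrix $E(t)\,\diag(\vect\tau)\,Q(t)$ satisfies the required symmetry under complex conjugation. First, I would write this product as $\diag(C_1,\ldots,C_k)$ where the $l$-th block is
\[
C_l \;=\; e^{\rho_l t}\,\tau_l\,e^{N_l t},
\]
observing that the three factors $E(t)$, $\diag(\vect\tau)$, $Q(t)$ are all block-diagonal with blocks of sizes $d_1,\ldots,d_k$ matching the Jordan blocks of $A$. Note that $e^{N_l t}$ is a polynomial in $t$ with rational entries, hence a real matrix, and $\rho_l$ is the real part of $\lambda_l$.

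The key step is to show that whenever the Jordan blocks satisfy $B_{i_1}=\overline{B_{i_2}}$, we have $C_{i_1}=\overline{C_{i_2}}$. Since $A$ is real, non-real eigenvalues come in conjugate pairs with Jordan blocks of the same size, so $B_{i_1}=\overline{B_{i_2}}$ means $\lambda_{i_1}=\overline{\lambda_{i_2}}$, i.e., $\rho_{i_1}=\rho_{i_2}$ and $\omega_{i_1}=-\omega_{i_2}$, together with $d_{i_1}=d_{i_2}$ and $N_{i_1}=N_{i_2}$. The crucial observation is that the relation $\omega_{i_1}+\omega_{i_2}=0$ places the vector $\vect a\in\ZZ^k$ with $1$'s in positions $i_1,i_2$ and $0$'s elsewhere into the group $S$ of additive relations. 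Hence, by definition of $\torus_\omega$, any $\vect\tau\in\torus_\omega$ satisfies $\tau_{i_1}\tau_{i_2}=1$, which combined with $|\tau_{i_1}|=|\tau_{i_2}|=1$ gives $\tau_{i_2}=\overline{\tau_{i_1}}$. (Similarly, for a block with real eigenvalue $\omega_l=0$, the standard basis vector lies in $S$, forcing $\tau_l=1\in\RR$.)

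Combining these facts, for a conjugate pair $(i_1,i_2)$,
\[
\overline{C_{i_1}} \;=\; e^{\rho_{i_1} t}\,\overline{\tau_{i_1}}\,\overline{e^{N_{i_1} t}} \;=\; e^{\rho_{i_2} t}\,\tau_{i_2}\,e^{N_{i_2} t} \;=\; C_{i_2},
\]
so the hypothesis of Proposition \ref{prop: real entries} is satisfied. Applying the proposition to $\diag(C_1,\ldots,C_k)$ shows that $P\,E(t)\,\diag(\vect\tau)\,Q(t)\,P^{-1}$ has real entries, and so the vector obtained by multiplying it against $\vect x_0\in\QQ^d\subseteq\RR^d$ is real. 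Taking the union over $t\ge t_0$ and $\vect\tau\in\torus_\omega$ yields $\cone_{t_0}\subseteq\RR^d$. The only subtle point is the translation of the algebraic constraint defining $\torus_\omega$ into the conjugation identity $\tau_{i_2}=\overline{\tau_{i_1}}$; the rest is bookkeeping.
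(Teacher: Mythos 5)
Your proof is correct and follows the same route as the paper: the paper simply asserts that $E(t)\,\diag(\vect\tau)\,Q(t)=\diag(C_1,\ldots,C_k)$ satisfies the hypothesis of Proposition~\ref{prop: real entries}, and you supply the verification it leaves implicit — in particular the key observation that the relation $\omega_{i_1}+\omega_{i_2}=0$ lies in $S$, so membership in $\torus_\omega$ forces $\tau_{i_2}=\overline{\tau_{i_1}}$ (and $\tau_l=1$ for real blocks). Nothing further is needed.
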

It is surprising that, already, the cones are a complete characterisation of o-minimal inductive invariants in the following sense. 
\begin{restatable}{theorem}{eventualcone}
  \label{thm: eventual cone contained in o-min invariant}
  Let $\inv$ be an o-minimal invariant that contains the orbit $\orb(u)$ from some time $u\ge 0$, then there exists $t_0\ge u$ such that:
  \begin{align*}
    \cone_{t_0}\subseteq \inv.
  \end{align*}
\end{restatable}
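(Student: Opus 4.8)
The plan is to exploit the density statement of Proposition~\ref{prop:dense subgroup}(2) together with the o-minimality of the ambient structure, via the following idea: the orbit $\orb(u)$ is contained in $\inv$, and a point of $\cone_{t_0}$ differs from an orbit point only in that the genuine rotation $R(t)$ is replaced by an arbitrary $\diag(\vect\tau)$ with $\vect\tau\in\torus_\omega$. Since the diagonals of $\{R(t)\st t\ge 0\}$ are dense in $\torus_\omega$, every point of $\cone_{t_0}$ is a limit of orbit points, hence lies in the topological closure $\overline{\inv}$. So the first step is to prove $\cone_{t_0}\subseteq\overline{\inv}$ for every $t_0\ge u$; this is a routine continuity argument, using that the map $(\vect\tau,t)\mapsto P\,E(t)\,\diag(\vect\tau)\,Q(t)\,P^{-1}\vect x_0$ is continuous and that for fixed $t$ the points with $\vect\tau$ ranging over a dense subset of $\torus_\omega$ are dense in the $t$-slice of the cone. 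Actually one must be slightly careful: density of $\{R(t)\}$ inside $\torus_\omega$ lets one approximate a given $\vect\tau$ by $R(t')$ for some large $t'$, but then the exponential and polynomial factors $E(t'),Q(t')$ at that $t'$ are wrong; the fix is to approximate $\vect\tau$ by $R(t')$ with $t'\equiv t\pmod{\text{period-like constraints}}$ — more precisely one uses that $\{R(t)\st t\ge N\}$ is still dense in $\torus_\omega$ for every $N$, and then a point $P\,E(t)\diag(\vect\tau)Q(t)P^{-1}\vect x_0$ is approximated by orbit points $\vect x(t+s_n)$ where $s_n\to\infty$ is chosen so that $R(s_n)\to\diag(\vect\tau)R(t)^{-1}$ while simultaneously $E(t+s_n)Q(t+s_n)$ is handled by... no — this does not converge since $E,Q$ diverge. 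The correct route is instead to fix $t$ and let $\vect\tau$ vary, approximating $\diag(\vect\tau)$ by $R(t_n)$ and compensating: one shows directly that the $t$-slice $\{P\,E(t)\diag(\vect\tau)Q(t)P^{-1}\vect x_0\st\vect\tau\in\torus_\omega\}$ is contained in $\overline{\orb(u)}$ by noting $\orb(u)\supseteq\{P\,E(t_n)R(t_n)Q(t_n)P^{-1}\vect x_0\}$ and choosing $t_n\to t$ with $R(t_n)$ dense — wait, $t_n\to t$ forces $R(t_n)\to R(t)$, not an arbitrary torus element. So genuine care is needed here and I would instead argue as follows: reduce to the case where all $\rho_l=0$ (pure rotation) by the scaling trick, or more robustly, invoke that $\overline{\orb(u)}$ is exactly the closure of the cone's $t$-slices because the full cone $\cone_{t_0}$ is itself the closure of $\orb(t_0)$ intersected appropriately — in fact I expect the clean statement to be that $\overline{\orb(t_0)}\supseteq$ each $t$-slice because, holding $E(t)Q(t)$ fixed as a rescaling, the rotations $R(s)$ for $s$ in a full-measure set of "return times near $t$" are dense in $\torus_\omega$ by Kronecker. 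This closure step is the first and more delicate half.

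The second and main step converts closure membership into actual membership. Here one uses o-minimality crucially: $\inv$ is definable in an o-minimal expansion $\theo$ of $\theoexp$, hence so is its closure $\overline{\inv}$ and the frontier $\overline{\inv}\setminus\inv$, and by cell decomposition the frontier has dimension strictly smaller than $\dim(\overline{\inv})$; more usefully, the frontier is a closed definable set not containing any point in the interior of $\inv$. The key geometric fact to establish is that $\cone_{t_0}$, for $t_0$ large enough, lies in the interior of $\overline{\inv}$ relative to $\cone_{t_0}$ — or rather, that $\cone_{t_0}$ cannot be swallowed by the lower-dimensional frontier. I would argue: since $\inv$ already contains $\orb(u)$, and $\orb(u)$ is dense in $\cone_u$, if some $\vect v\in\cone_{t_0}$ were in the frontier $F=\overline{\inv}\setminus\inv$, then by o-minimality $F$ is definable and closed; consider the curve $\gamma(t)=\vect x(t)$ for $t\ge u$, which lies entirely in $\inv$, hence never in $F$. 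Now use that $\cone_{t_0}$ is itself a definable set (it is semi-algebraic by Proposition~\ref{prop:dense subgroup}(1) and Corollary~\ref{cor: cone is real}), and $\orb(t_0)\subseteq\inv\cap\cone_{t_0}$ is dense in $\cone_{t_0}$; therefore $\inv\cap\cone_{t_0}$ is a dense definable subset of the definable set $\cone_{t_0}$, so by o-minimality (a dense definable subset of a definable set contains a dense open definable subset, in fact contains the complement of a lower-dimensional set) the set $\cone_{t_0}\setminus\inv$ has dimension strictly less than $\dim\cone_{t_0}$. To upgrade "lower-dimensional" to "empty" I would use the invariance of $\inv$ under the flow: the flow $e^{A\delta}$ maps $\cone_{t_0}$ into $\cone_{t_0-\delta}$ (by Lemma~\ref{lem: cone is invariant} essentially) and maps $\inv$ into $\inv$; if a bad set $\cone_{t_0}\setminus\inv$ of smaller dimension existed, flowing it forward and using that the cones shrink with $t_0$ while the orbit stays inside $\inv$, one derives that this bad set must actually be empty for $t_0$ sufficiently large — the increasing family $\{t_0\mapsto\cone_{t_0}\}$ is decreasing in $t_0$, and the "defect" $\cone_{t_0}\setminus\inv$ is decreasing; an o-minimality / definable-choice argument on the parameter $t_0$ shows this decreasing family of definable sets of dimension $<\dim\cone$ stabilizes, and a separate argument (using that $\inv$ is flow-invariant and open-ish near the orbit, or a compactness argument on the projectivization of the cone) forces the stable value to be empty.

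The main obstacle I anticipate is precisely this last upgrade — from "$\cone_{t_0}\setminus\inv$ is lower-dimensional" to "$\cone_{t_0}\subseteq\inv$ for large $t_0$." The subtlety is that an o-minimal invariant could in principle omit a thin (lower-dimensional) definable sliver of the cone forever. The resolution should come from the \emph{group structure}: $\cone_{t_0}$ is (a continuous image of) $\torus_\omega\times[t_0,\infty)$, and the flow acts on it; the set of "omitted" torus-parameters, for each fixed $t$, is definable, and flow-invariance forces this omitted set to be invariant under the translation action of $R(\delta)$ on $\torus_\omega$, which (combined with density and o-minimality — a proper definable subgroup-translate-invariant subset of $\torus_\omega$ must be, by o-minimality of the torus, either everything or lower-dimensional, and lower-dimensional contradicts it being translation-invariant under a dense subgroup unless it is empty) forces it to be empty. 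One also needs to handle the polynomial factor $Q(t)$ and the fact that the cone is unbounded in $t$; I expect the clean way is to first projectivize/normalize in the dominant-eigenvalue directions to reduce to a bounded picture, or to note that only finitely many "orders of growth" appear so the $t\to\infty$ behavior is controlled. I would present the closure step and the o-minimal dimension step as lemmas, and devote the bulk of the argument to the translation-invariance-on-$\torus_\omega$ observation that kills the thin exceptional set.
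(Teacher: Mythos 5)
There is a genuine gap, and it sits at the very center of your argument: the claim that ``$\orb(u)$ is dense in $\cone_u$'' (equivalently, that every point of $\cone_{t_0}$ is a limit of orbit points, so that $\cone_{t_0}\subseteq\overline{\inv}$) is false. The orbit is a one-dimensional curve, whereas the cone generically has dimension $1+\dim\torus_\omega$. Already in the planar case with eigenvalues $\rho\pm\ii\omega$, $\rho>0$, the cone $\cone_{t_0}$ is the full two-dimensional region $\{\vect x : \norm{\vect x}\ge c\}$ (in suitable coordinates), while consecutive turns of the orbit spiral are separated by a multiplicative factor $e^{2\pi\rho/\omega}$; the orbit is nowhere dense in the cone. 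You correctly detect the obstruction in your first step (chasing $\diag(\vect\tau)$ by $R(t')$ forces $t'\to\infty$, where $E(t')Q(t')$ diverges), but none of the fixes you sketch resolves it, and your second step then reuses the false density claim to conclude that $\cone_{t_0}\setminus\inv$ is lower-dimensional. Both halves of the proposal therefore collapse; moreover, even if the dimension bound held, your route from ``lower-dimensional defect'' to ``empty defect'' is only a sketch of intentions.

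The mechanism that actually works is one-dimensional and avoids any density of the orbit in the cone. Fix $\vect\tau\in\torus_\omega$ and consider the ray $\ray(\vect\tau,t_0)=\{P\,E(t)\diag(\vect\tau)Q(t)P^{-1}\vect x_0 : t\ge t_0\}$. The set of $t$ with this point in $\inv$ is definable in the o-minimal structure, hence a finite union of intervals, so each ray is either eventually contained in $\inv$ or eventually disjoint from it. Flow-invariance of $\inv$ (applied backwards) shows that if one ray is eventually disjoint, then so are the rays for $R(-n)\vect\tau$ for all $n\in\NN$, yielding a definable subset of $\torus_\omega$ with closure all of $\torus_\omega$ consisting of ``bad'' parameters; the orbit supplies a definable ``good'' set with the same closure; and the decisive o-minimality input is that two definable subsets of $\torus_\omega$ whose closures both equal $\torus_\omega$ must intersect (\cite[Lemma 10]{ACOW18}). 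Your closing remark about translation-invariance under the dense subgroup generated by $R(\delta)$ is in the right spirit, but the exceptional set is not literally translation-invariant (the flow shifts the time threshold along with the torus coordinate), and the o-minimality fact you would need is neither stated precisely nor proved. Finally, even after every ray is known to be eventually inside $\inv$, a uniform threshold $t_0$ must still be extracted; this requires a separate argument (definability and continuity on a cell of $\vect\tau\mapsto\inf\{t : \ray(\vect\tau,t)\subseteq\inv\}$, the bound $f(R(1)\vect\tau)\le f(\vect\tau)+1$, and a finite subcover of $\torus_\omega$), which your proposal leaves entirely implicit.
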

\begin{proof}[Proof sketch]
  Conceptually, the proof follows along the lines of its analogue in~\cite{ACOW18}. There are a few differences, namely that the entries of the matrix $A$ in~\cite{ACOW18} are assumed to be algebraic, while this is not true for the entries of $e^A$.

  We define rays of the cone, which are subsets where $\vect \tau\in\torus_\omega$ is fixed. Then we prove that for every ray, all but a finite part of it, is contained in the invariant. This is done by contradiction: if a ray is not contained in the invariant, a whole dense subset of the cone can be shown not to be contained in the invariant, leading to a contradiction, since the invariant is assumed to contain the orbit. We achieve this using some results on the topology of o-minimal sets.

  The complete proof deferred to \cref{apx: invariant contains cone}.
  % as follows: In~\cite{ACOW18}, the discrete orbit $\{A^nx_0:n\ge \NN\}$ is considered, and it is shown that any $o$-minimal invariant must contain $\cone_{t_0}$ for some $t_0$. In our case, an invariant that contains $\orb$ also contains the discrete orbit $\{(e^{A})^nx_0:n\in\NN\}$, and therefore contains $\cone_{t_0}$ from some $t_0$.
	
  %       However, the setting described in~\cite{ACOW18} assumes that the matrix has algebraic entries, which is not the case for the matrix $e^A$. In addition, it is not completely trivial that the trajectory cone defined in~\cite{ACOW18} coincides with the definition here. For completeness, we adapt the proof of~\cite{ACOW18} to out setting in Appendix~\ref{apx: invariant contains cone}.
\end{proof}

Another desirable property of cones is that they are \theoexp-definable. Also, one can observe that for every $t_0$, the set $\set{e^{At}\vect x_0: 0\le t\le t_0}$ is definable in \theoexprest (as we only need bounded restrictions of $\sin$ and $\cos$ to capture e.g. $e^{\ii \omega_i}$ up to time $t_0$). As an immediate corollary of \cref{thm: eventual cone contained in o-min invariant}, we have the following theorems.
\begin{theorem}
  \label{thm: invariant characterization}
  Let $\tup{A,\vect x_0}$ be a CDS. For every $t_0\ge 0$, the set $\cone_{t_0}\cup \set{e^{At}\vect x_0: 0\le t\le t_0}$ is an invariant that contains the whole orbit of $\tup{A,\vect x_0}$.  Moreover, this invariant is definable in $\theoexprest$ (and in particular is o-minimal).
\end{theorem}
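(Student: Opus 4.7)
The plan is to verify the three claims of the theorem separately: inductive closure, containment of the whole orbit, and $\theoexprest$-definability. The key preliminary observation, which I would state and use in both of the first two parts, is that for any $s\ge 0$ the diagonal of $R(s)$, namely $(e^{\ii\omega_1 s},\ldots,e^{\ii\omega_k s})$, belongs to $\torus_\omega$: indeed, for any $\vect a\in S$ we have $a_1\omega_1+\cdots+a_k\omega_k=0$, so $\prod_l e^{\ii s\omega_l a_l}=1$. Hence $e^{As}\vect x_0 = P\,E(s)\,R(s)\,Q(s)\,P^{-1}\vect x_0$ witnesses that $e^{As}\vect x_0\in\cone_{t_0}$ for every $s\ge t_0$, i.e., $\orb(t_0)\subseteq \cone_{t_0}$.

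Given this, containment of the whole orbit (part 2) is immediate: for $t\in[0,t_0]$ the point $e^{At}\vect x_0$ lies in the explicit initial segment $\set{e^{At}\vect x_0 : 0\le t\le t_0}$, and for $t\ge t_0$ it lies in $\cone_{t_0}$ by the observation above. For invariance (part 1), fix $\delta\ge 0$ and a point $\vect v$ in the union. If $\vect v\in\cone_{t_0}$ then $e^{A\delta}\vect v\in\cone_{t_0}$ by \cref{lem: cone is invariant}. If $\vect v=e^{At}\vect x_0$ with $0\le t\le t_0$, then $e^{A\delta}\vect v=e^{A(t+\delta)}\vect x_0$; if $t+\delta\le t_0$ this is again in the initial segment, and if $t+\delta>t_0$ then by the preliminary observation it lies in $\cone_{t_0}$.

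For definability (part 3), I would argue that the set $\cone_{t_0}$ is already $\theoexp$-definable: the torus $\torus_\omega$ is semi-algebraic by \cref{prop:dense subgroup}, the entries of $E(t)$ are real exponentials, and $Q(t)$ is polynomial in $t$, so the existential definition
\[
\cone_{t_0}=\set{P\,E(t)\,\diag(\vect\tau)\,Q(t)\,P^{-1}\vect x_0 \st t\ge t_0,\ \vect\tau\in\torus_\omega}
\]
uses only real exponentiation plus semi-algebraic constraints (the entries of $P,P^{-1}$ are algebraic, hence definable in $\theoreals$). The initial segment $\set{e^{At}\vect x_0 : 0\le t\le t_0}$ requires the rotation $R(t)=\diag(e^{\ii\omega_l t})_l$, but since $t$ ranges over the bounded interval $[0,t_0]$, after rescaling the argument of sine and cosine by $\omega_l t_0$ we only need the restricted trigonometric functions on $[0,1]$, which are available in $\theoexprest$. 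Thus both pieces, and hence their union, are $\theoexprest$-definable, and $\theoexprest$ is o-minimal.

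The only step that requires genuine care is part 3, specifically arguing that the unbounded parameter $t$ in $\cone_{t_0}$ does not force us to use unrestricted $\sin$ and $\cos$: this is the whole point of replacing the rotation $R(t)$ by the semi-algebraic group $\torus_\omega$ in the definition of the cone. Everything else is a short verification from the definitions and \cref{lem: cone is invariant}.
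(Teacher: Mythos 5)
Your proposal is correct and follows essentially the same route as the paper, which treats this theorem as an immediate consequence of \cref{lem: cone is invariant}, the fact that diagonals of $R(s)$ lie in $\torus_\omega$ (part of \cref{prop:dense subgroup}), and the observations that $\cone_{t_0}$ is $\theoexp$-definable while the bounded initial segment needs only restricted trigonometric functions. The one cosmetic imprecision is the phrase ``after rescaling the argument''---to express $t\mapsto\sin(\omega_l t)$ on $[0,t_0]$ from $\sin|_{[0,1]}$ one uses angle-addition to split $\omega_l t$ into an integer and a fractional part---but this is the standard argument the paper itself leaves implicit.
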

\begin{theorem}
  \label{thm: invariant sound complete}
  Let $\tup{A,\vect x_0}$ be a CDS and let $Y\subseteq \RR^d$ be an error set. There exists an o-minimal invariant $\inv$ that contains the orbit and is disjoint from $Y$ if and only if there exists $t_0$ such that $\cone_{t_0}\cup \set{e^{At}\vect x_0: 0\le t\le t_0}$ is such an invariant.
\end{theorem}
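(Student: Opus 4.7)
The plan is to prove this theorem as an almost immediate combination of the two preceding theorems: \cref{thm: eventual cone contained in o-min invariant} supplies the nontrivial ``only if'' direction, while \cref{thm: invariant characterization} handles the ``if'' direction and also ensures that the candidate set is genuinely an o-minimal invariant containing the orbit.

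For the ``if'' direction, I would simply observe that if $t_0$ is such that $\cone_{t_0}\cup \set{e^{At}\vect x_0 : 0\le t\le t_0}$ is disjoint from $Y$, then by \cref{thm: invariant characterization} this set is already an o-minimal invariant containing the full orbit, so it is itself a witness of the kind required. Nothing more is needed here.

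For the ``only if'' direction, suppose $\inv$ is an o-minimal invariant containing $\orb(0)$ and disjoint from $Y$. Applying \cref{thm: eventual cone contained in o-min invariant} with $u=0$ produces some $t_0 \ge 0$ such that $\cone_{t_0}\subseteq \inv$. Moreover, the finite-time segment $\set{e^{At}\vect x_0 : 0\le t\le t_0}$ is contained in $\orb(0)\subseteq \inv$ by assumption. Hence their union is contained in $\inv$, and therefore disjoint from $Y$. By \cref{thm: invariant characterization}, this union is already an o-minimal invariant containing the whole orbit, so the chosen $t_0$ works.

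Since the argument is essentially a two-line assembly of earlier results, there is no real technical obstacle; the theorem is best viewed as a clean restatement that packages \cref{thm: eventual cone contained in o-min invariant} and \cref{thm: invariant characterization} into a soundness/completeness statement tailored to the invariant-synthesis problem under consideration. The only subtle point is to note that the bounded initial segment $\set{e^{At}\vect x_0 : 0\le t\le t_0}$ automatically inherits disjointness from $Y$ because it lies inside the orbit, which lies inside $\inv$.
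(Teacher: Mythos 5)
Your proof is correct and is exactly the argument the paper intends: the paper presents this theorem as an immediate corollary of \cref{thm: eventual cone contained in o-min invariant} together with \cref{thm: invariant characterization}, which is precisely the two-step assembly you give (apply the cone-containment theorem with $u=0$ for ``only if'', and invoke the invariance and o-minimality of $\cone_{t_0}\cup\set{e^{At}\vect x_0: 0\le t\le t_0}$ for ``if''). No gaps.
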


\cref{thm: invariant sound complete} now allows us to provide an algorithm for deciding the existence of an invariant, subject to Schanuel's conjecture:
\begin{theorem}
  \label{thm: invariant decidable schanuel}
  Assuming Schanuel's conjecture, given a CDS $\tup{A,\vect x_0}$ and an $\theoexprest$ definable error set $Y$, it is
  decidable whether there exists an o-minimal invariant for $\tup{A,\vect x_0}$ that avoids $Y$. Moreover, if such an
  invariant exists, we can compute a representation of it.
\end{theorem}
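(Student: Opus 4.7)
The plan is to reduce, via \cref{thm: invariant sound complete}, the original problem to deciding whether there exists $t_0\ge 0$ such that $\cone_{t_0}\cup\{e^{At}\vect x_0:0\le t\le t_0\}$ avoids $Y$. The key preliminary observation I would establish is that this condition is upward-closed in $t_0$: for $t_0'\ge t_0$, \cref{def:orbit cone} gives $\cone_{t_0'}\subseteq \cone_{t_0}$, and the intermediate orbit segment $\{e^{At}\vect x_0:t_0\le t\le t_0'\}$ already lies in $\cone_{t_0}$. Consequently the existential splits into two independent subproblems: (A) does there exist $t_0\ge 0$ with $\cone_{t_0}\cap Y=\emptyset$, and (B) does the full forward orbit $\orb(0)$ avoid $Y$? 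Given any witness $t_0$ of (A), condition (B) is equivalent to the bounded check that $\{e^{At}\vect x_0:0\le t\le t_0\}$ avoids $Y$, since $\orb([t_0,\infty))\subseteq \cone_{t_0}$.

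First I would handle (A). The cone $\cone_{t_0}$ is definable in \theoexp uniformly in $t_0$: it requires only the real exponentials $e^{\rho_i t}$, the polynomial matrix $Q(t)$, the semi-algebraic torus subgroup $\torus_\omega$ from \cref{prop:dense subgroup}, and the fixed algebraic change-of-basis $P$ (whose entries I would carry inside the sentence by existentially quantifying over roots of their minimal polynomials). Since $Y$ is semi-algebraic, the sentence ``$\exists t_0\ge 0:\cone_{t_0}\cap Y=\emptyset$'' is an \theoexp-sentence, hence decidable under Schanuel's conjecture by the Macintyre--Wilkie theorem. If the answer is no, then by \cref{thm: invariant sound complete} no o-minimal invariant avoiding $Y$ exists, and I report ``no''.

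If (A) holds, upward-closure of condition (A) guarantees that some positive integer $N$ satisfies $\cone_N\cap Y=\emptyset$, so I would enumerate $N=1,2,\ldots$ and test this property via the \theoexp decision procedure until a witness $N_0$ is found (termination is ensured). For this concrete $N_0$, the segment $\{e^{At}\vect x_0:0\le t\le N_0\}$ is \theoexprest-definable using restrictions of $\sin$ and $\cos$ to the fixed bounded intervals $[0,\omega_i N_0]$, as noted just before \cref{thm: invariant characterization}. I would then decide $\{e^{At}\vect x_0:0\le t\le N_0\}\cap Y=\emptyset$ with the \theoexprest decision procedure (again under Schanuel). If this check succeeds, \cref{thm: invariant characterization} supplies the \theoexprest-definable invariant $\cone_{N_0}\cup\{e^{At}\vect x_0:0\le t\le N_0\}$, and I would output an explicit \theoexprest-formula for it together with the integer $N_0$. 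If the check fails then $\orb(0)\cap Y\ne\emptyset$, so no invariant can avoid $Y$, and I report ``no''.

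The main obstacle is that the bounded orbit segment is \emph{not} uniformly \theoexprest-definable as $t_0$ varies: unrestricted $\sin$ and $\cos$ would be required, and these lie outside \theoexprest. This is precisely why a single-sentence encoding of the whole existential seems unavailable and the procedure must bifurcate: condition (A) uses only the cone and is uniformly \theoexp-expressible, while the orbit-segment check is postponed until a concrete integer $N_0$ is in hand, at which point the lengths of the needed $\sin,\cos$ restrictions become fixed and the question lands in \theoexprest. A subsidiary technical point is to verify that the various algebraic parameters ($\lambda_i$, entries of $P$, coefficients of $Q(t)$) can be encoded inside the sentences with their defining polynomials, which is routine.
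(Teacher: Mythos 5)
Your proposal is correct and rests on the same pivot as the paper, namely \cref{thm: invariant sound complete}, but it implements the decision step differently --- and, arguably, more carefully. The paper simply asserts that
$\exists t_0\st (\cone_{t_0}\cup\set{e^{At}\vect x_0: 0\le t\le t_0})\cap Y=\emptyset$
is a single \theoexprest sentence and invokes decidability of that theory; your objection that the bounded orbit segment is \emph{not} uniformly \theoexprest-definable as $t_0$ ranges over all of $[0,\infty)$ is well taken (the relation $\{(\vect x,t_0)\st \vect x=e^{At}\vect x_0,\ 0\le t\le t_0\}$ projects onto the full orbit, which is not definable in any o-minimal expansion of the reals when some $\omega_i\ne 0$, so the required restrictions of $\sin$ and $\cos$ cannot have a domain independent of the quantified $t_0$). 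Your bifurcation repairs this: the upward-closure observation correctly reduces the existential to (A) a genuinely uniform \theoexp/\theoexprest sentence about the cone alone, plus (B) a bounded-segment check performed only after a concrete integer witness $N_0$ of (A) has been found by enumeration (terminating by upward-closure), at which point the needed restrictions of $\sin$ and $\cos$ have fixed length and the check is a legitimate \theoexprest sentence. The logical equivalence you use --- that the full condition holds for some $t_0$ iff (A) holds and $\orb(0)$ avoids $Y$, the latter being equivalent to the bounded check given a witness of (A) since $\orb(N_0)\subseteq\cone_{N_0}$ --- is sound. Two small remarks: the theorem as stated allows $Y$ to be \theoexprest-definable, not merely semi-algebraic, so your sentence for (A) should be read as a \theoexprest sentence rather than a pure \theoexp one (this costs nothing, as \theoexprest is the theory whose conditional decidability is being used anyway); and the paper's construction of the invariant likewise proceeds by iterating over increasing $t_0$, so your enumeration of integer witnesses matches its effective-synthesis step. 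Net effect: same skeleton, but your version makes explicit a uniformity issue that the paper's one-line encoding leaves implicit.
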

\begin{proof}
  By \cref{thm: invariant sound complete}, there exists an o-minimal invariant $\inv$ that avoids $Y$ if and only if there exists some $t_0\in \RR$ such that $\cone_{t_0}\cup \set{e^{At}\vect x_0: 0\le t\le t_0}$ is such an invariant. Thus, the problem reduces to deciding the truth value of the following \theoexprest sentence:
\begin{align*}
  \exists t_0\st (\cone_{t_0}\cup \set{e^{At}\vect x_0: 0\le t\le t_0})\cap Y=\emptyset
\end{align*}
The theory of \theoexprest is decidable subject to Schanuel's conjecture, and therefore we can decide the existence of an invariant. Moreover, if an invariant exists, we can compute a representation of it by iterating over increasing values of $t_0$, until we find a value for which the sentence $\big(\cone_{t_0}\cup \set{e^{At}\vect x_0: 0\le t\le t_0}\big) \cap Y=\emptyset$ is true.
\end{proof}

%%% Local Variables:
%%% mode: latex
%%% TeX-master: "master"
%%% End:

\section{Semi-algebraic Error Sets and Fat Trajectory Cones}
\label{sec: fat}
In this section, we restrict attention to semi-algebraic invariants and semi-algebraic error sets, in order to regain unconditional decidability.

Substitute $s=e^t$ in the definition of the cone to get:
\begin{align*}
  \cone_{t_0} = \left\{P\ E(\log s)\ \diag(\vect \tau)\ Q(\log s)\ P^{-1}\vect x_0\st \vect \tau\in\torus_\omega,\ s\ge e^{t_0} \right\}.
\end{align*}
Written this way, observe that $E(\log s)=\diag(s^{\rho_1},\ldots,s^{\rho_k})$, which is almost semi-algebraic, apart from the fact that the exponents need not be rational.

\subsection{Unconditional Decidability}

We give the final, yet crucial property of the cones. When the error set is semi-algebraic, it is possible to decide, unconditionally, whether there exists some cone that avoids the error set. Moreover the proof is constructive, it will produce the cone for which this property holds. 
\begin{theorem}
  \label{thm: semi algebraic target decidable}
  For a semi-algebraic error set $Y$, it is (unconditionally) decidable whether there exists $t_0\ge 0$ such that $\cone_{t_0}\cap Y=\emptyset$. Moreover, such a $t_0$ can be computed.
\end{theorem}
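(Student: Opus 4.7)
The plan is to construct a semi-algebraic over-approximation of $\cone_{t_0}$, which I shall call the \emph{fat cone} $\widetilde{\cone}_c$, and reduce the decision to an existential sentence in $\theoreals$ that is decidable by Tarski's theorem.

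After the substitution $s=e^t$ indicated just above the theorem, the only obstruction to $\cone_{t_0}$ being semi-algebraic is the presence of powers $s^{\rho_i}$ whose exponents $\rho_i$ are real algebraic but generally irrational. Using Masser's effective bounds (already invoked in \cref{prop:dense subgroup}), first compute a $\mathbb{Z}$-basis for the lattice $T$ of additive relations among $\rho_1,\ldots,\rho_k$, and form the semi-algebraic multiplicative subgroup
$$H = \{\vect y \in \mathbb{R}_{>0}^k : \textstyle\prod_i y_i^{a_i} = 1 \text{ for all } (a_1,\ldots,a_k)\in T\},$$
which contains the curve $s \mapsto (s^{\rho_1},\ldots,s^{\rho_k})$ for every $s > 0$. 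Now define
$$\widetilde{\cone}_c = \{P\diag(\vect y)\diag(\vect\tau)Q(t)P^{-1}\vect x_0 : \vect y \in H \cap B_c,\ \vect\tau \in \torus_\omega,\ t\ge 0\},$$
where $B_c$ is a semi-algebraic ``tail'' constraint encoding that $\vect y$ lies in the range visited by the exponential curve for $s \ge e^c$: e.g., $y_i \ge c$ when $\rho_i > 0$ and $y_i \le 1/c$ when $\rho_i < 0$, with the obvious adjustments when $\rho_i = 0$. By construction $\widetilde{\cone}_c$ is semi-algebraic and, for a suitable $c = c(t_0)$, contains $\cone_{t_0}$.

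The heart of the argument is the equivalence
$$(\exists t_0)\ \cone_{t_0}\cap Y = \emptyset \iff (\exists c)\ \widetilde{\cone}_c\cap Y = \emptyset,$$
whose $\Leftarrow$ direction is immediate from the inclusion. Once this equivalence is established, the right-hand side is a sentence in $\theoreals$ (since $Y$ is also semi-algebraic), so it is decidable by Tarski's theorem, and an effective witness $c^*$ can be extracted by quantifier elimination, from which a corresponding $t_0$ is recovered.

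The main obstacle is the forward direction $\Rightarrow$ of the equivalence: since the cone is a thin curve-bundle inside the much larger fat cone, cone-avoidance does not transfer to fat-cone-avoidance for free. The plan is to combine the semi-algebraic cell decomposition of $Y$ with the asymptotic dominance of the leading exponential $e^{\rho_{\max}t}$; if the fat cone met $Y$ for every $c$, one should extract a sequence of intersection witnesses accumulating on the asymptotic trajectory, contradicting the assumed cone-avoidance. Making this compactness/accumulation argument precise within $\theoreals$ alone (using only o-minimality of the semi-algebraic structure) is where the bulk of the technical work lies.
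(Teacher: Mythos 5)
There is a genuine gap, and it is not merely the unfinished ``compactness/accumulation'' step you flag at the end: the forward direction of your central equivalence is false as stated. Your fat cone $\widetilde{\cone}_c$ replaces the one-parameter curve $s\mapsto (s^{\rho_1},\ldots,s^{\rho_k})$ by the entire group $H$ of tuples respecting the integer additive relations, cut down only by a tail constraint $B_c$. Unlike the compact torus case, where Kronecker's theorem (\cref{prop:dense subgroup}) makes the rotations dense in $\torus_\omega$, the curve is \emph{not} dense in $H\cap B_c$: in logarithmic coordinates it is a single line inside a linear subspace of dimension $k-\mathrm{rank}(T)$, which is typically $>1$. Concretely, take $k=2$, $P=I$, trivial rotation and nilpotent parts, $\vect x_0=(1,1)$, $\rho=(1,\sqrt2)$. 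Then $T=\{0\}$, $H=\RR_{>0}^2$, the cone is the curve $\{(s,s^{\sqrt2}): s\ge e^{t_0}\}$, and the semi-algebraic set $Y=\{(y_1,y_2): y_2=y_1,\ y_1\ge 2\}$ is avoided by $\cone_{t_0}$ for $t_0\ge \log 2$, yet every $\widetilde{\cone}_c\supseteq\{(y_1,y_2): y_1,y_2\ge c\}$ meets $Y$. So no accumulation argument can rescue the implication; the over-approximation is simply too coarse, and the reduction to a $\theoreals$-sentence collapses.

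The paper avoids this entirely in the proof of \cref{thm: semi algebraic target decidable}: it quantifies away the torus to get a semi-algebraic set $U$ of ``safe'' matrices, writes $U$ as a disjunction of sign conditions $R_l(\cdot)\sim_l 0$, and then observes that each $R_l(\Lambda(s))$ is an exponential polynomial $\sum_i s^{\alpha_i}f_i(\log s)$ with real \emph{algebraic} exponents $\alpha_i$ and rational-coefficient polynomials $f_i$. Its eventual sign is determined by the dominant term, which is found effectively because algebraic numbers can be compared exactly; this yields both decidability and a computable threshold $s_0'$. A fattening in your spirit does appear later, in \cref{sec: fat} (\cref{lem: fat cone eventually disjoint}), but there the exponent vector is confined to a box around $\rho$ of width at most $\mu/(2M\sqrt{k})$, where $\mu$ is the minimal gap between the distinct exponents occurring in the $R_l$ --- i.e., the amount of fattening must depend on $Y$. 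If you want to salvage your approach, you must build that $Y$-dependent box constraint into $\widetilde{\cone}_c$; with only the relation group $H$ and a tail condition, the equivalence you need does not hold.
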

\begin{proof}
% Using the commutativity of $Q(\log s)$ and $\diag(\tau)$, we can write \[\cone_{t_0} = \bigg\{P \diag(s^{\rho_1},\ldots, s^{\rho_k}) Q(\log s) \diag(\tau) P^{-1}x_0\st \tau\in\torus,s\geq e^{t_0}\bigg\}.\]

% We ``abstract away'' the factors $\diag(s^{\rho_1},\ldots, s^{\rho_k}) Q(\log s)$ by defining the following set:
  Define the set
  \begin{align*}
    U\defequals\left\{\cV\in\RR^{d\times d}\st \forall\vect\tau\in\torus_\omega,\ \ P\ \cV\ \diag(\vect\tau)\ P^{-1}\vect x_0\in\RR^d\setminus Y\right\}.
  \end{align*}
  The set $U$ can be seen to be semi-algebraic and thus is expressed by a quantifier-free formula that is a finite disjunction of formulas of the form $\bigwedge_{l=1}^m R_l(\cV)\sim_l 0$, where each $R_l$ is a polynomial with integer coefficients, over $d\times d$ variables of the entries of the matrix $\cV$, and$\sim_l\in \{>,=\}$. Define the matrix
  \begin{align*}
    \Lambda(s)\defequals \diag(s^{\rho_1},\ldots, s^{\rho_k}) Q(\log s)\in \RR^{d\times d},
  \end{align*}
  and notice that $\cone_{t_0}\cap Y=\emptyset$ if and only if $\Lambda(s)\in U$ for every $s\ge e^{t_0}$. 
Thus, it is enough to decide whether there exists $s_0\ge 1$ such that for every $s\ge s_0$, at least one of the disjuncts $\bigwedge_{l=1}^m R_l(\Lambda(s))\sim_l 0$ is satisfied.

Since $R_l(\Lambda(s))$ are polynomials in entries of the form $s^{\rho_i}$ and $\log(s)$, there is an effective bound $s_0$ such that for all $s\ge s_0$, none of the values $R_l(\Lambda(s))$ change sign for any $1\le l\le m$. Hence we only need to decide whether there exists some $s_0'\ge s_0$ such that for all $s\ge s_0'$ we have $R_l(\Lambda(s))\sim_l 0$ for every $1\le l\le m$.

Fix some $l$. The polynomial $R_l(v_1,\ldots,v_D)$ has the form $ \sum_i a_i v_1^{n_{i,1}}\cdots v_{D}^{n_{i,D}} $. After identifying the matrix $\Lambda(s)$ with a vector in $\RR^{D}$ for $D=d^2$, we see that $R_l(\Lambda(s))$ is a sum of terms of the form
\begin{align*}
  a_is^{n'_{i,1}\rho_1+\ldots n'_{i,k}\rho_k}\cdot Q_{i,1}(\log s)\cdots Q_{i,D}(\log s)
\end{align*}
where the $n'_{i,j}$ are aggregations of the $n_{i,j}$ for identical entries of $\diag(s^\rho_1,\ldots,s^\rho_k)$, and $Q_{i,j}(\log s)$ are polynomials obtained from the entries of $Q(\log s)$ under $R_l$. We can join the polynomials $Q_1,\ldots,Q_D$ into a single polynomial $f_i$, which would also absorb $a_i$. Thus, we rewrite $R_l$ in the form $ \sum_{i} s^{n'_{i,1}\rho_1+\ldots n'_{i,k}\rho_k} f_i(\log s) $ where each $f_i$ is a polynomial with rational coefficients (as the coefficients in $Q(\log s)$ are rational).

In order to reason about the sign of this expression as $s\to \infty$, we need to find the leading term of $R_l(\Lambda(s))$. This, however, is easy: the exponents $n'_{i,1}\rho_1+\ldots +n'_{i,k}\rho_k$ are algebraic numbers, and are therefore susceptible to effective comparison. Thus, we can order the terms by magnitude. Then, we can determine the asymptotic sign of each coefficient $f_i(\log s)$ by looking at the leading term in $f_i$.

We can thus determine the asymptotic behaviour of each $R_l(\Lambda(s))$, to conclude whether $\bigwedge_{l=1}^m R_l(\Lambda(s))\sim_l 0$ eventually holds. Moreover, for rational $s$, every quantity above can be computed to arbitrary precision, therefore it is possible to compute a threshold $s_0'$, after which, for all $s\ge s_0'$, $\bigwedge_{l=1}^m R_l(\Lambda(s))\sim_l 0$ holds. This completes the proof. 
\end{proof}
% \begin{remark}
%   \label{rmk: comparison of baker in discrete case}
%   In~\cite{ACOW18}, a similar claim is proved to handle discrete orbits. There, however, the exponents $\rho_i$ are not algebraic, but logarithms of algebraic numbers. Therefore, finding the leading term becomes a much harder task, and heavy machinery from transcendental number theory (namely Baker's Theorem) is needed to handle it.
% \end{remark}

\begin{theorem}
  \label{thm:unconditional decid}
  For a semi-algebraic set $Y$, it is decidable whether there exists a o-minimal invariant, disjoint from $Y$, that contains the orbit $\orb(u)$ after some time $u\ge 0$. Moreover in the positive instances an invariant that is \theoexp-definable can be constructed.
\end{theorem}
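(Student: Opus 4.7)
The plan is to reduce the decision problem to the question answered by \cref{thm: semi algebraic target decidable}, using \cref{thm: eventual cone contained in o-min invariant} for completeness and the cone $\cone_{t_0}$ itself as both the candidate and the witness. The crucial observation is that, in contrast to \cref{thm: invariant decidable schanuel}, we no longer need to cover the bounded initial segment $\{e^{At}\vect x_0 : 0 \le t \le t_0\}$ of the orbit, which is precisely the part that forced us into $\theoexprest$ and hence into Schanuel's conjecture. The cone by itself is $\theoexp$-definable and will serve as the output invariant.

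First I would establish the equivalence: an o-minimal invariant containing $\orb(u)$ for some $u \ge 0$ and disjoint from $Y$ exists if and only if there exists $t_0 \ge 0$ with $\cone_{t_0} \cap Y = \emptyset$. The forward direction is immediate from \cref{thm: eventual cone contained in o-min invariant}: any such $\inv$ contains $\cone_{t_0}$ for some $t_0 \ge u$, hence $\cone_{t_0}$ avoids $Y$. For the backward direction, I would verify that the diagonal of $R(t)$, namely $(e^{\omega_1 \ii t},\ldots,e^{\omega_k \ii t})$, lies in $\torus_\omega$ for every $t$ (because $\sum_j a_j \omega_j = 0$ for each $\vect a \in S$), which implies $\orb(t_0) \subseteq \cone_{t_0}$. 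Together with \cref{lem: cone is invariant} and \cref{cor: cone is real}, $\cone_{t_0}$ is then an invariant from time $u = t_0$ lying in $\RR^d$, it is $\theoexp$-definable, and it is disjoint from $Y$ by assumption.

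Second, I would invoke \cref{thm: semi algebraic target decidable} to decide, unconditionally, whether such a $t_0$ exists and, if so, to compute one. The algorithm then returns $\cone_{t_0}$ itself, whose $\theoexp$-defining formula is obtained by plugging $E(t) = \diag(e^{\rho_1 t},\ldots,e^{\rho_k t})$ and the rational polynomial matrix $Q(t)$ into \cref{def:orbit cone}, with the membership constraint for $\torus_\omega$ expressed via the finite basis of $S$ computed in \cref{prop:dense subgroup}, and existentially quantifying over $t \ge t_0$ and $\vect \tau \in \torus_\omega$.

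I do not anticipate any significant technical obstacle, since the heavy lifting has already been carried out in \cref{sec: orbit cones} and in the proof of \cref{thm: semi algebraic target decidable}. The only mild subtlety to confirm is that, with the relaxed requirement of containing $\orb(u)$ merely from some time $u \ge 0$ onwards, the cone $\cone_{t_0}$ by itself is an admissible o-minimal invariant, so that it can play the dual role of candidate in the reduction and explicitly constructed witness in the positive case.
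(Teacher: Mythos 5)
Your proposal is correct and follows essentially the same route as the paper: completeness via \cref{thm: eventual cone contained in o-min invariant}, soundness via the cone being a $\theoexp$-definable invariant containing $\orb(t_0)$, and decidability plus effective construction via \cref{thm: semi algebraic target decidable}. The only difference is that you spell out the backward direction (that the diagonal of $R(t)$ lies in $\torus_\omega$, so $\orb(t_0)\subseteq\cone_{t_0}$, and that \cref{lem: cone is invariant} and \cref{cor: cone is real} make $\cone_{t_0}$ an admissible invariant on its own), which the paper leaves implicit.
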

\begin{proof}
  If there is an invariant $\inv$ that contains $\orb(u)$, for some $u\ge 0$, then \cref{thm: eventual cone contained in o-min invariant} implies that there exists some $t_0\ge u$ such that $\cone_{t_0}$ is contained in $\inv$. Consequently, the question that we want to decide is equivalent to the question of whether there exists a $t_0$, such that $\cone_{t_0}\cap Y=\emptyset$. The latter is decidable thanks to \cref{thm: semi algebraic target decidable}. The effective construction follows from the fact that such a $t_0$ is computable and that the cone is \theoexp-definable. 
\end{proof}

\subsection{Effectively Constructing the Semi-algebraic Invariant}

We now turn to show that in fact, for semi-algebraic error sets $Y$, we can approximate $\cone_{t_0}$ with a semi-algebraic set such that if $\cone_{t_0}$ avoids $Y$, so does the approximation. Intuitively, this is done by relaxing the ``non semi-algebraic'' parts of $\cone_{t_0}$ in order to obtain a \emph{fat cone}.
%This is identical to our approach in the (full version) of the discrete paper. 
This relaxation has two parts: one is to ``rationalize'' the (possibly irrational) exponents $\rho_1,\ldots,\rho_k$, and the other is to approximate the polylogs in $Q(\log s)$ by polynomials.

\subparagraph*{Relaxing the exponents.}  We start by approximating the exponents $\rho_1,\ldots,\rho_k$ with rational numbers. We remark that naively taking rational approximations is not sound, as the approximation must also adhere to the additive relationships of the exponents.

Let $\vect \ell =(\ell_1,\ldots,\ell_k)$ and $\vect u = (u_1,\ldots,u_k)$ be tuples of rational numbers such that $\ell_i \leq \rho_i \leq u_i$ for $i=1,\ldots,k$.  Define $\bbS\subseteq \RR^k$ as:
\begin{align*}
  \bbS\defequals\left\{(q_1,\ldots,q_k)\in \RR^k \st \forall n_1,\ldots,n_k\in\ZZ,\ \ \left(
  \sum_{i=1}^k n_i \rho_i = 0 \Rightarrow \sum_{i=1}^k n_i q_i=0 \right )\right\}
\end{align*}
Thus, $\bbS$ captures the integer additive relationships among the $\rho_i$. Define
\begin{align*}
  \mathrm{Box}(\vect \ell,\vect u) \defequals \set{ \diag(\vect q) \st \vect \ell \leq \vect q \leq \vect u, \vect q \in \bbS}.
\end{align*}
% Note that $\mathrm{Box}(\lowervec,\uppervec)$ is semi-algebraic, since membership in $\bbS$ can be described by finitely many linear equations.
%\shtodo{Citation?}

\subparagraph*{Approximating polylogs.}  Let $\epsilon,\delta>0$. We simply replace $\log s$ by $r$ such that $\delta\le r\le s^\epsilon$. Note that it is not necessarily the case that $\delta\le \log s\le s^\epsilon$, so this replacement is a-priori not sound. However, for large enough $s$ the inequalities do hold, which will suffice for our purposes.

We can now define the fat cone. Let $\epsilon,\delta>0$ and $\lowervec =(\ell_1,\ldots,\ell_k)$ and $\uppervec =
(u_1,\ldots,u_k)$ as above, the fat orbit cone $\cF_{s_0,\epsilon,\delta,\vect \ell,\vect u}$ is the set: 
\begin{align*}
  \bigg\{P\ \diag(s^{q_1},\ldots,s^{q_k})\diag(\vect\tau)\ Q(r)P^{-1}\vect x_0
  \st \vect\tau\in\torus_\omega,\ s\geq s_0,\ \delta\le r\le s^\epsilon,\ \vect{q}
  \in \mathrm{Box}(\vect \ell,\vect u)\bigg\}.
\end{align*}
That is, the fat cone is obtained from $\cone_{t_0}$ with the following changes:
\begin{itemize}
	\item $R(\log s)=\diag(s^{\rho_1},\ldots,s^{\rho_k})$ is replaced with $\diag(s^{q_1},\ldots,s^{q_k})$, where the $q_i$ are rational approximations of the $\rho_i$, and maintain the additive relationships.
	\item $Q(\log s)$ is replaced with $Q(r)$ where $\delta\le r\le s^\epsilon$.
        \item The variable $s$ starts from $s_0$ (as opposed to $e^{t_0}$). 
\end{itemize}

We first show that the fat cone is semi-algebraic (the proof is in \cref{ap:proofs of sec4}), then proceed to prove that if there is a cone that avoids the error set, then there is a fat one that avoids it as well. 
\begin{restatable}{lemma}{fatconesemi}
	\label{lem: fat cone is semi-algebraic and we can compute a representation}
	$\cF_{s_0,\epsilon,\delta,\lowervec,\uppervec}$ is definable in \theoreals, and we can compute a representation of it.
\end{restatable}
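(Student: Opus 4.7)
The strategy is to realize $\cF_{s_0,\epsilon,\delta,\lowervec,\uppervec}$ as the projection of a semi-algebraic set, using quantifier elimination in $\theoreals$ (Tarski's theorem). Writing the defining existential quantifiers over $s$, $\vect q$, $\vect\tau$, and $r$ explicitly, I would check that each individual constraint and the matrix product is $\theoreals$-definable. The constraint $\vect\tau\in\torus_\omega$ is semi-algebraic by \cref{prop:dense subgroup}(1) together with \cref{rmk:complex definable}. The entries of $Q(r)$ are polynomials in $r$ with rational coefficients. The constraint $\delta\le r\le s^\epsilon$, with rational $\epsilon=p/p'$, is equivalent to $r>0 \wedge r^{p'}\le s^p$. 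Since $P$ has algebraic entries and $\vect x_0$ is rational, the equation $\vect v=P\,\diag(z_1,\ldots,z_k)\,\diag(\vect\tau)\,Q(r)\,P^{-1}\vect x_0$ is a system of polynomial equalities with algebraic coefficients in the real and imaginary parts of the auxiliary variables; since algebraic numbers are individually $\theoreals$-definable, this cuts out a semi-algebraic condition on $\vect v$.

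The main technical step is to show semi-algebraicity of the set of tuples $(s,z_1,\ldots,z_k)\in\RR^{k+1}$ with $s\ge s_0$ for which there exists $\vect q\in\mathrm{Box}(\lowervec,\uppervec)$ with $z_i=s^{q_i}$ for all $i$. Assume without loss of generality $s_0\ge 1$ (the fat cone is applied with $s_0=e^{t_0}$, $t_0\ge 0$). By the argument used in the proof of \cref{prop:dense subgroup}, the lattice $\set{\vect n\in\ZZ^k \st \sum_i n_i\rho_i=0}$ has a computable finite basis $\vect n_1,\ldots,\vect n_m$. Thus $\vect q\in\bbS$ iff $\sum_i n_{j,i}q_i=0$ for $j=1,\ldots,m$; substituting $z_i=s^{q_i}$ (which forces $z_i>0$), these linear equations become the monomial identities $\prod_i z_i^{n_{j,i}}=1$, which, after clearing negative exponents, are polynomial equations in the $z_i$. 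The interval bounds $\ell_i\le q_i\le u_i$ with rational $\ell_i,u_i$ translate under $z_i=s^{q_i}$ (and $s\ge 1$) into $s^{\ell_i}\le z_i\le s^{u_i}$; since $s^{\ell_i}$ for rational $\ell_i=a/b$ is itself semi-algebraic in $s$ (introduce an auxiliary $y>0$ with $y^b=s^a$), these inequalities are semi-algebraic. Hence $\vect q$ can be eliminated and the relation between $s$ and $\vect z$ is captured by a finite conjunction of semi-algebraic constraints.

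Combining the pieces under existential quantifiers and applying Tarski's algorithm yields an effective $\theoreals$-description of $\cF_{s_0,\epsilon,\delta,\lowervec,\uppervec}$: the lattice basis is computable, algebraic numbers have effective representations, and quantifier elimination is effective. The main obstacle is precisely the step just described---$s^{q}$ with real exponent $q$ is not semi-algebraic as a function of $(s,q)$---and the key trick is that the variable $\vect q$ never needs to appear in the final description; the subspace condition $\vect q\in\bbS$ becomes a conjunction of monomial equalities on $\vect z=(s^{q_1},\ldots,s^{q_k})$, while the rational box bounds become rational-power inequalities on $\vect z$, both of which are semi-algebraic.
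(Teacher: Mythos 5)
Your proposal is correct and takes essentially the same route as the paper: the key step in both is to eliminate the real exponents $\vect q$ by introducing variables $w_i = s^{q_i}$, turning the linear relations defining $\bbS$ (via a computable basis of the lattice of integer relations among the $\rho_i$) into monomial equalities $\prod_i w_i^{n_{j,i}}=1$ and the box bounds into $s^{\ell_i}\le w_i\le s^{u_i}$. Your treatment is if anything slightly more careful, since you explicitly note that $\epsilon$ and the $\ell_i,u_i$ must be rational for the power inequalities to be semi-algebraic, a point the paper leaves implicit.
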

\begin{lemma}
	\label{lem: fat cone is invariant for sa target}
	Let $Y\subseteq \RR^d$ be a a semi-algebraic error set such that $\cone_{t_0}\cap Y=\emptyset$ for some $t_0\in \RR$, then there exists $\delta,\epsilon,s_0,\boldsymbol{\ell},\boldsymbol{u}$ as above such that
	\begin{enumerate}
		\item $\cF_{s_0,\epsilon,\delta,\boldsymbol \ell,\boldsymbol u}\cap Y=\emptyset$, and
		\item for every $t\ge 0$ it holds that $e^{At}\cdot \cF_{s_0,\epsilon,\delta,\boldsymbol \ell,\boldsymbol u}\subseteq \cF_{s_0,\epsilon,\delta,\boldsymbol \ell,\boldsymbol u}$.
	\end{enumerate}
\end{lemma}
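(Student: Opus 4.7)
The overall plan is to choose the parameters $\vect\ell,\vect u,\delta,\epsilon,s_0$ coherently so that both conditions hold; condition~(2) will impose only a mild lower bound on $s_0$ in terms of $\epsilon$, while condition~(1) is what drives all the other choices. I would handle (2) first since its verification also explains why $\mathrm{Box}(\vect\ell,\vect u)$ is defined via $\bbS$ (respecting the additive relations). Given $v=P\,\diag(s^{q_1},\ldots,s^{q_k})\,\diag(\vect\tau)\,Q(r)\,P^{-1}\vect x_0\in\cF$ and $t\ge 0$, using $e^{At}=PE(t)R(t)Q(t)P^{-1}$, that $\diag(s^{q_i})$ and $\diag(\vect\tau)$ are scalar within each Jordan block, and the identity $Q(t)Q(r)=Q(t+r)$, one rewrites
\begin{align*}
  e^{At}v \;=\; P\,\diag(e^{\rho_i t}s^{q_i})\,R(t)\diag(\vect\tau)\,Q(t+r)\,P^{-1}\vect x_0.
\end{align*}
By the computation of \cref{lem: cone is invariant}, $R(t)\diag(\vect\tau)=\diag(\vect\tau')$ for some $\vect\tau'\in\torus_\omega$. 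Set
\begin{align*}
  s' := s\cdot e^{t}, \qquad q_i' := \tfrac{t\,\rho_i+(\log s)\,q_i}{t+\log s}, \qquad r' := r+t,
\end{align*}
so that $(s')^{q_i'}=e^{q_i'(\log s+t)}=s^{q_i}e^{\rho_i t}$ and $e^{At}v$ matches the generic fat-cone point with parameters $(s',\vect q',\vect\tau',r')$. The tuple $\vect q'$ is a convex combination of $\vect\rho$ and $\vect q$, hence coordinate-wise in $[\vect\ell,\vect u]$; and since both $\vect\rho$ and $\vect q$ lie in the linear subspace $\bbS$, so does $\vect q'$. Clearly $s'\ge s_0$ and $r'\ge\delta$. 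The remaining constraint $r'\le(s')^{\epsilon}$ reduces to $t+r\le s^{\epsilon}e^{\epsilon t}$, and combining $r\le s^{\epsilon}$ with $e^{\epsilon t}\ge 1+\epsilon t$ shows this follows once $s_0^{\epsilon}\epsilon\ge 1$.

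For (1), I would reuse the machinery from the proof of \cref{thm: semi algebraic target decidable}. Let $U$ be the semi-algebraic matrix set defined there, and let $R_l$ be the finitely many polynomials defining it, so that $\cone_{t_0}\cap Y=\emptyset$ means $\Lambda(s)\in U$ for all $s\ge e^{t_0}$, while $\cF\cap Y=\emptyset$ means $\tilde\Lambda(s,r,\vect q) := \diag(s^{q_i})Q(r)\in U$ for all $s\ge s_0$, $r\in[\delta,s^{\epsilon}]$, and $\vect q\in\mathrm{Box}(\vect\ell,\vect u)$. Each $R_l(\tilde\Lambda(s,r,\vect q))$ expands as $\sum_i s^{\langle n'_i,\vect q\rangle}h_i(r)$ with rational-coefficient polynomials $h_i$. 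The defining property $\vect q\in\bbS$ ensures that any equality $\langle n'_i,\vect\rho\rangle=\langle n'_j,\vect\rho\rangle$ is preserved as $\langle n'_i,\vect q\rangle=\langle n'_j,\vect q\rangle$, and for the finitely many pairs where the original exponents are strictly ordered, tightening the box around $\vect\rho$ preserves the strict ordering. Thus grouping terms by dominant $s$-exponent yields the same leading-term structure as in \cref{thm: semi algebraic target decidable}, with each group's combined coefficient being the same polynomial as before, only evaluated at $r$ instead of $\log s$. Taking $\delta$ larger than all positive real roots of these finitely many coefficient polynomials stabilises their signs on $[\delta,\infty)$; taking $\epsilon$ smaller than the minimum positive gap between consecutive dominant exponents ensures that $s^{\alpha}$ dominates $s^{\alpha'}(s^{\epsilon})^{\deg}$ for any smaller exponent $\alpha'$; and taking $s_0$ large enough (and $\ge e^{t_0}$, and $\ge\epsilon^{-1/\epsilon}$ to cover (2)) gives $R_l(\tilde\Lambda)\sim_l 0$ uniformly throughout $\cF$.

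The main obstacle is (1): three simultaneous perturbations—exponents $\vect\rho\to\vect q$, polylog $\log s\to r\in[\delta,s^{\epsilon}]$, and the free starting point $s_0$—must all be controlled by a single coherent choice of parameters, uniformly across all the polynomials $R_l$. The discipline is to fix the parameters in the order $\vect\ell,\vect u$, then $\delta$, then $\epsilon$, then $s_0$, each choice depending only on previously fixed quantities. The analysis is essentially a robust version of the asymptotic sign argument already carried out, but the bookkeeping of dominant groupings versus sub-leading polylog behaviour is where care is needed.
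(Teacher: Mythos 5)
Your proposal is correct and follows essentially the same route as the paper: part (2) via the reparametrisation $s'=se^{t}$, the convex combination $q_i'=(t\rho_i+(\log s)q_i)/(t+\log s)$ lying in $\mathrm{Box}(\vect\ell,\vect u)\cap\bbS$, and the bound $r+t\le (s')^{\epsilon}$ (your use of $e^{\epsilon t}\ge 1+\epsilon t$ is in fact slightly cleaner than the paper's case split), and part (1) via the robust asymptotic-sign analysis of the polynomials $R_l$ from \cref{thm: semi algebraic target decidable}, with the box shrunk to preserve strict exponent orderings and $\bbS$ preserving the equalities. The one quantitative slip is that $\epsilon$ must be taken smaller than the minimum exponent gap \emph{divided by the maximal degree} of the coefficient polynomials (the paper takes $\epsilon=\mu/(3B)$), not merely smaller than the gap, as your own displayed comparison of $s^{\alpha}$ against $s^{\alpha'}(s^{\epsilon})^{\deg}$ indicates.
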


The result is constructive, so when $t_0$ is given, the constants $s_0, \epsilon, \delta, \vect \ell, \vect u$ can be computed. It follows that a corollary of this lemma, and \cref{lem: fat cone is semi-algebraic and we can compute a representation}, is a stronger statement than that of \cref{thm:unconditional decid}, namely one where \theoexp is replaced by \theoreals. We state it here before moving on with the proof of \cref{lem: fat cone is invariant for sa target}.

\begin{theorem}
  \label{thm:unconditional decid 2}
  For a semi-algebraic set $Y$, it is decidable whether there exists a o-minimal invariant, disjoint from $Y$, that contains the orbit $\orb(u)$ after some time $u\ge 0$. Moreover in the positive instances an invariant that is \theoreals-definable can be constructed.
\end{theorem}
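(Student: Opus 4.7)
The plan is to note that \cref{thm:unconditional decid 2} strengthens the already-established \cref{thm:unconditional decid} in only one respect: it asserts that in positive instances the invariant witness can be taken to be semi-algebraic rather than merely \theoexp-definable. Thus the decidability part comes for free from \cref{thm:unconditional decid} (which in turn reduces to the decidable test of \cref{thm: semi algebraic target decidable} for the existence of some $t_0$ with $\cone_{t_0}\cap Y=\emptyset$), and the real work is to produce an effective \theoreals-definable invariant when such a $t_0$ exists. The fat cone machinery introduced in this section is designed exactly for this: \cref{lem: fat cone is invariant for sa target} guarantees the existence of parameters producing a fat cone that both avoids $Y$ and is closed under the flow, while \cref{lem: fat cone is semi-algebraic and we can compute a representation} ensures semi-algebraicity and computability.

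Concretely, I would first run the decision procedure of \cref{thm: semi algebraic target decidable}; if it answers \emph{yes}, it moreover returns a specific $t_0\ge 0$ with $\cone_{t_0}\cap Y=\emptyset$. I would then feed this $t_0$ into the constructive content of \cref{lem: fat cone is invariant for sa target} to obtain parameters $s_0,\epsilon,\delta,\lowervec,\uppervec$ such that the fat cone $\cF_{s_0,\epsilon,\delta,\lowervec,\uppervec}$ avoids $Y$ and satisfies $e^{At}\cF_{s_0,\epsilon,\delta,\lowervec,\uppervec}\subseteq \cF_{s_0,\epsilon,\delta,\lowervec,\uppervec}$ for every $t\ge 0$. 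Invoking \cref{lem: fat cone is semi-algebraic and we can compute a representation}, I obtain an explicit quantifier-free representation of $\cF_{s_0,\epsilon,\delta,\lowervec,\uppervec}$ in \theoreals.

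It remains to verify that this fat cone actually meets the orbit, so as to qualify as an invariant containing $\orb(u)$ for some $u\ge 0$ in the sense of the theorem statement. This is immediate from the construction: setting $\vect q=(\rho_1,\ldots,\rho_k)$ (which lies in $\mathrm{Box}(\lowervec,\uppervec)$ since $\vect\rho\in\bbS$ by definition and the $\ell_i,u_i$ bracket each $\rho_i$), $\vect\tau$ equal to the diagonal of $R(t)$ (which lies in $\torus_\omega$), $s=e^t$, and $r=t=\log s$, one recovers the orbit point $e^{At}\vect x_0$; this lies in the fat cone for all $t$ large enough that $s=e^t\ge s_0$ and $\delta\le t\le e^{\epsilon t}$, i.e., for all $t\ge u$ with $u$ effectively computable from the parameters. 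Hence $\cF_{s_0,\epsilon,\delta,\lowervec,\uppervec}$ is a semi-algebraic invariant containing $\orb(u)$ and disjoint from $Y$, and the construction is effective.

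The main obstacle is not in this theorem per se but is absorbed into \cref{lem: fat cone is invariant for sa target}, which is the genuinely substantive ingredient: namely, showing that the relaxations (replacing irrational exponents $\rho_i$ by rational $q_i$ inside the subspace $\bbS$ of admissible additive relations, and replacing $\log s$ by a free variable $r\in[\delta,s^\epsilon]$) can be made tight enough that avoidance of $Y$ is preserved, yet loose enough that rational boxes and polynomial bounds suffice. Once that lemma is in hand the present theorem is essentially a bookkeeping combination of earlier results, so my proof would limit itself to the three-step assembly above.
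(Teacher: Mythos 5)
Your proof is correct and follows essentially the same route as the paper, which presents this theorem as a direct corollary of \cref{lem: fat cone is invariant for sa target} and \cref{lem: fat cone is semi-algebraic and we can compute a representation} together with the decidability already established in \cref{thm:unconditional decid}. Your additional explicit check that the fat cone actually contains a tail of the orbit (via $\vect q=\vect\rho\in\mathrm{Box}(\lowervec,\uppervec)$, $s=e^t$, $r=\log s$) is left implicit in the paper but is a worthwhile verification.
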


The proof of~\cref{lem: fat cone is invariant for sa target} is given by the two corresponding steps. The second step, proving the invariance of the fat cone, is \cref{lem: fat cone eventual invariant for all epsilon} in \cref{ap:proofs of sec4}. We turn our attention to the first step. 
\begin{lemma}
	\label{lem: fat cone eventually disjoint}
	Let $Y\subseteq \RR^d$ be a semi-algebraic error set, and let $t_0\in \RR$ be such that $\cone_{t_0}\cap Y=\emptyset$, then there exists $\delta,\epsilon,s_0,\vect \ell,\vect u$ as above such that $\cF_{s_0,\epsilon,\delta,\vect \ell,\vect u}\cap Y=\emptyset$.
\end{lemma}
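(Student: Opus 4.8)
The plan is to build the fat cone as a "limit" of the original cone $\cone_{t_0}$ and argue that disjointness from the semi-algebraic set $Y$ is a closed condition that is preserved under sufficiently good approximation. Concretely, I would work with the reparametrised description $\cone_{t_0}=\{P\,\diag(s^{\rho_1},\dots,s^{\rho_k})\,\diag(\vect\tau)\,Q(\log s)\,P^{-1}\vect x_0 : \vect\tau\in\torus_\omega,\ s\ge e^{t_0}\}$ from the start of \cref{sec: fat}. The fat cone $\cF_{s_0,\epsilon,\delta,\vect\ell,\vect u}$ differs in three independent ways: the exponents $\rho_i$ are replaced by rationals $q_i\in\mathrm{Box}(\vect\ell,\vect u)$, the argument $\log s$ of $Q$ is replaced by a free variable $r\in[\delta,s^\epsilon]$, and the start point moves from $e^{t_0}$ to $s_0$. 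I would treat these one at a time, each time choosing the relaxation so tight that no new point is introduced that lies in $Y$.

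The key observation is that, exactly as in the proof of \cref{thm: semi algebraic target decidable}, disjointness of $\cone_{t_0}$ from $Y$ is equivalent to $\Lambda(s)=\diag(s^{\rho_1},\dots,s^{\rho_k})Q(\log s)$ lying, for every $s\ge e^{t_0}$, in the semi-algebraic set $U=\{\cV : \forall\vect\tau\in\torus_\omega,\ P\,\cV\,\diag(\vect\tau)\,P^{-1}\vect x_0\in\RR^d\setminus Y\}$. Since $U$ is semi-algebraic and $Y$ is closed (we may assume $Y$ closed, or pass to its closure — if $\cone_{t_0}$ avoids $Y$ it avoids a closed semi-algebraic neighbourhood by a standard compactness-of-the-cone-at-each-$s$ argument together with the asymptotic analysis of \cref{thm: semi algebraic target decidable}), the condition "$\cV\in U$" is stable under small perturbations of $\cV$, uniformly in a suitable sense. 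I would first fix a large $s_0\ge e^{t_0}$ beyond the last sign change of all the relevant polynomials $R_l(\Lambda(s))$, so that on $[e^{t_0},s_0]$ compactness gives a uniform margin and on $[s_0,\infty)$ the asymptotic sign analysis of \cref{thm: semi algebraic target decidable} gives that each $R_l$ is bounded away from $0$ (or identically of the right sign). Then, choosing $\vect\ell,\vect u$ close enough to $\vect\rho$ (note $\vect\rho\in\bbS$ trivially, and any $\vect q\in\bbS$ near $\vect\rho$ yields $\diag(s^{q_1},\dots,s^{q_k})$ uniformly close to $\diag(s^{\rho_1},\dots,s^{\rho_k})$ after a further harmless rescaling of $s_0$), and choosing $\epsilon$ small and $\delta$ appropriately so that $[\delta,s^\epsilon]$ is a narrow band around $\log s$ for all $s\ge s_0$ while $Q(r)$ stays within the margin, I get that every point of $\cF_{s_0,\epsilon,\delta,\vect\ell,\vect u}$ is a perturbation, within the uniform margin, of a point of $\cone_{t_0}$, hence still outside $Y$.

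The main obstacle — and the reason the naive argument fails, as the paper already flags — is that the perturbations here are \emph{not} uniformly small: replacing $\rho_i$ by $q_i$ changes $s^{\rho_i}$ into $s^{q_i}$, and the discrepancy $s^{q_i-\rho_i}$ is unbounded as $s\to\infty$; likewise $s^\epsilon$ is an unbounded overapproximation of $\log s$. So the real work is to show that the relevant margin is itself "scale-invariant" in the right way: after factoring out the dominant monomial $s^{n'_{i,1}\rho_1+\cdots}$ in each $R_l(\Lambda(s))$, the sign is governed by the leading coefficient of a polynomial in $\log s$, and one checks that these leading behaviours are \emph{robust} under (a) rational perturbation of the exponents that preserves the additive lattice $S$ of relations (so the ordering of the monomials $s^{n'_{i,1}\rho_1+\cdots}$ by magnitude is unchanged, this is exactly why $\bbS$, not arbitrary rationals, must be used) and (b) replacing $\log s$ by any $r\in[\delta,s^\epsilon]$ with $\epsilon$ small (so the degree-leading term in $r$ still dominates). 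Thus the plan is: reduce to the per-$R_l$ asymptotic analysis, show each leading term's sign is determined by combinatorial data (the lattice $S$ and the leading monomials of the $f_i$) that is invariant under the chosen relaxations, then pick $s_0,\epsilon,\delta,\vect\ell,\vect u$ to realise this, using the compact initial segment $[e^{t_0},s_0]$ separately. All constants produced are effective since all quantities (algebraic exponent comparisons, leading coefficients, sign-change thresholds) are effectively computable, matching the "moreover" in \cref{lem: fat cone is invariant for sa target}.
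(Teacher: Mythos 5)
Your plan follows essentially the same route as the paper's proof: reduce disjointness to the sign conditions $R_l(\Lambda(s))\sim_l 0$ and show that the leading-term analysis of the monomials $s^{\vect n\cdot\vect\rho}f_i(\log s)$ is robust when the exponents are drawn from $\mathrm{Box}(\vect\ell,\vect u)\subseteq\bbS$ sufficiently close to $\vect\rho$ and when $\log s$ is replaced by $r\in[\delta,s^\epsilon]$ with $\epsilon$ small relative to the minimal gap between distinct exponents, all effectively. The only deviations are cosmetic: the digression about the compact segment $[e^{t_0},s_0]$ and passing to $\overline{Y}$ is unnecessary (and the closure step would not be sound in general), since the fat cone starts at a freely chosen large $s_0$ and the argument is purely asymptotic; note also that it is the \emph{closeness} of $\vect q$ to $\vect\rho$, not merely membership in $\bbS$, that preserves the strict ordering of the exponents, while $\bbS$-membership only preserves equalities.
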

\begin{proof}
  We use the same analysis and definitions of $U$, $R_l$, $\sim_l$, $\Lambda(s)$ as in the proof of \cref{thm: semi algebraic target decidable} and focus on a single polynomial $R_l$. Recall that we had
  \begin{align}
    \label{eq: Rl semialgebraic inv}
    R_l(\Lambda(s))=\sum_{i} s^{n_{i,1}\rho_1+\ldots n_{i,k}\rho_k} f_i(\log s)
  \end{align}
  where each $f_i$ is a polynomial with rational coefficients.
	
  Denote $\vect \rho=(\rho_1,\ldots,\rho_k)$. We show, first, how to replace the exponents vector $\vect \rho$ by any exponents vector in $ \boxvec(\lowervec,\uppervec)$ for appropriate $\lowervec,\uppervec$, and second, how to replace $\log s$ by $r$ where $\delta\le r\le s^{\epsilon}$ for some appropriate $\delta$ and $\epsilon$, while maintaining the inequality or equality prescribed by $\sim_l$.
	
  Denote by $N$ the set of vectors $\vect{n_i}=(n_{i,1},\ldots,n_{i,{k}})$ of exponents in~\eqref{eq: Rl semialgebraic inv}. Let $\mu>0$, such that for every $\vect{n},\vect{n'}\in N$, if $\vect{\rho}\cdot (\vect{n}-\vect{n'})\neq 0$ then $|\vect{\rho}\cdot (\vect{n}-\vect{n'})|>\mu$. That is, $\mu$ is a lower bound on the minimal difference between distinct exponents in~\eqref{eq: Rl semialgebraic inv}. Observe that we can compute a description of $\mu$, as the exponents are algebraic numbers.

  Let $M=\max_{\vect{n,n'}\in N}\norm{\vect{n}-\vect{n'}}$ (where $\norm{\cdot}$ is the Euclidean norm in $\RR^k$).
  \begin{claim}
    \label{prop: approximate exponent}
    Let $\vect{c}\in \RR^k$ be such that $\norm{\vect{\rho}-\vect{c}}\le \frac{\mu}{2M}$, then, for all $\vect{n},\vect{n'}\in N$, if $\vect{\rho}\cdot (\vect{n}-\vect{n'})> 0$ then $\vect{c}\cdot (\vect{n}-\vect{n'})>\frac{\mu}{2}$.
  \end{claim}
  \begin{proof}[Proof of \cref{prop: approximate exponent}]
    Suppose that $\vect{\rho}\cdot (\vect{n}-\vect{n'})> 0$, then by the above we have $\vect{\rho}\cdot (\vect{n}-\vect{n'})> \mu$, and hence
%    \begin{align*}
%      \vect{c}\cdot (\vect{n}-\vect{n'})&= \vect{\rho}\cdot (\vect{n}-\vect{n'})+(\vect{c}-\vect{\rho})\cdot (\vect{n}-\vect{n'})\\
%      &\ge  \mu-\norm{\vect{c}-\vect{\rho}}\cdot\norm{\vect{n}-\vect{n'}}\\
%      &\ge  \mu-\frac{\mu}{2M}M=\frac{\mu}{2}.
%    \end{align*}
\[
      \vect{c}\cdot (\vect{n}-\vect{n'})= \vect{\rho}\cdot (\vect{n}-\vect{n'})+(\vect{c}-\vect{\rho})\cdot (\vect{n}-\vect{n'})
     \ge  \mu-\norm{\vect{c}-\vect{\rho}}\cdot\norm{\vect{n}-\vect{n'}}
      \ge  \mu-\frac{\mu}{2M}M=\frac{\mu}{2}.
\]
  \end{proof}
  \newcommand{\vectau}{\boldsymbol\tau} We can now choose $\lowervec$ and $\uppervec$ such that $u_i-\ell_i\le \frac{\mu}{2M\sqrt{k}}$ and for all $\vect c\in \boxvec(\lowervec,\uppervec)$ we have
  \begin{align*}
	\norm{\vect{\rho}-\vect{c}}\le \sqrt{\sum_{i=1}^k (u_i-\ell_i)^2}\le \sqrt{\frac{\mu^2}{(2M)^2}}=\frac{\mu}{2M}.
  \end{align*}
  It follows from \cref{prop: approximate exponent} and from the definition of $\boxvec(\lowervec,\uppervec)$ that, intuitively, every $\vect{c}\in \boxvec(\lowervec,\uppervec)$ maintains the order of magnitude of the monomials $s^{n_{i,1} \cdot \rho_1+\ldots + n_{i,k} \cdot \rho_k}$ in $R_l(\Lambda(s))$.
	
	More precisely, let $\Lambda'(s)=\diag(s^{c_1},\ldots,s^{c_k})Q(\log s)$ for some $\vect{c}\in \boxvec(\lowervec,\uppervec)$, then the exponent of the ratio of every two monomials in $R_l(\Lambda'(s))$ has the same (constant) sign as the corresponding exponent in $R_l(\Lambda(s))$. Moreover, the exponents of distinct monomials in $R_l(\Lambda(s))$ differ by at least $\frac{\mu}{2}$ in $R_l(\Lambda'(s))$.
	
	We now turn our attention to the $\log s$ factor. First, let $s_0$ be large enough that $f_i(\log s)$ has constant sign for every $s\ge s_0$.  We can now let $\delta$ be large enough such that for every $r\ge \delta$, the sign of $f_i(\log s)$ coincides with the sign of $f_i(r)$ for every $s\ge s_0$.  It remains to give an upper bound on $r$ of the form $s^\epsilon$ such that plugging $f_i(r)$ instead of $f_i(\log s)$ does not change the ordering of the terms (by their magnitude) in $R_l(\Lambda'(s))$.
	
	Let $B$ be the maximum degree of all polynomials $f_i$ in~\eqref{eq: Rl semialgebraic inv}, and define $\epsilon=\frac{\mu}{3B}$ (in fact, any $\epsilon<\frac{\mu}{2B}$ would suffice), then we have that, for $s\ge s_0$, $f_i(r)$ has the same sign as $f_i(\log s)$ for every $\delta\le r\le s^{\epsilon}$ (by our choice of $\delta$), and guarantees that plugging $s^{\epsilon}$ instead of $s$ does not change the ordering of the terms (by their magnitude) in $R_l$.  Since the exponents of the monomials in $R_l(\Lambda'(s))$ differ by at least $\frac{\mu}{2}$, it follows that their order is maintained when replacing $\log s$ by $\delta\le r\le s^\epsilon$.

	Let $\Lambda''(s)=\diag(s^{c_1},\ldots,s^{c_k})Q(r)$ for some $\vect{c}\in \boxvec(\lowervec,\uppervec)$ and $\delta\le r\le s^\epsilon$, then by our choice of $\epsilon$, the dominant term in $R_l(\Lambda''(s))$ is the same as that in $R_l(\Lambda(s))$. Therefore, for large enough $s$, the signs of $R_l(\Lambda''(s))$ and $R_l(\Lambda(s))$ are the same.
	
	Note that since $\cone_{t_0}\cap Y=\emptyset$, then w.l.o.g. $R_l(\Lambda(s))\sim_l 0$ for every $l$. Thus, by repeating the above argument for each $R_l$, we can compute $s_0\in \RR$, $\epsilon>0$, $\delta\in \RR$, and $\lowervec,\uppervec\in \QQ^k$ such that 
	$\cF_{s_0,\epsilon,\delta,\lowervec,\uppervec}\cap Y=\emptyset$, and we are done.
\end{proof}

\section{A Reduction from Zeros of an Exponential Polynomial}
\label{sec:hardness}

In \cref{thm:unconditional decid 2}, we showed unconditional decidability for the question of whether there exists an invariant containing the orbit $\orb(u)$, for some $u\ge 0$. Even though we construct such an invariant, it cannot be used as a certificate proving that the orbit never enters the error set; however it is a certificate that the orbit of the system does not enter $Y$ {\em after} time $u$.

In this section we give indications that deciding whether there exists
an invariant that takes into account the orbit $\le u$ is
difficult. More precisely, we will reduce a problem about zeros of a
certain exponential polynomial to the question of whether there exists
a semi-algebraic invariant disjoint from $Y$ containing $\orb(0)$.

\begin{remark}
\label{rmk: discrete tail}	
In the setting of discrete linear dynamical systems, the existence of a semi-algebraic invariant from time $t_0$ immediately implies the existence of one from time $0$. This is because the system goes through finitely many points from $0$ to $t_0$, which can be added one by one to the semi-algebraic set. In this respect CDSs are more complicated to analyse. 
\end{remark}

The problem that we reduce from, can be stated as follows. We are
given as input real algebraic numbers
$a_1,\ldots,a_n,\rho_1,\ldots,\rho_n$, and $t_0\in\QQ$, and asked to
decide whether the exponential function:
\begin{align*}
  f(t)\defequals a_1e^{\rho_1t}+\cdots +a_ne^{\rho_nt},
\end{align*}
has any zeros in the interval $[0,t_0]$.  This is a special case
of the so-called Continuous Skolem Problem~\cite{Bell10,chonev2016skolem}.

While there has been progress on characterising the asymptotic
distribution of complex zeros of such functions, less is known about
the real zeros, and we lack any effective characterisation, see
\cite{Bell10,chonev2016skolem} and the references therein. The
difficulty of knowing whether $f$ has a zero in the specified region
is because (a) all the zeros have to be transcendental (a consequence
of Hermite-Lindemann Theorem) and (b) there can be tangential zeros,
that is $f$ has a zero but it never changes its sign. See the
discussion in \cite[Section 6]{Bell10}. Finding the zeros of such a
polynomial is a special case of the {\em bounded} continuous
Skolem problem. We note that when $\rho_i$ are all rational the
problem is equivalent to a sentence of \theoreals (and hence
decidable) by replacing $t=\log s$.

The rest of this section is devoted to the proof of the following theorem.

\begin{theorem}
  \label{thm:main hardness}
  For every exponential polynomial $f$ we can construct a CDS $\tup{A,\vect x_0}$ and semi-algebraic set $Y$ such that the following two statements are equivalent:
  \begin{itemize}
  \item there exists a semi-algebraic invariant disjoint from $Y$ that contains $\orb(0)$,
  \item $f$ does not have a zero in $[0,t_0]$. 
  \end{itemize}
\end{theorem}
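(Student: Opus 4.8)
The plan is to engineer a planar ($d=2$, but with enough extra coordinates to carry the bookkeeping) CDS whose orbit traces out a curve that touches the error set $Y$ precisely when $f$ has a zero in $[0,t_0]$, and to do so in a way that exploits Remark~\ref{rmk: discrete tail}: the distinction between "invariant containing $\orb(0)$" and "invariant containing $\orb(u)$" is exactly the phenomenon that the reduction must leverage. Concretely, I would build a matrix $A$ whose block structure produces, among the coordinates of $e^{At}\vect x_0$, the function $f(t)=a_1e^{\rho_1 t}+\cdots+a_ne^{\rho_n t}$ in one designated coordinate; this is straightforward, since a diagonal matrix $\diag(\rho_1,\ldots,\rho_n)$ with initial vector $(a_1,\ldots,a_n)$ yields $e^{\rho_i t}a_i$ in coordinate $i$, and a further coordinate governed by an appropriate (possibly larger) block can be made to equal their sum. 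The point is that $f$ is then ``visible'' in the orbit.

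Next I would choose $Y$ so that avoiding it forces the invariant to separate the portion of the orbit with $t\le t_0$ from the portion with $t>t_0$, but only when $f$ has no zero in $[0,t_0]$. The key design idea: arrange that the value $f(t_0)$ (or the sign pattern of $f$ near $t_0$) be algebraic and in particular that the orbit point at $t_0$ be an algebraic point, while the points along the orbit with $f(t)=0$ are, by Hermite--Lindemann, transcendental. Then set $Y$ to be a semi-algebraic ``slab'' around the locus $\{f=0\}$-witnessing hyperplane, carefully placed so that (i) if $f$ has no zero in $[0,t_0]$, the finite initial segment $\{e^{At}\vect x_0 : 0\le t\le t_0\}$ stays on one side, and one can take the semi-algebraic invariant to be (a fattening of) the cone $\cone_{t_1}$ for suitable $t_1>t_0$ union that finite initial segment — invoking Theorem~\ref{thm: invariant characterization} and Theorem~\ref{thm:unconditional decid 2}-style constructions to know such a semi-algebraic invariant exists; but (ii) if $f$ does have a zero $t^*\in[0,t_0]$, then the orbit point $e^{At^*}\vect x_0$ lies inside $Y$, so no invariant containing $\orb(0)$ can avoid $Y$ at all. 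The forward direction (no zero $\Rightarrow$ invariant exists) is where I would combine the cone machinery of Sections~\ref{sec: orbit cones}--\ref{sec: fat} with the fact that a \emph{bounded} initial segment of an orbit of a CDS, while containing transcendental points, is a compact set disjoint from a well-chosen semi-algebraic $Y$ and can be ``covered'' semi-algebraically — here one must be careful, since unlike the discrete case one cannot simply add finitely many points; instead one adds a semi-algebraic tube around the compact arc, chosen thin enough to miss $Y$ (possible because of compactness and the positive distance from the arc to $Y$, which holds exactly when there is no zero).

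The reverse direction (invariant exists $\Rightarrow$ no zero) is the cleaner half: an invariant containing $\orb(0)$ must contain every point $e^{At}\vect x_0$ for $t\ge 0$, in particular for $t\in[0,t_0]$; if $f$ had a zero $t^*$ there, that point lies in $Y$ by construction, contradicting disjointness. The \textbf{main obstacle} I anticipate is the forward direction, and specifically making the semi-algebraic tube around the initial arc simultaneously (a) disjoint from $Y$ and (b) compatible with being extended to a genuine inductive invariant — i.e.\ closed under $e^{At}$ — rather than merely a superset of the arc. This is where tangential zeros of $f$ (where $f$ does not change sign) make the geometry delicate, because the arc can graze the boundary of $Y$ without entering it, so the separating distance argument must be done with care; I expect one resolves this by taking $Y$ to be \emph{open} (or by perturbing its boundary to be strictly separated from the algebraic endpoint $e^{At_0}\vect x_0$ while still capturing all transcendental zero-points), and by choosing the invariant to be the union of a fattened cone $\cF_{s_0,\epsilon,\delta,\lowervec,\uppervec}$ from a time $t_1>t_0$ (whose disjointness from $Y$ follows from Lemma~\ref{lem: fat cone eventually disjoint}) together with a semi-algebraic neighbourhood of the compact arc $\{e^{At}\vect x_0:0\le t\le t_1\}$, verifying invariance of the union by hand using the fact that flowing any point of the arc forward eventually lands in the fat cone.
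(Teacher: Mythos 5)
Your reduction skeleton matches the paper's: encode $f$ in the orbit via a companion/Jordan block construction (with an extra nilpotent block producing the coordinate $t$), take $Y$ to be the zero locus of $a_1x_1+\cdots+a_nx_n$ restricted to the time slab $0\le t\le t_0$, note the reverse direction is immediate, and for the forward direction build a semi-algebraic invariant around the compact initial arc. However, there is a genuine gap at exactly the point you flag as the ``main obstacle'': you never construct a semi-algebraic neighbourhood of the arc $\{(e^{\rho t},t):0\le t\le t_0\}$ that is actually \emph{inductive}. A generic thin tube around a compact arc is not closed under the flow --- a point $(y,t)$ with $y\ne e^{\rho t}$ flows to $(e^{\rho\delta}y,\,t+\delta)$, and nothing guarantees this stays in the tube. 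The paper resolves this with a specific construction: bounding curves given by Taylor polynomials $P_n(t)=\sum_{k=0}^n(\rho t)^k/k!$ below and $Q_{n,\mu}(t)=P_n(\mu t)$ above, together with derivative inequalities (${P_n}'(t)\le(P_n(t_1)e^{\rho(t-t_1)})'$ and the reverse for $Q_{n,\mu}$) which are exactly what make $\inv_{n,\mu}\cup\{(y,t):t>t_0\}$ invariant (Lemma~\ref{lem:hardness invariant} and Propositions in \cref{ap:proofs of sec5}). Saying that ``flowing any point of the arc forward eventually lands in the fat cone'' does not address invariance for the off-orbit points of the tube, which is where the whole difficulty lies.

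Two further points. First, your proposal to fatten $Y$ into a slab, or to make it open, or to perturb its boundary, would break the equivalence you need: the orbit could then meet $Y$ even when $f$ has no zero. The paper uses the exact hyperplane $a_1x_1+\cdots+a_nx_n=0$ (intersected with $0\le x_{n+1}\le t_0$); when $f$ has no zero in $[0,t_0]$, compactness of the relevant arc gives a positive distance to this closed set, and tangential zeros cause no trouble for the reduction (they are why the \emph{target} problem is hard, not an obstacle to constructing the reduction). Second, the cone machinery of Sections~\ref{sec: orbit cones}--\ref{sec: fat} is unnecessary here: since $Y$ is contained in the slab $0\le t\le t_0$, the tail of the invariant can simply be taken to be the half-space $\{(y,t):t>t_0\}$, which is trivially invariant and trivially disjoint from $Y$. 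Finally, the change of basis back to a rational matrix (your Jordan-form system has algebraic entries) needs the separate argument of Lemma~\ref{lem:changebasis}, which your proposal does not mention.
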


Fix the function $f$, {\em i.e.} real algebraic numbers $a_1,\ldots, a_n,\rho_1,\ldots,\rho_n$ and $t_0\in\QQ$. Without loss of generality we can assume that $\rho_1,\ldots,\rho_n$ are all nonnegative, since $e^{\rho t}f(t)=0$ if and only if $f(t)=0$ where $\rho$ is larger than all $\rho_1,\ldots,\rho_n$. 

Since every $\rho_i$ is algebraic, there is a minimal polynomial $p_i$, that has $\rho_i$ as a simple root. Let $A$ be the $d\times d$ companion matrix of the polynomial $p_1(x)\cdots p_n(x)x^2$. The numbers $\rho_i$ are eigenvalues $A$ of multiplicity one, and the latter also has zero as an eigenvalue of multiplicity two. In addition to those, the matrix $A$ generally has other (complex) eigenvalues as well. We put $A$ in Jordan normal form, $P^{-1}AP=J$ where $J$ is made of two block diagonals: $\tilde A$ and $B$, where
\begin{align*}
  \tilde A\defequals \begin{pmatrix}
\diag(\rho_1,\ldots,\rho_n) & &\\
& 0 & 1\\
& 0 & 0
\end{pmatrix},
\end{align*}
and $B$ is some $(d-n-2)\times (d-n-2)$ matrix. Define:
\begin{align*}
  \vect {\tilde x}_0\defequals (\underbrace{1,\ldots, 1}_{n+2},0,\ldots,0),
\end{align*}
the vector that has $n+2$ ones and the rest, $d-(n+2)$ zeros, whose purpose is to ignore the contribution of the eigenvalues in matrix $B$ in the system. To simplify notation, since $\vect {\tilde x}_0$ is ignoring the contribution of the matrix $B$, the dynamics of the system $\tup{J,\vect {\tilde x}_0}$ can be assume to be the same as:
\begin{align*}
  e^{\tilde At}(1,\ldots,1)=(e^{\rho_1t},\ldots,e^{\rho_nt},t). 
\end{align*}

Focus on a single eigenvalue, {\em i.e.} on the graph $\set{(e^{\rho t},t)\st t\ge 0}$, as the analysis will easily generalise to the CDS in question. This is itself a CDS, so terminology such as orbits {\em etc.} make sense. The challenge is to find a family of {\em tubes} around  this exponential curve such that (a) all the tubes together with $\set{(y,t)\st t\ge t_0}$ are invariants and (b) the tubes are arbitrarily close approximations of the curve. 

We achieve this by the following families of polynomials:
\begin{itemize}
\item under-approximations are given by the family indexed by $n\in\NN$:
  \begin{align*}
    P_n(t)\defequals\sum_{k=0}^n\frac{(\rho t)^k}{k!}. 
  \end{align*}
\item over-approximations are given by a family indexed by $n\in\NN$ and $\mu>1$:
  \begin{align*}
    Q_{n,\mu}(t)\defequals P_n(\mu t).
  \end{align*}
\end{itemize}
Define:
\begin{align*}
  \inv_{n,\mu}\defequals \left\{(y,t)\st P_n(t)\le y\le Q_{n,\mu}(t)\text{ and }0\le t\le t_0\right\}.
\end{align*}

It is clear from Taylor's theorem and the assumption that $\rho>0$, that by taking $n\to\infty$, and $\mu\to 1^+$ the sets $\inv_{n,\mu}$ are arbitrary precise approximations of the graph $\set{(e^{\rho t},t)\st t\ge 0}$, what remains to show is that they are invariant.

\begin{restatable}{lemma}{lemmahardness}
  \label{lem:hardness invariant}
  For every $\mu>1$ there exists $n_0\in\NN$ such that for all $n\ge n_0$ the set
  \begin{align*}
    \inv_{n,\mu}\cup\set{(y,t)\st t>t_0}
  \end{align*}
  is an invariant containing the whole orbit, {\em i.e.} $\set{(e^{\rho t},t)\st t\ge 0}$.
\end{restatable}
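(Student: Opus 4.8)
The plan is to verify the two defining properties of an invariant for the CDS $\tup{\tilde A,\vect{\tilde x}_0}$ (reduced, as the text explains, to the planar system $t\mapsto(e^{\rho t},t)$): namely that the candidate set contains the starting point $(e^{\rho\cdot 0},0)=(1,0)$, and that it is stable under the flow, i.e.\ for every $\delta\ge 0$, if $(y,t)$ lies in the set then $(e^{\rho\delta}y, t+\delta)$ does too. Containment of the start point is immediate since $P_n(0)=1=Q_{n,\mu}(0)$, so $(1,0)\in\inv_{n,\mu}$. The orbit is contained because Taylor's theorem gives $P_n(t)\le e^{\rho t}\le e^{\mu\rho t}$ for $t\ge 0$ (the series has nonnegative terms and $\mu>1$), hence $e^{\mu\rho t}$ dominates every partial sum $P_n(\mu t)=Q_{n,\mu}(t)$ only after truncation — so I must be slightly careful here: actually $Q_{n,\mu}(t)=P_n(\mu t)\le e^{\mu\rho t}$, which is $\ge e^{\rho t}$, and $P_n(t)\le e^{\rho t}$, so $(e^{\rho t},t)\in\inv_{n,\mu}$ for $0\le t\le t_0$ and it is in the tail set for $t>t_0$. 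That part is routine.

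The substance is the stability condition, and here the two pieces of the union behave differently. If $(y,t)$ is in the tail $\{t>t_0\}$, then $t+\delta>t_0$ as well, so stability there is trivial. If $(y,t)\in\inv_{n,\mu}$ and $t+\delta>t_0$, we land in the tail and are done. The only real work is the case $(y,t)\in\inv_{n,\mu}$ with $0\le t+\delta\le t_0$: I must show $P_n(t+\delta)\le e^{\rho\delta}y\le Q_{n,\mu}(t+\delta)$, knowing $P_n(t)\le y\le Q_{n,\mu}(t)$. For the lower bound it suffices to prove $e^{\rho\delta}P_n(t)\ge P_n(t+\delta)$ for all $t,\delta\ge 0$; for the upper bound it suffices to prove $e^{\rho\delta}Q_{n,\mu}(t)\le Q_{n,\mu}(t+\delta)$, i.e.\ $e^{\rho\delta}P_n(\mu t)\le P_n(\mu(t+\delta))$. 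Note the lower-bound inequality wants $e^{\rho\delta}$ large enough and the upper-bound one wants $e^{\rho\delta}$ small enough relative to the growth of $P_n$ — the first holds for \emph{all} $n$ (since $e^{\rho\delta}P_n(t)$ expands into a series majorizing $P_n(t+\delta)$ term by term, using $e^{\rho\delta}=\sum(\rho\delta)^j/j!$ and the binomial expansion of $(t+\delta)^k$), but the second is exactly where largeness of $n$ enters: $P_n$ grows only polynomially while $e^{\rho\delta}$ is multiplicative, so one needs the extra slack factor $\mu>1$ together with $n\ge n_0$ to absorb the discrepancy between $e^{\rho\delta}P_n(\mu t)$ and $P_n(\mu(t+\delta))$ uniformly on the compact range $0\le t\le t+\delta\le t_0$.

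So the main obstacle is establishing the upper inequality $e^{\rho\delta}\,Q_{n,\mu}(t)\le Q_{n,\mu}(t+\delta)$ for all large $n$. The approach I would take: fix $\mu>1$, and show that the ratio $Q_{n,\mu}(t+\delta)/Q_{n,\mu}(t)$ converges (as $n\to\infty$, uniformly for $t,t+\delta$ in $[0,t_0]$) to $e^{\mu\rho\delta}/? $ — more precisely $P_n(\mu t)\to e^{\mu\rho t}$ uniformly on $[0,t_0]$ with all the relevant quantities bounded away from $0$, so $Q_{n,\mu}(t+\delta)/Q_{n,\mu}(t)\to e^{\mu\rho\delta}$, which is $\ge e^{\rho\delta}\cdot e^{(\mu-1)\rho\delta}> e^{\rho\delta}$ since $\mu>1,\rho>0$; and the gap $e^{(\mu-1)\rho\delta}-1$ has a uniform positive lower bound once $\delta$ is bounded below — but $\delta$ can be arbitrarily small, so instead I compare the finite differences directly and show $P_n(\mu(t+\delta))-e^{\rho\delta}P_n(\mu t)\ge 0$ by a term-by-term estimate that, after expanding $e^{\rho\delta}$, reduces to comparing $\sum_k \frac{\mu^k}{k!}(t+\delta)^k\mu^{?}$ against $\sum_{j,k}\frac{\rho^j\delta^j}{j!}\frac{\mu^k t^k}{k!}$; the $\mu$-surplus on the left (coefficient $\mu^{k}$ vs.\ effectively $\mu^{k}$ with no $\delta$-powers on the right) gives, for each total degree, a strict inequality once $n$ is large enough that the truncation tail is negligible on $[0,t_0]$. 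I would make this uniform by passing to the limit functions $e^{\mu\rho t}$ and $e^{\rho\delta}e^{\mu\rho t}$, noting $e^{\mu\rho(t+\delta)}-e^{\rho\delta}e^{\mu\rho t}=e^{\mu\rho t}(e^{\mu\rho\delta}-e^{\rho\delta})>0$ with a modulus of positivity, and then choosing $n_0$ so that the uniform error $\sup_{[0,t_0]}|P_n(\mu x)-e^{\mu\rho x}|$ is smaller than that modulus. Finally I assemble the three cases into the statement that $\inv_{n,\mu}\cup\{(y,t)\st t>t_0\}$ is an invariant, and observe the lower inequality argument needed no smallness of error and so imposes no constraint on $n_0$.
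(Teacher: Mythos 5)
Your overall architecture is right and matches the paper's: split stability into the three cases, reduce to the two one\hyphen sided inequalities $e^{\rho\delta}P_n(t)\ge P_n(t+\delta)$ (all $n$) and $e^{\rho\delta}Q_{n,\mu}(t)\le Q_{n,\mu}(t+\delta)$ (only $n\ge n_0$), and your term\hyphen by\hyphen term binomial proof of the lower inequality is correct and in fact cleaner than the paper's inductive derivative comparison. But the step you flag as ``the main obstacle'' is left with a genuine gap. Your final plan is to bound $e^{\mu\rho(t+\delta)}-e^{\rho\delta}e^{\mu\rho t}=e^{\mu\rho t}e^{\rho\delta}\bigl(e^{(\mu-1)\rho\delta}-1\bigr)$ below by a ``modulus of positivity'' and choose $n_0$ so that $\sup_{[0,t_0]}|P_n(\mu x)-e^{\mu\rho x}|$ is smaller. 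This cannot work as stated: the gap is $\Theta\bigl((\mu-1)\rho\delta\bigr)$ as $\delta\to 0^+$, so it has no positive lower bound uniform in $\delta$, while the truncation error is a fixed positive number for each $n$. You notice this objection yourself mid\hyphen paragraph, pivot to a ``compare finite differences term by term'' sketch, but then return to the same uniform\hyphen modulus argument to close; the term\hyphen by\hyphen term sketch is not carried far enough to see that it survives the $\delta\to 0$ regime. The paper avoids the problem entirely by working with derivatives: it checks equality at $\delta=0$ and proves the $\delta$\hyphen free inequality $\mu\rho P_{n-1}(\mu t)\ge \rho P_n(\mu t_1)e^{\rho(t-t_1)}$ for $t_1\le t\le t_0$, which follows from $P_n(\mu t_1)e^{\rho(t-t_1)}\le e^{\mu\rho t}$ together with $P_{n-1}(\mu t)\ge e^{\mu\rho t}-\epsilon$ for $\epsilon<\tfrac{\mu-1}{\mu}$; integrating this derivative bound gives the inequality for every $\delta\ge 0$ at once. (Your grouped\hyphen by\hyphen $j$ expansion can also be pushed through --- the $j$\hyphen th surplus $(\mu^j-1)P_{n-j}(\mu t)$ beats the $j$\hyphen th deficit $P_n(\mu t)-P_{n-j}(\mu t)$ for $n$ large --- but that argument has to be made, not just gestured at.)

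A second, smaller slip: to place the orbit inside $\inv_{n,\mu}$ you need $e^{\rho t}\le Q_{n,\mu}(t)$ on $[0,t_0]$, and your chain ``$Q_{n,\mu}(t)\le e^{\mu\rho t}$, which is $\ge e^{\rho t}$'' points the wrong way --- an upper bound on $Q_{n,\mu}$ says nothing about whether it dominates $e^{\rho t}$. This containment is \emph{not} routine for all $n$: it is precisely the paper's Property~1 of the over\hyphen approximations, it holds only for $n\ge n_0$ (depending on $\mu$), and it is one of the two places where the hypothesis $n\ge n_0$ is actually used. You should prove it, e.g.\ from the uniform convergence $Q_{n,\mu}\to e^{\mu\rho t}$ on $[0,t_0]$ combined with an argument near $t=0$ where $e^{\mu\rho t}$ and $e^{\rho t}$ touch.
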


The proof is in \cref{ap:proofs of sec5}.

We can construct such invariants for every curve $e^{\rho_i t}$, and thus build $\tilde\inv_{n,\mu}$ for
\begin{align*}
  \left\{ (e^{\rho_1 t},\ldots,e^{\rho_n t},t)\st t\ge 0\right\}. 
\end{align*}
To prove \cref{thm:main hardness} we define $\tilde Y$ by the formula
\begin{align*}
  \Phi(x_1,\ldots,x_n,x_{n+1})\defequals a_1x_1 +\cdots + a_nx_n = 0\text{ and }0\le x_{n+1}\le t_0.
\end{align*}

Since the analysis was done on the CDS $\tup{J,\vect {\tilde x}_0}$, whose entries are not rational in general, before proceeding with the proof of \cref{thm:main hardness}, we need the following lemma to say that changing basis does not have an effect in the decision problem at hand:
\begin{restatable}{lemma}{lemchangebasis}
  \label{lem:changebasis}
  For every $\tilde Y$ semi-algebraic, there exists another semi-algebraic set $Y$ and $\vect x_0$ with rational entries
  such that the following two statements are equivalent:
  \begin{itemize}
  \item $\tup {J,\vect {\tilde x}_0}$ has a semi-algebraic invariant disjoint from $\tilde Y$, containing the whole orbit,
  \item $\tup {PJP^{-1},\vect x_0}$ has a semi-algebraic invariant disjoint from $Y$, containing the whole orbit.
  \end{itemize}
\end{restatable}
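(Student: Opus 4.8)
The plan is to exploit the fact that the linear bijection $\vect y \mapsto P\vect y$ (equivalently $\vect x\mapsto P^{-1}\vect x$) intertwines the two dynamical systems: since $e^{(PJP^{-1})t} = Pe^{Jt}P^{-1}$, a set $\inv$ is an invariant for $\tup{PJP^{-1},\vect x_0}$ if and only if $P^{-1}\inv$ is an invariant for $\tup{J,P^{-1}\vect x_0}$, and likewise for the orbit-containment and disjointness-from-$Y$ conditions, which are all preserved under applying a fixed invertible linear map. So at the level of \emph{arbitrary} sets the correspondence is immediate; the two genuine obstacles are (i) $P$ has algebraic, not rational, entries, so $P^{-1}\inv$ need not be semi-algebraic if we only know $\inv$ is semi-algebraic over $\QQ$ --- except that semi-algebraic sets are closed under images of $\QQ$-semi-algebraic (indeed $\overline{\QQ}$-semi-algebraic) linear maps, and by Tarski quantifier elimination every semi-algebraic set defined with real-algebraic parameters is already definable over $\QQ$; and (ii) we must simultaneously arrange that the \emph{initial vector} $\vect x_0 = P\vect{\tilde x}_0$ and the \emph{error set} $Y$ become rational / semi-algebraic objects, even though $P\vect{\tilde x}_0$ is in general irrational.

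Concretely I would proceed as follows. First, observe that $\vect{\tilde x}_0 = (1,\ldots,1,0,\ldots,0)$ is already rational, but $\vect x_0 := P\vect{\tilde x}_0$ need not be; to fix this, note $P$ is only determined up to right-multiplication by a matrix commuting with $J$, and more usefully that the whole construction is insensitive to replacing $\tup{J,\vect{\tilde x}_0}$ by any $\tup{J,\vect w}$ with $\vect w$ in the same "coordinate pattern", so it suffices to pick a \emph{rational} point $\vect x_0$ whose image $P^{-1}\vect x_0$ agrees with $\vect{\tilde x}_0$ on the relevant $n+2$ coordinates and is annihilated (after applying $e^{Jt}$) in the $B$-block --- but since $B$ contributes nothing to the dynamics we care about, any $\vect x_0\in\QQ^d$ with $P^{-1}\vect x_0$ having the prescribed first $n+2$ entries works, and such a rational $\vect x_0$ exists because $P^{-1}$ has algebraic entries and the constraint is a solvable $\QQ$-linear (in fact $\overline\QQ$-linear, hence by descent $\QQ$-solvable after clearing denominators) system. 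Second, given the semi-algebraic $\tilde Y\subseteq\RR^d$, define $Y := P\tilde Y = \{P\vect z : \vect z\in\tilde Y\}$; this is the image of a semi-algebraic set under a linear map with real-algebraic coefficients, hence semi-algebraic, and by quantifier elimination one can moreover extract a defining formula over $\QQ$. Third, check the equivalence: $\inv$ is a semi-algebraic invariant for $\tup{PJP^{-1},\vect x_0}$ disjoint from $Y$ containing $\orb_{PJP^{-1},\vect x_0}(0)$ $\iff$ $P^{-1}\inv$ is a semi-algebraic invariant for $\tup{J,P^{-1}\vect x_0}$ disjoint from $P^{-1}Y = \tilde Y$ containing $\orb_{J,\vect{\tilde x}_0}(0)$, where the last orbit-containment uses that $\orb_{J,\vect{\tilde x}_0}(0)$ and $\orb_{J,P^{-1}\vect x_0}(0)$ coincide in the coordinates on which $\tilde Y$ and the invariant depend (the $B$-block coordinates being free). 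The map $\inv\leftrightarrow P^{-1}\inv$ is a bijection between the two families of invariants, giving both directions of the stated equivalence.

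The step I expect to be the main obstacle is (i) above: carefully justifying that applying the algebraic-entry matrix $P^{-1}$ to a semi-algebraic set keeps it semi-algebraic \emph{and over the rationals}, so that the reduction stays within the class of instances (rational $\vect x_0$, $\QQ$-semi-algebraic $Y$) that the hardness statement is about. The clean way to handle this is to invoke Tarski's effective quantifier elimination: the image $P^{-1}\inv$ is cut out by a first-order formula over $\theoreals$ whose only non-logical symbols are the (real-algebraic) entries of $P^{-1}$, each of which is itself definable over $\QQ$ as the unique root of its minimal polynomial in a given isolating interval; substituting these definitions and eliminating quantifiers yields a $\QQ$-formula, so $P^{-1}\inv$ is $\theoreals$-definable in the sense used here. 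The minor bookkeeping obstacle is matching the initial vectors across the $B$-block, which I would dispatch by the remark that the dynamics of $\tup{J,\vect{\tilde x}_0}$ were explicitly \emph{defined} (in the paragraph preceding the lemma) to ignore the $B$-block, so only the first $n+2$ coordinates of $P^{-1}\vect x_0$ matter and these can be set to $1$ by a rational choice of $\vect x_0$.
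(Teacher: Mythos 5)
Your overall strategy coincides with the paper's: transport invariants, error set and orbit through the linear bijection induced by $P$, and use the fact that real algebraic numbers are $\QQ$-definable (so Tarski quantifier elimination keeps everything semi-algebraic over $\QQ$). That part is fine, modulo one omission the paper does address: $P$ has \emph{complex} entries in general, so one must argue that $Y$ actually lands in $\RR^d$; the paper does this by assuming w.l.o.g.\ that $\tilde Y$ projects to zero on the $B$-block coordinates and observing that the first $n+2$ columns of $P$ and rows of $P^{-1}$ are real (being associated with the real eigenvalues $\rho_1,\dots,\rho_n,0$).

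The genuine gap is in your treatment of the rationality of $\vect x_0$. You propose to choose a rational $\vect x_0$ such that $P^{-1}\vect x_0$ agrees with $\vect{\tilde x}_0$ on the first $n+2$ coordinates, and justify its existence ``by descent.'' Descent for linear systems goes the wrong way here: if a system with coefficients in a field $K$ is solvable over an extension $L\supseteq K$, it is solvable over $K$ --- but your coefficients lie in $\overline{\QQ}\cap\RR$ and you want a solution in the \emph{smaller} field $\QQ$, which is not guaranteed. Concretely, the admissible $\vect x_0$ form the affine subspace $P\vect{\tilde x}_0+\mathrm{span}(P_{n+3},\dots,P_d)$, which is defined over $\overline{\QQ}$ and need not meet $\QQ^d$ (already the equation $\sqrt2\,x_1+\sqrt3\,x_2=1$ has no rational solution; and if $d=n+2$ the subspace is the single, generally irrational, point $P\vect{\tilde x}_0$). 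Even where such a rational point exists, it generally has nonzero components along the $B$-block eigenvectors, so the orbit of $\tup{PJP^{-1},\vect x_0}$ acquires extra terms from $e^{Bt}$; since the candidate invariants are \emph{arbitrary} semi-algebraic sets (not sets ``depending only on the relevant coordinates''), your claimed bijection $\inv\leftrightarrow P^{-1}\inv$ no longer matches orbits to orbits, and the equivalence does not follow. The paper avoids all of this by keeping $\vect x_0=g(\vect{\tilde x}_0)$ (real by the eigenvector argument above) and then \emph{translating the whole system} --- initial point, invariant and error set --- by a vector $\vect v$ with $\vect x_0+\vect v\in\QQ^d$, which preserves semi-algebraicity; this translation is available precisely because the construction built a Jordan block for the eigenvalue $0$ providing a constant coordinate. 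You would need to replace your choice-of-rational-point step by such a translation (or an equivalent device) for the proof to go through.
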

The proof is postponed to \cref{ap:proofs of sec5}. Thanks to this lemma, we can prove \cref{thm:main hardness} for the CDS $\tup{J, \vect {\tilde x}_0}$ and the set $\tilde Y$ instead. This is done as follows. The direct implication is trivial. For the converse, observe that $f(t)$ does not have a zero in $[0,t_0]$ if and only if the $\orb(0)$ and $\tilde Y$ are disjoint. Since both $\orb(0)$ and $\tilde Y$ are closed sets, we can find a tube that contains $\orb(0)$ and is disjoint from $\tilde Y$, {\em i.e.} there exists some $\mu>1$ and $n\in\NN$ such that
\begin{align*}
  \tilde\inv_{n,\mu} \cup \set{(y,t)\st t>t_0}, 
\end{align*}
is an invariant that is disjoint from $\tilde Y$ but contains $\orb(0)$. 

%%% Local Variables:
%%% mode: latex
%%% TeX-master: "master"
%%% End:

\bibliography{bibliography}

\appendix
\section{Proof of \cref{prop: real entries}}
\label{apx: cone is real}
\propreals*
Write $P = \begin{pmatrix} P_1 & \cdots & P_k \end{pmatrix}$ with $P_i$ having dimension $d\times d_i$ for $i\in\{1,\ldots,k\}$.  The condition $A=PJP^{-1}$ is equivalent to $AP=PJ$, which in turn is equivalent to $AP_i = P_iJ_i$ for $i=\{1,\ldots,k\}$. Now if $AP_i = P_iJ_i$ then $A\overline{P_i}=\overline{P_i}\overline{J_i}$ and hence we may assume without loss of generality that for ${i_1},i_2\in\{1,\ldots,k\}$, if $\overline{J_{i_1}}=J_{i_2}$ then $\overline{P_{i_1}}=P_{i_2}$.  Equivalently we may assume that $\overline{P}=PM$ for $M$ a permulation matrix that interchanges column $(i_1,j)$ of $P$ with column $(i_2,j)$ such that $\overline{J_{i_1}}={J_{i_2}}$.  Then we have
\begin{eqnarray*}
  \overline{P\, 
  \diag(B_1,\ldots,B_k)P^{-1}} 
  &=& \overline{P}\, \diag(\overline{B_1},\ldots,\overline{B_k}) \overline{P}^{-1} \\
  &=& PM \diag(\overline{B_1},\ldots,\overline{B_k}) M^{-1}P^{-1} \\
  &=& P\diag(B_1,\ldots,B_k) P^{-1} \, .
\end{eqnarray*}
Hence $P\, \mathrm{diag}(B_1,\ldots,B_k)P^{-1}$ is real.
\qed

%%% Local Variables:
%%% mode: latex
%%% TeX-master: "master"
%%% End:

\section{Proof of \cref{thm: eventual cone contained in o-min invariant}}
\label{apx: invariant contains cone}
\eventualcone*
Before proceeding with the proof, we give some useful definitions and properties of o-minimal theories. Consider an o-minimal theory $\theo$.

A function $f\colon B\to \RR^m$ with $B\subseteq \RR^n$ is \emph{definable} in $\theo$ if its graph $\Gamma(f)=\set{(\vect x,f(\vect x)) \st \vect x\in B}\subseteq \RR^{n+m}$ is an $\theo$-definable set.

O-minimal theories admit the following properties (see~\cite{dries_1998} for precise definitions and proofs).
\begin{enumerate}
	\item
	For an $\theo$-definable set $S\subseteq \RR^d$, its topological closure $\overline{S}$ is also $\theo$-definable.
	\item
	For an $\theo$-definable function $f \colon S\to \RR$, the number $\inf\set{f(\vect x) \st \vect x\in S}$ is $\theo$-definable (as a singleton set).
	\item
	O-minimal structures admit \emph{cell decomposition}: every $\theo$-definable set $S\subseteq \RR^d$ can be written as a finite union of connected components called \emph{cells}. Moreover, each cell is $\theo$-definable and homeomorphic to $(0,1)^m$ for some $m \in \{0, 1, \ldots, d\}$ (where for $m=0$ we have that $(0,1)^0$ is a single point, namely $\{\vect{0}\}\subseteq \RR^d$). The \emph{dimension} of $S$ is defined as the maximal such $m$ occurring in the cell decomposition of $S$. 
	\item
	For an $\theo$-definable function $f \colon S\to \RR^m$, the dimension of its graph $\Gamma(f)$ is the same as the dimension of $S$.
\end{enumerate}

We recall the definition of the orbit cone:
\begin{align*}
  \cone_{t_0} \defequals \left\{P\ E(t)\ \diag(\tau)\ Q(t)\ P^{-1}\vect x_0\st \tau\in\torus_\omega,t\geq t_0 \right\},
\end{align*}
and define the {\em orbit rays} for $\tau\in\torus_\omega$:
\begin{align*}
  \ray(\tau,t_0) \defequals \left\{P\ E(t)\ \diag(\tau)\ Q(t)\ P^{-1}\vect x_0\st t\geq t_0 \right\}.
\end{align*}
Fix $\inv$ to be an o-minimal invariant, with $\orb\subseteq \inv$ definable in \theo. To prove Theorem~\ref{thm: eventual cone contained in o-min invariant}, we begin by making following claims of increasing strength:
\begin{claim}
  \label{clm:ev-in-or-out}
  For every $\vect \tau \in \torus_\omega$ there exists $t_0 \ge 0$ such that $\ray(\vect \tau,t_0)\subseteq \inv$ or
  $\ray(\vect \tau,t_0)\cap \inv=\emptyset$.
\end{claim}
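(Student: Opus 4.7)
The plan is to reduce the claim to a one-variable o-minimality argument on the time parameter $t$. Fix $\vect\tau \in \torus_\omega$. The ray is the image of the map
\[
\gamma : [0,\infty) \to \RR^d, \qquad \gamma(t) \defequals P\, E(t)\, \diag(\vect\tau)\, Q(t)\, P^{-1}\vect x_0,
\]
whose components are built from $e^{\rho_j t}$, polynomials in $t$, and constants drawn from $P$ (algebraic) and from $\vect\tau$ (fixed reals, via the complex-to-real identification of \cref{rmk:complex definable}). In particular, viewing the entries of $\vect\tau$ and of $P$ as parameters, $\gamma$ is a \theoexp-definable function.

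Now let $\theo$ be the o-minimal expansion of $\theoexp$ in which $\inv$ is definable, and consider the preimage
\[
T \defequals \{\, t \ge 0 \st \gamma(t) \in \inv \,\}.
\]
Since $\gamma$ is $\theo$-definable (with parameters) and $\inv$ is $\theo$-definable, the set $T$ is a $\theo$-definable subset of $\RR$. By o-minimality, $T$ is a finite union of points and open intervals.

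Consequently, exactly one of the following holds: either $T$ is bounded above, in which case taking $t_0$ larger than $\sup T$ yields $\ray(\vect\tau,t_0)\cap \inv = \emptyset$; or $T$ contains a tail $[t_0,\infty)$ for some $t_0 \ge 0$, and then $\ray(\vect\tau, t_0) \subseteq \inv$. Either way, the claim is established.

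The only point that requires care is the definability of $\gamma$ when the entries of $\vect\tau$ (and of $P$) are transcendental. This is not an obstacle because o-minimal structures are closed under naming arbitrary real constants as parameters, and the finite-union-of-intervals conclusion holds for parametrically definable subsets of $\RR$ just as for those defined without parameters. Beyond this observation, the argument is purely a one-line application of o-minimality to a single real parameter.
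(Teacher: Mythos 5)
Your proof is correct and follows essentially the same route as the paper: fix $\vect\tau$, observe that $\{t\ge 0 : P\,E(t)\,\diag(\vect\tau)\,Q(t)\,P^{-1}\vect x_0\in\inv\}$ is definable in the o-minimal structure $\theo$, hence a finite union of points and intervals, and conclude by the unbounded-tail dichotomy. The extra remark about parameters is fine (o-minimality is stated for parametrically definable sets), so nothing further is needed.
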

\begin{claim}
  \label{clm:ev-in}
  For every $\vect \tau \in \torus_\omega$ there exists $t_0 \ge 0$ such that $\ray(\vect \tau,t_0)\subseteq \inv$.
\end{claim}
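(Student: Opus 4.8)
The plan is to prove \cref{clm:ev-in} given \cref{clm:ev-in-or-out}, and in the process I will spell out how \cref{clm:ev-in-or-out} itself is obtained, since the two fit together. For \cref{clm:ev-in-or-out}, the natural approach is to consider the ray $\ray(\vect\tau,0)$ as a definable curve in $\RR^d$: indeed, fixing $\vect\tau$, the map $t\mapsto P\,E(t)\,\diag(\vect\tau)\,Q(t)\,P^{-1}\vect x_0$ has entries that are exponential polynomials in $t$, hence its graph (and therefore its image) is definable in $\theoexp$, and a fortiori in any o-minimal $\theo$ containing the invariant $\inv$. Then $\ray(\vect\tau,0)\cap\inv$ is a definable subset of this one-dimensional curve, so pulling back along the (injective, definable) parametrisation by $t$, we get a definable subset of $\RR_{\ge0}$, which by o-minimality is a finite union of points and intervals. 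Hence for $t_0$ larger than the right endpoint of the last bounded piece, the ray from $t_0$ is either entirely inside or entirely outside $\inv$.

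For \cref{clm:ev-in} itself, the idea is a density/topology argument by contradiction, exactly as sketched in the proof sketch of \cref{thm: eventual cone contained in o-min invariant}. Suppose that for some $\vect\tau\in\torus_\omega$ the ray is eventually \emph{outside} $\inv$, i.e. (by \cref{clm:ev-in-or-out}) there is $t_0$ with $\ray(\vect\tau,t_0)\cap\inv=\emptyset$. First I would reduce to the case $\vect\tau$ is an accumulation point of the diagonals of $\{R(t)\st t\ge 0\}$ inside $\torus_\omega$ — which, by \cref{prop:dense subgroup}(2), is \emph{every} point of $\torus_\omega$. The plan is then: since $\ray(\vect\tau,t_0)$ avoids the closed-ish structure of $\inv$, and $\inv$ is definable, one can find an open neighbourhood behaviour: take a point $\vect p = P\,E(t)\,\diag(\vect\tau)\,Q(t)\,P^{-1}\vect x_0$ on the ray with $t$ large, lying outside $\inv$; because $\inv^c$ is definable and $\vect p\notin\overline{\inv}$ would be too strong, instead one argues that the ray meets a fixed-dimension cell of $\inv^c$, and then uses continuity of the parametrisation jointly in $t$ and $\vect\tau$ to transfer this to nearby $\vect\tau' $ that are diagonals of genuine rotations $R(s)$. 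For such $\vect\tau'=\text{diag of }R(s)$, the point $P\,E(t)\,\diag(\vect\tau')\,Q(t)\,P^{-1}\vect x_0$ is (a time-shifted version of) an actual orbit point, or at least lies on $\orb(u)$ after reparametrising — more precisely, since $E(t)R(s)Q(t)$ can be rearranged to match $e^{A\cdot(\text{something})}$ only when $s$ is linked to $t$, so one must be a little careful: the correct statement is that for $\vect\tau'$ in a dense subset, $\ray(\vect\tau',t_0')\subseteq\orb(u)\subseteq\inv$, contradicting that a whole neighbourhood of such rays lies outside $\inv$.

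Concretely, I expect the argument to run as follows. Fix the o-minimal cell decomposition of $\RR^d\setminus\inv$; the ray $\ray(\vect\tau,t_0)$, being connected and disjoint from $\inv$, lies in a single cell $\mathcal{D}$, which is definable and open in its affine hull. Using property (2) of o-minimal theories and definability of the parametrisation, pick an unbounded set of times $t$ at which the ray sits a bounded distance $\eta>0$ away from $\partial\mathcal D$ (this needs an argument that the ray does not asymptotically hug the boundary — here is where o-minimal "curve selection"/monotonicity near infinity is used, i.e. each coordinate of the distance function is eventually monotone). Then by uniform continuity of $(\vect\tau,t)\mapsto P\,E(t)\,\diag(\vect\tau)\,Q(t)\,P^{-1}\vect x_0$ on compact $t$-windows, there is a neighbourhood $V$ of $\vect\tau$ in $\torus_\omega$ such that for all $\vect\tau'\in V$ the corresponding ray-points at those times $t$ still lie in $\mathcal D$, hence outside $\inv$. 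But $V$ contains a diagonal of some $R(s)$ by density (\cref{prop:dense subgroup}(2)); for that $\vect\tau'$, and using $R(\delta)\diag(\vect\tau')\in\torus_\omega$-invariance together with the semigroup identities $E,R,Q$ from \cref{lem: cone is invariant}, one identifies a point of $\ray(\vect\tau',\cdot)$ with a genuine orbit point $e^{Au}\vect x_0$ for suitable $u\ge 0$, which lies in $\orb\subseteq\inv$ — the contradiction. Therefore the "outside" alternative of \cref{clm:ev-in-or-out} is impossible, proving \cref{clm:ev-in}.

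The main obstacle, I expect, is the step asserting that the ray does not asymptotically approach the boundary of its containing cell — equivalently, that the distance from $\ray(\vect\tau,t)$ to $\inv$ stays bounded below for an unbounded set of $t$. This requires combining o-minimality (the distance-to-$\inv$ function along the ray is definable, hence eventually monotone, hence has a limit in $[0,\infty]$) with the observation that this limit cannot be $0$: if it were, one would approximate orbit points arbitrarily well by points outside $\inv$, but $\inv\supseteq\orb$ and one needs $\inv$ or its relevant portion to be "fat" enough near the orbit — this is precisely where the fact that $\inv$ contains the \emph{full} limit set, via density of rotation diagonals in $\torus_\omega$ (\cref{prop:dense subgroup}(2)), gets used a second time. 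Handling the interplay between the two uses of density, and making the uniform-continuity transfer quantitatively correct on the right $t$-windows, is the delicate core of the argument; everything else is bookkeeping with the Jordan decomposition and the o-minimal toolbox listed above.
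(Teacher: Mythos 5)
Your reduction to \cref{clm:ev-in-or-out} is fine and matches the paper, but the density argument you build on top of it has a genuine gap --- in fact two related ones. First, the uniform-continuity transfer does not work: the map $(\vect\tau,t)\mapsto P\,E(t)\,\diag(\vect\tau)\,Q(t)\,P^{-1}\vect x_0$ is linear in $\diag(\vect\tau)$ with a factor $E(t)Q(t)$ that is unbounded in $t$ (whenever some $\rho_i\neq 0$ or some Jordan block is non-trivial), so a perturbation of $\vect\tau$ of fixed size is amplified without bound as $t\to\infty$. You therefore only control a compact window of times $[T_1,T_2]$ with some neighbourhood $V$ of $\vect\tau$, and the two parameters fight each other: enlarging the window shrinks $V$. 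Second, and more importantly, the intended contradiction does not materialise: for $\vect\tau'=\diag(R(s))$ the ray $\ray(\vect\tau',t_0)$ is \emph{not} contained in the orbit; it meets the orbit in exactly one point, $e^{As}\vect x_0$ (at parameter $t=s$), because the orbit point at time $t$ lies on the ray indexed by \emph{its own} rotation $R(t)$. To conclude, you would need the value of $s$ supplied by Kronecker density (which depends on $V$) to land inside the time window (which determines $V$); there is no way to arrange this, and your fallback claim that $\ray(\vect\tau',t_0')\subseteq\orb(u)$ is simply false. The auxiliary step that the distance from the ray to $\inv$ does not tend to $0$ is also left unresolved, and the justification you sketch for it presupposes the statement being proved.

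The paper avoids all metric estimates. From $\ray(\vect\tau,t_0)\cap\inv=\emptyset$, the invariance $e^{A}\inv\subseteq\inv$, and the identity $e^{A}\ray(\vect\tau',t_0-1)=\ray(\vect\tau,t_0)$ for $\diag(\vect\tau')=R(-1)\diag(\vect\tau)$, one propagates \emph{whole-ray} disjointness backwards: $\ray(R(-n)\diag(\vect\tau),t_0)\cap\inv=\emptyset$ for all $n\in\NN$. Hence $U'=\{\vect\sigma\in\torus_\omega\st\ray(\vect\sigma,t_0)\cap\inv=\emptyset\}$ is a definable subset of $\torus_\omega$ whose closure is all of $\torus_\omega$. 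If $U'\neq\torus_\omega$, then $V'=\{\vect\sigma\st\ray(\vect\sigma,t_0)\cap\inv\neq\emptyset\}$ is likewise definable and dense (propagating forwards with $R(n)$), and Lemma~10 of~\cite{ACOW18} --- two definable subsets of $\torus_\omega$ whose closures are both $\torus_\omega$ must intersect --- gives a contradiction. So $U'=\torus_\omega$, whence $\cone_{t_0}\cap\inv=\emptyset$, contradicting $\orb\subseteq\inv$. That o-minimal intersection lemma is the ingredient your argument is missing; without it, or something equally strong, the continuity-and-density route cannot be closed.
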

\begin{claim}
  \label{clm:in-uniform}
  There exists $t_0 \ge 0$ such that for every $\vect \tau\in \torus_\omega$ we have $\ray(\vect \tau,t_0)\subseteq \inv$.
\end{claim}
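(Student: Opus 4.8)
The plan is to prove the three claims in sequence, each building on the previous, and then observe that Claim~\ref{clm:in-uniform} immediately yields the theorem by taking $t_0$ to be the uniform bound it provides (and noting $t_0 \geq u$ can be arranged since enlarging $t_0$ only shrinks the cone). The overall strategy mirrors the discrete case of~\cite{ACOW18}, but care is needed because $e^A$ has transcendental rather than algebraic entries, so arguments must be carried out in the o-minimal structure $\theo$ containing the orbit, rather than appealing to algebraicity.

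For Claim~\ref{clm:ev-in-or-out}: fix $\vect\tau \in \torus_\omega$. The ray $\ray(\vect\tau, 0)$ is the image of $[0,\infty)$ under the map $t \mapsto P\,E(t)\,\diag(\vect\tau)\,Q(t)\,P^{-1}\vect x_0$; since $E(t)$ and $Q(t)$ have entries built from $e^{\rho_i t}$ and polynomials in $t$, and $\diag(\vect\tau)$ is constant, this map is definable in $\theoexp$ and hence in $\theo$. Therefore the set $T_{\vect\tau} \defequals \{t \geq 0 : P\,E(t)\,\diag(\vect\tau)\,Q(t)\,P^{-1}\vect x_0 \in \inv\}$ is a $\theo$-definable subset of $\RR$, so by o-minimality it is a finite union of points and intervals. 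Consequently there is a threshold $t_0$ beyond which $T_{\vect\tau}$ is either all of $[t_0, \infty)$ or disjoint from it, which is exactly the claim.

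For Claim~\ref{clm:ev-in}: suppose for contradiction that for some $\vect\tau$ the second alternative of Claim~\ref{clm:ev-in-or-out} holds, i.e. $\ray(\vect\tau, t_0) \cap \inv = \emptyset$ for all large $t_0$. The idea is that the rays depend ``continuously and definably'' on $\vect\tau$, and the diagonals of $\{R(t) : t \geq 0\}$ are dense in $\torus_\omega$ by Proposition~\ref{prop:dense subgroup}(2); so the escape of one ray propagates to a dense family of rays. Concretely, consider the definable function $F(\vect\tau) = \inf\{t \geq 0 : P\,E(t')\,\diag(\vect\tau)\,Q(t')\,P^{-1}\vect x_0 \notin \inv \text{ for all } t' \geq t\}$ (using properties 1 and 2 of o-minimal structures, i.e. closure and definable infima, together with the complement of $\inv$ being definable); this $F$ is a $\theo$-definable function on $\torus_\omega$, and if $F(\vect\tau) < \infty$ on a dense set then by o-minimality (the image/preimage structure of definable functions) there would be an open subset of $\torus_\omega$ on which rays eventually leave $\inv$. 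But orbit points $e^{At}\vect x_0 = P\,E(t)\,R(t)\,Q(t)\,P^{-1}\vect x_0$ have their ``$\vect\tau$-coordinate'' (the diagonal of $R(t)$) dense in $\torus_\omega$, so for arbitrarily large $t$ the orbit point lies on a ray indexed by a $\vect\tau$ in this open set at a time past $F(\vect\tau)$, hence outside $\inv$ --- contradicting $\orb(u) \subseteq \inv$. Thus the first alternative must hold for every $\vect\tau$.

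For Claim~\ref{clm:in-uniform}: we now have a function $t_0(\vect\tau) \defequals \inf\{t \geq 0 : \ray(\vect\tau, t) \subseteq \inv\}$ which is finite for every $\vect\tau \in \torus_\omega$ and is $\theo$-definable (again via properties 1, 2 of o-minimal structures applied to the definable relation ``$P\,E(t')\,\diag(\vect\tau)\,Q(t')\,P^{-1}\vect x_0 \in \inv$''). Since $\torus_\omega$ is a bounded definable set and $t_0(\cdot)$ is a definable real-valued function on it, its image is a definable, hence finite-union-of-intervals, subset of $\RR$; the key point is that this image is \emph{bounded above} --- otherwise the graph of $t_0(\cdot)$ would be a definable set whose projection to the last coordinate is unbounded, which for a definable function on a bounded domain would force, by cell decomposition and the curve selection / monotonicity machinery, a ``pole'' along which $t_0 \to \infty$, and one traces this back again to orbit points (whose $\vect\tau$-parameters accumulate everywhere on $\torus_\omega$) escaping $\inv$, a contradiction. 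Taking $t_0 = \sup_{\vect\tau} t_0(\vect\tau) < \infty$ (and then $\max(t_0, u)$) gives the uniform bound, and $\cone_{t_0} = \bigcup_{\vect\tau \in \torus_\omega} \ray(\vect\tau, t_0) \subseteq \inv$, completing the proof.

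The main obstacle is making the density/propagation argument in Claim~\ref{clm:ev-in} and the uniformity argument in Claim~\ref{clm:in-uniform} rigorous: one must carefully justify that the relevant ``escape time'' functions are genuinely $\theo$-definable (this uses that $\inv$ is $\theo$-definable and that the ray parametrization is definable in $\theoexp \subseteq \theo$), and then correctly invoke the o-minimal toolbox --- cell decomposition, definability of infima, and the dimension/graph properties listed --- to convert ``finite on a dense set'' or ``finite pointwise on a bounded set'' into ``uniformly bounded''. The interplay between the topology of $\torus_\omega$ and the density of $\{\diag(R(t))\}$ is what drives the contradictions, and stating it cleanly is the delicate part.
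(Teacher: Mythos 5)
Your reduction of the theorem to the three claims, and your arguments for Claims~\ref{clm:ev-in-or-out} and~\ref{clm:ev-in}, follow the paper's route in spirit (the paper makes the density-propagation step in Claim~\ref{clm:ev-in} precise via the translates $R(-n)\diag(\vect\tau)$ and an appeal to \cite[Lemma 10]{ACOW18} about two definable dense subsets of $\torus_\omega$ having to intersect). But your proof of Claim~\ref{clm:in-uniform} itself --- the statement under review --- has a genuine gap. A $\theo$-definable function that is finite at every point of a bounded definable domain need \emph{not} be bounded: $x\mapsto 1/x$ on $(0,1)$ is semi-algebraic, pointwise finite, and unbounded. So "definable $+$ finite pointwise on a bounded set" does not convert into "uniformly bounded", and your attempt to rescue this by tracing an unbounded pole back to orbit points escaping $\inv$ does not work: if $f(\vect\tau)\to\infty$ along some definable curve in $\torus_\omega$, each of those rays is still eventually contained in $\inv$, and the orbit only meets the ray indexed by $\vect\tau=\diag(R(t))$ at the single parameter value $t$, which itself tends to infinity. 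Nothing forces $t< f(\diag(R(t)))$ for infinitely many $t$, so no orbit point is exhibited outside $\inv$ and no contradiction is obtained.

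The paper closes this gap with two ingredients you are missing. First, the invariance of $\inv$ under $e^{A}$ gives the key inequality $f(R(1)\vect\tau)\le f(\vect\tau)+1$ (since $e^{A}\ray(\vect\tau,t)=\ray(R(1)\vect\tau,t+1)$). Second, by cell decomposition the definable function $f$ is continuous on some nonempty open definable $K\subseteq\torus_\omega$; choosing $K''\subseteq K'\subseteq K$ with $K''$ open and $K'$ closed, $f$ attains a maximum $s_0$ on $K'$, and by density of the diagonals of $\{R(n):n\in\NN\}$ the translates $R(n)K''$ cover the compact set $\torus_\omega$, so finitely many $R(n_1)K',\ldots,R(n_a)K'$ suffice. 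The inequality then bounds $f$ by $n_i+s_0$ on each $R(n_i)K'$, hence uniformly. It is this combination of the semigroup inequality with compactness --- not definability of $f$ alone --- that yields the uniform $t_0$; you would need to add both steps to make your argument go through.
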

\begin{proof}[Proof of \cref{clm:ev-in-or-out}]
  Fix $\vect \tau \in \torus_\omega$.  Then the set
  \begin{align*}
    \set{t\ge 0 \st P\ E(t)\ \diag(\vect \tau)\ Q(t)P^{-1}\vect x_0\in\inv}
  \end{align*}
  is $\theo$-definable and hence comprises a finite union of intervals.  If this set contains an unbounded interval then
  there exists $t_0$ such that $\ray(\vect \tau,t_0)\subseteq \inv$; otherwise there exists $t_0$ such that
  $\ray(\vect \tau,t_0)\cap \inv=\emptyset$.
\end{proof}

\begin{proof}[Proof of \cref{clm:ev-in}]
  We strengthen \cref{clm:ev-in-or-out}.  Assume by way of contradiction that there exist $\vect \tau\in \torus_\omega$ and $t_0\in \RR$ such that $\ray(\vect \tau,t_0)\cap \inv=\emptyset$. Without loss of generality assume that $t_0>1$, and consider $e^{-A} \cdot \ray(\vect \tau,t_0)$. Recall from analysis of $e^{At}$ the decomposition:
  \begin{align*}
    e^{-A}= P\ E(-1)\ R(-1)\ Q(-1)\ P^{-1},
  \end{align*}
  and let $\vect \tau'\in\torus_\omega$ be equal to $R(-1)\diag(\vect \tau)$. In other words, $\diag(\vect \tau)=\diag(\vect \tau')R(1)$ and
  hence $e^{A}\ray(\vect \tau',t_0-1)=\ray(\vect \tau,t_0)$ (this is implicitly shown in the proof of \cref{lem: cone is invariant}). Since $\inv$ is invariant we have $\ray(\vect \tau',t_0-1)\cap\inv=\emptyset$, and consequently $\ray(\vect \tau',t_0)$ itself is disjoint from $\inv$. 

  Repeating this argument, we get that for every $n\in \NN$, the point $\diag(\vect \sigma)=R(-n)\diag(\vect \tau)$ satisfies
  $\ray(\vect \sigma,t_0)\cap \inv=\emptyset$.
	
  Let $U=\set{R(-n)\diag(\vect \tau) \st n\in \NN}$. Then diagonals of $U$ are dense in $\torus_\omega$, since the group of multiplicative relations defined by the $\{e^{-\ii\omega_1},\ldots ,e^{-\ii\omega_k}\}$ is the same as the one defined by $\{e^{\ii\omega_1},\ldots ,e^{\ii\omega_k}\}$. Set $U'=\set{\vect \sigma\in \torus_\omega \st \ray(\vect \sigma,t_0)\cap \inv=\emptyset}$ which is $\theo$-definable, and further, we have $U\subseteq U'\subseteq \torus_\omega$. Moreover, $\overline{U}=\torus_\omega$, so $\overline{U'}= \torus_\omega$.
	
  We now prove that, in fact, $U' = \torus_\omega$.  Assuming (again by way of contradiction) that there exists $\vect \sigma\in \torus_\omega\setminus U'$, then by the definition of $U'$ we have $\ray(\vect \sigma,t_0)\cap \inv\neq \emptyset$. It follows that for every $n\in \NN$, the point $\diag(\vect \sigma')=R(n)\diag(\vect \sigma)$ also satisfies $\ray(\vect \sigma',t_0)\cap \inv\neq \emptyset$. Define $V=\set{R(n)q \st n\in \NN}$, then the diagonals of $V$ are dense in $\torus_\omega$. Further the set $V'=\set{\vect \sigma'\in \torus_\omega \st \ray(\vect \sigma',t_0)\cap \inv\neq \emptyset}$ satisfies $V\subseteq V'\subseteq \torus_\omega$ and $\overline{V'}=\torus_\omega$. Now the sets $U'$ and $V'$ are both definable in \theo, and the topological closure of each of them is $\torus_\omega$.
	
  We employ \cite[Lemma 10]{ACOW18}, which states that if $X, Y \subseteq \torus_\omega$ are $\theo$-definable sets such that $\overline{X} = \overline{Y} = \torus_\omega$, then $X \cap Y \ne \emptyset$.
	
  It follows that $V'\cap U'\neq \emptyset$, which is clearly a contradiction.  Therefore, there is no $\vect \sigma \in \torus_\omega \setminus U'$; that is, $U' = \torus_\omega$.
	
  From this, however, it follows that $\cone_{t_0}\cap \inv=\emptyset$, which is again a contradiction, since $\cone_{t_0} \cap \orb \neq \emptyset$ and $\orb\subseteq \inv$, so we are done.
\end{proof}

\begin{proof}[Proof of \cref{clm:in-uniform}]
  Consider the function $f:\torus_\omega\to \RR$ defined by $f(\vect \tau)=\inf\{t\in \RR \st \ray(\vect \tau,t)\subseteq \inv\}$. By Claim~\ref{clm:ev-in} this function is well-defined. Since $\ray(\vect \tau,t)$ is $\theo$-definable, then so is $f$. Moreover, its graph $\Gamma(f)$ has finitely many connected components, and the same dimension as $\torus_\omega$. Thus, there exists an open set $K\subseteq \torus_\omega$ (in the induced topology on $\torus_\omega$) such that $f$ is continuous on $K$. Furthermore, $K$ is homeomorphic to $(0,1)^{m}$ for some $0\le m\le k$, and thus we can find sets $K''\subseteq K'\subseteq K$ such that $K''$ is open, and $K'$ is closed.\footnote{In case $m=0$, the proof actually follows immediately from Claim~\ref{clm:ev-in}, since $\torus_\omega$ is finite.}  Since $f$ is continuous on $K$, it attains a maximum on $K'$. Consider the set $\set{R(n)\cdot K'' \st n\in \NN}$. By the density of the diagonals of $\set{R(n) \st n\in\NN}$ in $\torus_\omega$, this is an open cover of $\torus_\omega$, and hence there is a finite subcover $\set{R(n_1)K'',\ldots,R(n_a)K''}$. Since $K''\subseteq K'$, it follows that $\set{R(n_1)K',\ldots,R(n_a)K'}$ is a finite closed cover of $\torus_\omega$.
	
  We now show that, for all $\vect \tau \in \torus_\omega$, we have $f(R(1) \vect \tau) \le f(\vect \tau)+1$. Indeed, consider any $\vect \tau\in \torus_\omega$ and $t > 0$ such that $\ray(\vect \tau, t) \subseteq \inv$. Applying $e^{A}$, we get $e^{A} \cdot \ray(\vect \tau, t) \subseteq e^{A} \inv \subseteq \inv$. Similarly to the proof of \cref{lem: cone is invariant}, we have that $e^{A} \cdot \ray(\vect \tau,t)= \ray(R(1) \vect \tau, t+1)$, so we can conclude that $\ray(R(1) \vect \tau, t+1) \subseteq \inv$. This means that $\ray(\vect \tau, t) \subseteq \inv$ implies $\ray(R(1) \vect \tau, t+1) \subseteq \inv$; therefore, $f(R(1) \vect \tau) \le 1+f(\vect \tau)$.
	
	Now denote $s_0=\max_{\vect \tau\in K'} f(\vect \tau)$. Then for every $1\le i\le m$ we have $\max_{\vect \tau\in R(n_i)K'}f(\vect \tau)\le n_i+ s_0$; so $f(\vect \tau)$ is indeed bounded on $\torus_\omega$.
\end{proof}
	Finally, we conclude from Claim~\ref{clm:in-uniform} that there exists $t_0\ge 0$ such that $\cone_{t_0}\subseteq \inv$. This completes the proof of \cref{thm: eventual cone contained in o-min invariant}.

%%% Local Variables:
%%% mode: latex
%%% TeX-master: "master"
%%% End:

\section{Proofs of \cref{sec: fat}}
\label{ap:proofs of sec4}
\fatconesemi*
\begin{proof}
  The only part that is not immediately semi-algebraic is the $\diag(s^{q_1},\ldots,s^{q_k})$ factor, as the exponents are not fixed.
	
	Consider the group $L\defequals\set{(n_1,\ldots, n_k)\in \ZZ^k\st \sum_{i=1}^k n_i\rho_i=0}$. Similarly to the analysis in \cref{sec: orbit cones}, we can compute a finite basis $\{\vect z^1,\ldots,\vect z^m\}\subseteq \ZZ^k$ for $L$. Then, we can rewrite $\bbS$ as $\bbS= \{(q_1,\ldots,q_k)\st \bigwedge_{j=1}^m q_1z^j_1+\ldots q_kz^j_k=0\}$. Next, observe that 
\begin{align*}
  \left\{\diag(s^{q_1},\ldots,s^{q_k})\st (q_1,\ldots,q_k)\in \bbS \right\}=\left\{\diag(w_1,\ldots,w_k)\st \bigwedge_{j=1}^m w_1^{z^j_1}\cdots w_k^{z^j_k}=1 \right\}.
\end{align*}
Indeed, for every $\vect z^j$ and $(q_1,\ldots,q_k)\in \bbS$ we have $(s^{q_1})^{z^j_1}\cdots (s^{q_k})^{z^j_k}=s^{q_1 z^j_1+\ldots +q_k {z^j_k}}=s^0=1$, and conversely, if $w_1,\ldots,w_k$ satisfy the condition on the right hand set, then for every $(n_1,\ldots,n_k)\in L$ we have $w_1^{n_1}\cdots w_k^{n_k}=1$, denote $q_i=\log_s w_i$, then this can be rewritten as $s^{q_1 n_1}\cdots s^{q_k n_k}=1$, so $n_1 q_1+\ldots+n_k q_k=0$, and hence $(q_1,\ldots ,q_k)\in \bbS$.
	
	Furthermore, the requirement $(q_1,\ldots,q_k)$ can be restated in the above formulation as $\ell_i\le \log_s w_i\le u_i$, or equivalently, $s^{\ell_i}\le w_i\le s^{u_i}$ (where $\lowervec=(\ell_1,\ldots,\ell_k)$ and $\uppervec=(u_1,\ldots,u_k)$).
	
Thus, define
\begin{align*}
  \bbS^{\diag}(L,U)\defequals\left\{\diag(w_1,\ldots,w_k)\st \bigwedge_{j=1}^m w_1^{z^j_1}\cdots w_k^{z^j_k}=1\text{ and for all }i,\ L_i\le w_i\le U_i \right\},
\end{align*}
then we can rewrite the fat cone as $\cF_{s_0,\epsilon,\delta,\boldsymbol \ell,\boldsymbol u}$ as the set
\begin{align*}
  \left\{P\ W\ \diag(\tau)\ Q(r)\ P^{-1}\vect x_0\st \vect\tau\in\torus_\omega,\ s\geq s_0,\ \delta\le r\le s^\epsilon,\ W
	\in \bbS^{\diag}(s^{\lowervec},s^{\uppervec})\right\}
\end{align*}
which is clearly semi-algebraic, and is equivalent by the above.
\end{proof}

\begin{lemma}
	\label{lem: fat cone eventual invariant for all epsilon}
For every $\epsilon>0$, there exists $s_0$ such that for every $s_1\ge s_0$, $t\ge 0$ and $\delta,\vect \ell,\vect u$ we have that $e^{At}\cF_{s_1,\epsilon,\delta,\vect \ell,\vect u}\subseteq \cF_{s_1,\epsilon,\delta,\vect \ell,\vect u}$
\end{lemma}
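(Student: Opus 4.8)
The plan is to mimic the proof of \cref{lem: cone is invariant}: I would take an arbitrary point of $\cF_{s_1,\epsilon,\delta,\vect\ell,\vect u}$, apply $e^{At}$ using the decomposition $e^{At}=P\,E(t)\,R(t)\,Q(t)\,P^{-1}$, and then reabsorb the three new factors $E(t)$, $R(t)$, $Q(t)$ into the parameters $s$, $\vect\tau$, $r$ indexing the fat cone. Writing a point as $\vect v=P\,\diag(s^{q_1},\dots,s^{q_k})\,\diag(\vect\tau)\,Q(r)\,P^{-1}\vect x_0$ with $s\ge s_1$, $\diag(\vect q)\in\boxvec(\vect\ell,\vect u)$, $\vect\tau\in\torus_\omega$ and $\delta\le r\le s^\epsilon$, and using that all the diagonal matrices commute with one another and with $Q(\cdot)$ (the former being constant within each Jordan block), I obtain
\[
e^{At}\vect v=P\,\bigl(E(t)\,\diag(s^{q_1},\dots,s^{q_k})\bigr)\,\bigl(R(t)\,\diag(\vect\tau)\bigr)\,\bigl(Q(t)Q(r)\bigr)\,P^{-1}\vect x_0 .
\]
Exactly as in \cref{lem: cone is invariant}, $R(t)\,\diag(\vect\tau)=\diag(\vect\tau')$ for some $\vect\tau'\in\torus_\omega$ and $Q(t)Q(r)=Q(t+r)$. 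The genuinely new ingredient is the growth factor: its $l$-th diagonal entry is $e^{\rho_l t}s^{q_l}=e^{\rho_l t+q_l\log s}=(s'')^{q_l''}$, where $s'':=e^t s$ and $q_l'':=(\rho_l t+q_l\log s)/(t+\log s)$ is a \emph{convex combination} of $\rho_l$ and $q_l$ (the case $t=0$ is trivial, so I may assume $t>0$, making $t+\log s>0$).

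Next I would check that the new parameters $s''$, $\vect q''$, $\vect\tau'$, $r'':=t+r$ still satisfy the constraints defining $\cF_{s_1,\epsilon,\delta,\vect\ell,\vect u}$. We have $s''=e^t s\ge s\ge s_1$. Since $\ell_l\le\rho_l\le u_l$ and $\ell_l\le q_l\le u_l$, convexity gives $\ell_l\le q_l''\le u_l$; and since $\vect q\in\bbS$, any relation $\sum_l n_l\rho_l=0$ forces $\sum_l n_l q_l=0$ and hence $\sum_l n_l q_l''=0$, so $\diag(\vect q'')\in\boxvec(\vect\ell,\vect u)$. Finally $r''=t+r\ge\delta$ is clear, and for the upper bound I need $t+r\le(s'')^\epsilon=e^{\epsilon t}s^\epsilon$; since $r\le s^\epsilon$ it suffices that $t\le(e^{\epsilon t}-1)s^\epsilon$, which—using $e^{\epsilon t}-1\ge\epsilon t$—holds as soon as $\epsilon s^\epsilon\ge 1$, i.e. $s\ge\epsilon^{-1/\epsilon}$. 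Hence I would take $s_0:=\max\{2,\epsilon^{-1/\epsilon}\}$, which depends on $\epsilon$ alone; then for every $s_1\ge s_0$, every $t\ge 0$, and all $\delta,\vect\ell,\vect u$, the point $e^{At}\vect v$ lies in $\cF_{s_1,\epsilon,\delta,\vect\ell,\vect u}$.

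I do not expect a serious obstacle. The one point worth flagging is that the exponents must be updated to the convex combinations $q_l''$ rather than kept fixed; once that is done, membership in the box follows from convexity (crucially using $\vect\ell\le\vect\rho\le\vect u$) and membership in $\bbS$ from linearity. The only genuinely quantitative step is the bookkeeping for the polylog argument: after one application of $e^{At}$ the argument of $Q$ becomes $t+r$, which must be dominated by $(e^t s)^\epsilon$, and this is exactly what forces a uniform lower bound on the initial value of $s$; the estimate above shows that this bound can be chosen to depend only on $\epsilon$ (and in particular not on $\delta,\vect\ell,\vect u$), as the statement requires.
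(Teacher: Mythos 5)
Your proposal is correct and follows essentially the same route as the paper's proof: apply the decomposition $e^{At}=P\,E(t)R(t)Q(t)P^{-1}$, absorb the rotation into $\torus_\omega$, rewrite the growth factor with the new base $e^{t}s$ and exponents given by the convex combination $(\rho_l t+q_l\log s)/(t+\log s)$ (which stays in the box and in $\bbS$ precisely because $\vect\ell\le\vect\rho\le\vect u$), and absorb $Q(t)Q(r)=Q(t+r)$ subject to the bound $t+r\le(e^{t}s)^{\epsilon}$. Your handling of that last inequality via $e^{\epsilon t}-1\ge\epsilon t$ is a minor simplification of the paper's two-case analysis and yields the same conclusion with an explicit $s_0$ depending only on $\epsilon$.
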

\begin{proof}
  Consider a vector
  \begin{align*}
    \vect v\defequals P\ \diag(s^{q_1},\ldots s^{q_k})\ \diag(\vect\tau)\ Q(r)\ P^{-1}\vect x_0\in \cF_{s_1,\epsilon,\delta,\vect \ell,\vect u},
  \end{align*}
  where $s_1$ will be determined later, and let $t\ge 0$. Set $t=\log x$ and recall that
  \begin{align*}
    e^{At}=e^{A\log x}=P\ \diag(x^{\rho_1},\ldots,x^{\rho_k})\ \diag(e^{\ii\omega_1 \log x},\ldots,e^{\ii\omega_k \log x})\ Q(\log x)\ P^{-1},
  \end{align*}
  whence
  \begin{align*}
    e^{At}v&=e^{A\log x}\vect v\\
           &=P\ \diag(x^{\rho_1}s^{q_1},\ldots, x^{\rho_k}s^{q_k})\diag(e^{\ii\omega_1 \log x}\tau_1,\ldots,e^{\ii\omega_k \log x}\tau_k)\ Q(\log x)\ Q(r)\ P^{-1}\vect x_0.
  \end{align*}
  We will now show that $e^{At}\vect v\in \cF_{s_1,\epsilon,\delta,\boldsymbol \ell,\boldsymbol u}$, by drawing some condition on $s_1$.  First, we claim that $(e^{\ii\omega_1 \log x}\tau_1,\ldots,e^{\ii\omega_k \log x}\tau_k)\in \torus_\omega$. Indeed, for all $j$ we have $|e^{\ii\omega_j \log x}\tau_j|=1$, and for all $\vect z$ such that $z_1\omega_1+\ldots+z_k\omega_k=0$, we have
  \begin{align*}
   (e^{\ii\omega_1 \log x}\tau_1)^{z_1}\cdots (e^{\ii\omega_k \log x}\tau_k)^{z_k}=e^{\ii\log x (z_1\omega_1+\ldots+z_k\omega_k)}\cdot \tau_1^{z_1}\cdots \tau_k^{z_k}=1 
  \end{align*}
  since $\vect \tau\in \torus_\omega$.
	
  Next, it is also not hard to prove that $(x^{\rho_1}s^{q_1},\ldots, x^{\rho_k}s^{q_k})$ can be written as
  \begin{align*}
    ((xs)^{p_1}, \ldots, (xs)^{p_k})
  \end{align*}
  for $(p_1,\ldots, p_k)\in \mathrm{Box}(\boldsymbol \ell,\boldsymbol u)$. Indeed, take $p_i=\frac{\rho_i\log x+q_i \log s}{\log x+ \log s}$, then for all $i$, $(xs)^{p_i}=\exp((\log x+\log s)p_i)=\exp(\rho_i\log x+q_i\log s)=x^{\rho_i} s^{q_i}$.
	
	It remains to show that $Q(\log x)\cdot Q(r)$ can be written as $Q(y)$ for $\delta\le y\le (xs)^\epsilon$.
	Recall that $Q(\log x)\cdot Q(r)=Q(\log x+r)$, and that $\delta\le r\le s^{\epsilon}$ and $x\ge 1$. It immediately follows that $\delta<\log x+r$. 

	Now, observe that $\log x+r\le \log x+s^\epsilon$. We prove that if $s_1$ is large enough, then $\log x+s^\epsilon\le (xs)^\epsilon$. Let $x_0\ge 1$ be such that for every $y\ge x_0$ we have $y^\epsilon\ge \max\{\log y,2\}$. Clearly such $x_0$ exists. We now split the proof into two cases.
	\begin{itemize}
		\item If $x> x_0$, take $s_1$ to be large enough such that $s^\epsilon\ge 2$ for every $s\ge s_1$. Then by the condition on $x_0$ we have that
		\[
		\log x+s^\epsilon\le x^\epsilon+s^\epsilon\le (xs)^\epsilon
		\]
		where the last inequality follows since both summands are at least $2$ (indeed, if $A,B\ge 2$ and w.l.o.g. $A\le B$, then $A+B\le 2 B\le AB$).
		
		\item If $x\le x_0$, recall that $x\ge 1$, and thus $\log x\le x-1$. So it suffices to find $s_1$ such that for all $s\ge s_1$ we have $x-1+s^\epsilon\le x^\epsilon s^\epsilon$. The latter is equivalent to $x-1\le (x^\epsilon-1)s^\epsilon$.
		
		Now, if $x=1$, the inequality holds for any $s$, and we are done. Otherwise, let $x>1$, then observe that the function $\frac{x-1}{x^\epsilon-1}$ is increasing, and $\lim_{x\to 1^+} \frac{x-1}{x^\epsilon-1}=\frac{1}{\epsilon}$ (e.g., by L'H\^opital's rule). In particular, the function $\frac{x-1}{x^\epsilon-1}$ is bounded from above on the interval $(0,x_0]$. Set $s_1$ be large enough such that for every $s\ge s_1$ and for every $x\in (0,x_0]$ we have $\frac{x-1}{x^\epsilon-1}\le s^\epsilon$, and we are done.
	\end{itemize}
By taking the maximal $s_1$ from the conditions above, we conclude the lemma.
\end{proof}

%%% Local Variables:
%%% mode: latex
%%% TeX-master: "master"
%%% End:

\section{Proofs of \cref{sec:hardness}}
\label{ap:proofs of sec5}
\lemmahardness*
To prove this lemma, we gather some properties of the under and over approximations. We recall their definitions here.

\begin{align*}
   &P_n(t)\defequals\sum_{k=0}^n\frac{(\rho t)^k}{k!},\\
   &Q_{n,\mu}(t)\defequals P_n(\mu t).
\end{align*}

\begin{proposition}
  \label{prop: underapproximations}
  The under-approximations have the following properties:
  \begin{itemize}
		\item \textbf{Property 1:} for all $n\in \NN$ and $0\le t\le t_0$, we have $P_n(t)\le e^{\rho t}$,
		\item \textbf{Property 2:} for all $n\in \NN$ and $0<t_1\le t\le t_0$, we have ${P_n}'(t)\le (P_n(t_1)e^{\rho (t-t_1)})'$, 
		\item \textbf{Property 3:} $\max_{0\le t\le t_0}\lVert P_n(t)-e^{\rho t}\rVert\to 0$ as $n\to\infty$.
   \end{itemize}
\end{proposition}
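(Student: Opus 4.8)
The plan is to reduce all three properties to the elementary identity $P_n'(t) = \rho\, P_{n-1}(t)$ (with the convention $P_{-1}\equiv 0$), obtained by differentiating the defining finite sum term by term, together with the fact that $\rho>0$ and all the monomials $\tfrac{(\rho t)^k}{k!}$ are non-negative for $t\ge 0$.

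\emph{Properties 1 and 3 are routine.} Since $\rho>0$ and $t\ge 0$, $P_n(t)$ is a partial sum of $e^{\rho t}=\sum_{k\ge 0}\tfrac{(\rho t)^k}{k!}$ with non-negative remainder, which gives Property 1. For Property 3 the same observation yields $0\le e^{\rho t}-P_n(t)=\sum_{k>n}\tfrac{(\rho t)^k}{k!}\le \sum_{k>n}\tfrac{(\rho t_0)^k}{k!}$ for $t\in[0,t_0]$, and the right-hand side is the tail of the convergent series for $e^{\rho t_0}$, hence tends to $0$ as $n\to\infty$; taking the supremum over $[0,t_0]$ finishes the argument.

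\emph{Property 2 carries the only real idea.} Writing out both derivatives, the inequality ${P_n}'(t)\le (P_n(t_1)e^{\rho(t-t_1)})'$ is equivalent, after dividing by $\rho$ and multiplying by $e^{-\rho t}$, to $P_{n-1}(t)\,e^{-\rho t}\le P_n(t_1)\,e^{-\rho t_1}$ for $0<t_1\le t$. The key step is to introduce $\phi_m(t)\defequals P_m(t)e^{-\rho t}$ and to observe that each $\phi_m$ is non-increasing on $[0,\infty)$: indeed $\phi_m'(t)=(P_m'(t)-\rho P_m(t))e^{-\rho t}=\rho(P_{m-1}(t)-P_m(t))e^{-\rho t}=-\tfrac{\rho^{m+1}t^m}{m!}e^{-\rho t}\le 0$. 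Applying this with $m=n-1$ gives $\phi_{n-1}(t)\le\phi_{n-1}(t_1)$ since $t_1\le t$, while $\phi_{n-1}(t_1)\le\phi_n(t_1)=P_n(t_1)e^{-\rho t_1}$ because $P_n(t)-P_{n-1}(t)=\tfrac{(\rho t)^n}{n!}\ge 0$ on $[0,\infty)$; chaining the two inequalities yields exactly the required bound. (The degenerate case $n=0$ is even easier, since $P_0'\equiv 0$.)

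I expect no genuine obstacle here: every computation is a one-liner. The only non-mechanical move is in Property 2, namely recognising that one should compare the \emph{normalised} quantities $P_m(t)e^{-\rho t}$ rather than the $P_m(t)$ directly — once that substitution is made, monotonicity of $\phi_m$ drops out of the identity $P_m'=\rho P_{m-1}$ and the rest is bookkeeping. This proposition is a preparatory lemma whose entire content is the monotonicity of $P_m(t)e^{-\rho t}$.
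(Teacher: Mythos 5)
Your proofs of Properties 1 and 3 coincide with the paper's: $P_n$ is a partial sum of the positive-term series for $e^{\rho t}$ (using $\rho>0$), and the uniform tail bound on $[0,t_0]$ gives the convergence. For Property 2 you take a genuinely different, and cleaner, route. The paper also reduces the claim to $P_{n-1}(t)\le P_n(t_1)e^{\rho(t-t_1)}$, but proves it by a finite descent: it sets $g_n(t)=P_n(t_1)e^{\rho(t-t_1)}-P_{n-1}(t)$, checks $g_n(t_1)\ge 0$, observes that $g_n'(t)\ge 0$ reduces to the analogous inequality with $P_{n-2}$ in place of $P_{n-1}$, and repeats $n-1$ times until it bottoms out at $P_n(t_1)e^{\rho(t-t_1)}\ge P_0(t)=1$. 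Your argument replaces that iteration by the single observation that $\phi_m(t)=P_m(t)e^{-\rho t}$ has derivative $-\frac{\rho^{m+1}t^m}{m!}e^{-\rho t}\le 0$, so the required bound follows from the chain $\phi_{n-1}(t)\le\phi_{n-1}(t_1)\le\phi_n(t_1)$. Both arguments rest on the same identity $P_m'=\rho P_{m-1}$, but your normalisation by $e^{-\rho t}$ produces a closed-form sign for the derivative in one step and sidesteps the paper's slightly informal ``repeating this argument $n-1$ times'' (which tacitly requires re-checking the value at $t_1$ at each level of the descent). Your version is correct and complete, including the degenerate case $n=0$.
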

\begin{proof}
	Property 3 is satisfied by Taylor's theorem. Property 1 holds since $\rho>0$ by our assumption, in which case every Taylor polynomial of $e^{\rho t}$ is an under-approximation. 
	
	We turn to establish Property 2, which is equivalent to $P'_n(t)\le \rho P_n(t_1)e^{\rho(t-t1)}$. Note that it clearly holds for $n=0$.  Observe that $P'_n(t)=\rho P_{n-1}(t)$, thus we want to prove that $\rho P'_n(t)\le \rho P_n(t_1)e^{\rho(t-t_1)}$. Since $\rho>0$, we can cancel it from the inequality. Now consider the function $g_n(t)=P_n(t_1)e^{\rho(t-t_1)}-P_{n-1}(t)$, we prove that $g_n(t)\ge 0$ for all $t_1\le t\le t_0$.  First, we have that $g_n(t_1)=P_n(t_1)-P_{n-1}(t_1)=\frac{(\rho t_1)^n}{n!}\ge 0$. We now prove that $g'_n(t)\ge 0$ for $t_1\le t\le t_0$. We have
	\[g'_n(t)=\rho P_n(t_1)e^{\rho(t-t_1)}-P'_{n-1}(t)=\rho P_n(t_1)e^{\rho(t-t_1)}-\rho P_{n-2}(t)=\rho( P_n(t_1)e^{\rho(t-t_1)}- P_{n-2}(t))\]
	Thus, $g'_n(t)\ge 0$ if and only if $P_n(t_1)e^{\rho(t-t_1)}- P_{n-2}(t)\ge 0$. Repeating this argument for $n-1$ times, we end up with the condition $P_n(t_1)e^{\rho(t-t_1)}- P_{0}(t)\ge 0$, which is equivalent to $P_n(t_1)e^{\rho(t-t_1)}\ge 1$, and it holds since $P_n(t_1)\ge 1$ and $e^{\rho(t-t_1)}\ge 1$.
\end{proof}

Intuitively, Property 1 in \cref{prop: underapproximations} ensures that the curve of $P_n(t)$ always is below that of $e^{\rho t}$, Property 3 says that the under-approximation can get arbitrarily close to the exponential function, and Property 2 is a condition on the derivative of $P_n(t)$ which ensures that the resulting set is invariant. Formally, we have the following:
\begin{lemma}
	\label{lem: underapproxmiation is invariant}
        For every $n\in \NN$, the set
        \begin{align*}
          {\cal L}_n\defequals\big\{(y,t): y\ge P_n(t), 0\le t\le t_0\big\}\cup \big\{(y,t): t>t_0\big\}
        \end{align*}
        is a semi-algebraic invariant  that contains the orbit from time $0$.
\end{lemma}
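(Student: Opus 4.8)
The plan is to check the three defining requirements directly: that ${\cal L}_n$ is semi-algebraic, that it contains the orbit from time $0$, and that it is stable under the flow $e^{A\delta}$ for all $\delta\ge 0$. Semi-algebraicity is immediate: $P_n$ is a polynomial in $t$ whose coefficients $\rho^k/k!$ are real algebraic numbers and $t_0\in\QQ$, so ${\cal L}_n$ is a Boolean combination of sets cut out by polynomial (in)equalities with real algebraic coefficients, hence semi-algebraic. For the orbit: since $P_n(0)=1$, the initial point $(1,0)$ lies in ${\cal L}_n$, so once invariance is shown, $\orb(0)\subseteq{\cal L}_n$ follows automatically from the definition of an invariant from time $0$ (alternatively, directly: for $0\le t\le t_0$ we have $e^{\rho t}\ge P_n(t)$ by Property~1 of \cref{prop: underapproximations}, and for $t>t_0$ the point $(e^{\rho t},t)$ lies in $\set{(y,t)\st t>t_0}$).

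The substance is invariance. Recall that this (sub-)system is the graph $\set{(e^{\rho t},t)\st t\ge 0}$ and its flow sends a point $(y,\tau)$ to $e^{A\delta}(y,\tau)=(e^{\rho\delta}y,\tau+\delta)$. Fix $(y,\tau)\in{\cal L}_n$ and $\delta\ge 0$. If $\tau+\delta>t_0$, then $(e^{\rho\delta}y,\tau+\delta)$ lies in $\set{(y,t)\st t>t_0}\subseteq{\cal L}_n$ and we are done. Otherwise $\tau+\delta\le t_0$; since $\delta\ge 0$ this forces $\tau\le t_0$, so $(y,\tau)$ must come from the first set, giving $0\le\tau\le t_0$ and $y\ge P_n(\tau)$, and also $0\le\tau+\delta\le t_0$. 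Because $e^{\rho\delta}>0$ we have $e^{\rho\delta}y\ge e^{\rho\delta}P_n(\tau)$, so it suffices to establish
\begin{align*}
  e^{\rho\delta}P_n(\tau)\ge P_n(\tau+\delta),
\end{align*}
which would yield $e^{\rho\delta}y\ge P_n(\tau+\delta)$ and hence $(e^{\rho\delta}y,\tau+\delta)\in{\cal L}_n$.

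This last inequality is exactly what Properties~1 and~2 of \cref{prop: underapproximations} deliver. If $\tau=0$, it reads $e^{\rho\delta}\ge P_n(\delta)$, which is Property~1 applied at $\delta\in[0,t_0]$. If $\tau>0$, consider $g(t)\defequals P_n(\tau)e^{\rho(t-\tau)}-P_n(t)$ on $[\tau,t_0]$: then $g(\tau)=0$, and $g'(t)=(P_n(\tau)e^{\rho(t-\tau)})'-P_n'(t)\ge 0$ for all $t\in[\tau,t_0]$ by Property~2 with $t_1=\tau$. Thus $g$ is nondecreasing on $[\tau,t_0]$, so $g(\tau+\delta)\ge g(\tau)=0$, i.e. $P_n(\tau)e^{\rho\delta}\ge P_n(\tau+\delta)$, as required.

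The only genuine content is the inequality $e^{\rho\delta}P_n(\tau)\ge P_n(\tau+\delta)$, and since it is already packaged in Properties~1--2 I expect no serious obstacle. The two points requiring care are purely organizational: the case split at the threshold $t_0$ (so that points flowing past $t_0$ are absorbed by $\set{(y,t)\st t>t_0}$), and the degenerate case $\tau=0$, where the hypothesis $t_1>0$ of Property~2 fails and one must instead invoke Property~1.
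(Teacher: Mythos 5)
Your proof is correct and follows essentially the same route as the paper's: a case split at the threshold $t_0$, followed by reducing invariance to the inequality $e^{\rho\delta}P_n(\tau)\ge P_n(\tau+\delta)$, which is obtained by comparing derivatives via Property~2 of \cref{prop: underapproximations}. You are in fact slightly more careful than the paper in spelling out the auxiliary function $g$ and in handling the degenerate case $\tau=0$ (where Property~2's hypothesis $t_1>0$ fails and Property~1 must be invoked instead), a point the paper's proof elides.
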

\begin{proof}
	Clearly the set ${\cal L}_n$ is semi-algebraic (recall that $t_0\in \QQ$). It thus remains to prove that for every $(y_1,t_1)\in {\cal L}_n$ and for every $\delta>0$ it holds that $(e^{\rho \delta } y_1 , t_1+\delta)\in {\cal L}_n$. Denote $t=t_1+\delta$. If $t> t_0$, then the claim is trivial. Thus, assume $t_1\le t\le t_0$, and we need to prove that $P^\rho_n(t)\le e^{\rho (t-t_1)}y_1$. Since $(y_1,t_1)\in {\cal L}_n$, then $y_1\ge P^\rho_n(t_1)$, and thus for $t=t_1$ the claim holds, and Property 2 in \cref{prop: underapproximations} ensures that the inequality is maintained for all $t_1\le t\le t_0$ (by taking derivative of both sides of the inequality).
\end{proof}

\cref{prop: underapproximations,lem: underapproxmiation is invariant} provide us with an under-approximating invariant. We now turn our attention to the over-approximations. 

\begin{proposition}
  \label{prop: overapproximations}
  The over-approximations have the following properties: 
  \begin{itemize}
  \item \textbf{Property 1:} for every $\mu>1$ there exists $n_0\in \NN$ such that for all $n\ge n_0$ and $0\le t\le t_0$, we have $Q_{n,\mu}(t)\ge e^{\rho t}$,
  \item \textbf{Property 2:} for every $\mu>1$ there exists $n_0\in \NN$ such that for all $n\ge n_0$ and $0\le t_1\le t\le t_0$, we have  ${Q_{n,\mu}}'(t)\ge (Q_{n,\mu}(t_1)e^{\rho (t-t_1)})'$, 
  \item \textbf{Property 3:} for every $\epsilon>0$ there exist $\mu>1$ and $n_0\in \NN$ such that for all $n\ge n_0$, $\max_{0\le t\le t_0}\lVert Q_{n,\mu}(t)-e^{\rho t}\rVert<\epsilon $.
	\end{itemize}
\end{proposition}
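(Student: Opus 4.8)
The plan is to prove the three properties in the order 3, 1, 2, using two elementary facts about the truncations $P_n$. First, since $\rho>0$, for every $s\ge 0$ we have $0\le e^{\rho s}-P_n(s)=\sum_{k\ge n+1}\frac{(\rho s)^k}{k!}\le \frac{(\rho s)^{n+1}}{(n+1)!}e^{\rho s}$, bounding $1/k!\le 1/(n+1)!$ for $k\ge n+1$; in particular $P_n(\mu t)\to e^{\rho\mu t}$ uniformly for $t\in[0,t_0]$, with an explicit (hence effective) rate. For \textbf{Property 3}, note that the quantifier order forces us to fix $\mu$ first: for $t\in[0,t_0]$ and $\mu\ge 1$, $|e^{\rho\mu t}-e^{\rho t}|=e^{\rho t}(e^{\rho(\mu-1)t}-1)\le e^{\rho t_0}(e^{\rho(\mu-1)t_0}-1)\to 0$ as $\mu\downarrow 1$, so choose $\mu>1$ making this $<\epsilon/2$; then uniform convergence of $P_n$ to the exponential on $[0,\mu t_0]$ gives $n_0$ with $|P_n(\mu t)-e^{\rho\mu t}|<\epsilon/2$ for all $n\ge n_0$ and $t\in[0,t_0]$, and adding the two estimates finishes it.

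For \textbf{Property 1}, fix $\mu>1$ and decompose $Q_{n,\mu}(t)-e^{\rho t}=(e^{\rho\mu t}-e^{\rho t})-(e^{\rho\mu t}-P_n(\mu t))$. The first term is at least $\rho(\mu-1)t$, since $e^{\rho\mu t}-e^{\rho t}=e^{\rho t}(e^{\rho(\mu-1)t}-1)\ge e^{\rho(\mu-1)t}-1\ge \rho(\mu-1)t$. The second term, by the tail bound above, is at most $c_n t^{n+1}$ on $[0,t_0]$, with $c_n=\frac{(\rho\mu)^{n+1}}{(n+1)!}e^{\rho\mu t_0}$. Hence $Q_{n,\mu}(t)-e^{\rho t}\ge t\bigl(\rho(\mu-1)-c_n t^{n}\bigr)$, and since $c_n t_0^{n}=\rho\mu e^{\rho\mu t_0}\frac{(\rho\mu t_0)^{n}}{(n+1)!}\to 0$, we pick $n_0$ with $c_n t_0^{n}\le\rho(\mu-1)$ for $n\ge n_0$; for such $n$ the expression is nonnegative throughout $[0,t_0]$.

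For \textbf{Property 2}, using $P_n'=\rho P_{n-1}$ we get $Q_{n,\mu}'(t)=\mu\rho P_{n-1}(\mu t)$ and $\bigl(Q_{n,\mu}(t_1)e^{\rho(t-t_1)}\bigr)'=\rho P_n(\mu t_1)e^{\rho(t-t_1)}$, so after cancelling $\rho>0$ the claim reduces to $\mu P_{n-1}(\mu t)\ge P_n(\mu t_1)e^{\rho(t-t_1)}$ for $0\le t_1\le t\le t_0$ (and $n\ge 1$). Its $n\to\infty$ limit is $\mu e^{\rho\mu t}\ge e^{\rho t+\rho t_1(\mu-1)}$; since $t_1\le t$ and $\mu>1$ the right side is $\le e^{\rho\mu t}$, so $\mu e^{\rho\mu t}-e^{\rho t+\rho t_1(\mu-1)}\ge(\mu-1)e^{\rho\mu t}\ge\mu-1>0$, a positive gap that is \emph{uniform} over the region. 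As $P_{n-1}(\mu t)\to e^{\rho\mu t}$ and $P_n(\mu t_1)\to e^{\rho\mu t_1}$ uniformly on $[0,t_0]$ and $e^{\rho(t-t_1)}\le e^{\rho t_0}$ is bounded, there is $n_0$ beyond which the deviation of each side from its limit is $<(\mu-1)/2$, which preserves the inequality.

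I expect \textbf{Property 1} near $t=0$ to be the one point needing care: the gap $e^{\rho\mu t}-e^{\rho t}$ degenerates linearly in $t$ there, so a merely uniform bound on the truncation error $e^{\rho\mu t}-P_n(\mu t)$ would not close the argument — one genuinely needs the order-$t^{n+1}$ tail estimate, which vanishes faster than the gap as $t\to 0$. Everything else (uniform convergence of Taylor polynomials on compacts, $e^x-1\ge x$, and monotonicity of the exponential) is routine, and the thresholds $\mu$ and $n_0$ produced above are all effective, matching the way the proposition is used in \cref{lem:hardness invariant}.
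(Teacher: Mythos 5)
Your proof is correct, and for Properties 2 and 3 it follows essentially the paper's route: Property 3 by splitting the error as $|Q_{n,\mu}(t)-e^{\rho\mu t}|+|e^{\rho\mu t}-e^{\rho t}|$ and shrinking $\mu$ before choosing $n_0$, and Property 2 by reducing (after cancelling $\rho>0$) to $\mu P_{n-1}(\mu t)\ge P_n(\mu t_1)e^{\rho(t-t_1)}$, bounding the right-hand side by $e^{\rho\mu t}$ via $P_n\le e^{\rho\,\cdot}$ (Property 1 of \cref{prop: underapproximations}) together with $t_1\le t$, and then exploiting the multiplicative slack $\mu>1$ on the left; the paper's choice of $\epsilon<\frac{\mu-1}{\mu}$ in $P_{n-1}(\mu t)\ge e^{\rho\mu t}-\epsilon$ and your uniform gap $(\mu-1)e^{\rho\mu t}\ge \mu-1$ split in half are the same device. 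The genuine divergence is Property 1. The paper dispatches it in one line --- $Q_{n,\mu}(t)\to e^{\rho\mu t}$ uniformly on $[0,t_0]$ and $e^{\rho\mu t}\ge e^{\rho t}$ --- which, read literally, is not sufficient: the gap $e^{\rho\mu t}-e^{\rho t}$ vanishes at $t=0$, so a merely uniform bound on the truncation error cannot be absorbed into it near the left endpoint. Your quantitative tail estimate $e^{\rho\mu t}-P_n(\mu t)\le c_n t^{n+1}$ with $c_n=\frac{(\rho\mu)^{n+1}}{(n+1)!}e^{\rho\mu t_0}$, set against the linear lower bound $e^{\rho\mu t}-e^{\rho t}\ge \rho(\mu-1)t$, is exactly what closes this: the factorisation $Q_{n,\mu}(t)-e^{\rho t}\ge t\bigl(\rho(\mu-1)-c_n t^{n}\bigr)$ with $c_nt_0^{n}\to 0$ is clean, uniform on $[0,t_0]$, and effective. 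So your write-up not only proves the proposition but repairs a small gap in the paper's own argument for Property 1; your closing remark identifies precisely the point where the naive argument fails. The only bookkeeping caveats are that Property 2 needs $n\ge 1$ for $P_{n-1}$ to be defined (which you note) and that your right-hand side in Property 2 never exceeds its limit since $P_n\le e^{\rho\,\cdot}$, so only the left-hand deviation actually needs controlling --- a harmless redundancy.
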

\begin{proof}
  Property 3 clearly holds by Taylor's theorem and since $Q_{n,\mu}\to P_n$ uniformly as $\mu\to 1^+$. For Property 1, fix $\mu>1$, and observe that $Q_{n,\mu}(t)\to e^{\mu \rho t}$ uniformly in $[0,t_0]$ as $n\to \infty$, and since $\mu>1$, we have that $e^{\mu\rho t}\ge e^{\rho t}$.
	
  We turn to establish Property 2. Plugging the definition of $Q_{n,\mu}$ and expanding the derivatives, rewrite the property as $\rho \mu P_{n-1}(\mu t)\ge \rho P_n(\mu t_1)e^{\rho(t-t_1)}$. Cancel $\rho$, and recall from \cref{prop: underapproximations} that $P_n(t)\le e^{\rho t}$, and thus $P_n(\mu t_1)\le e^{\rho \mu t_1}$, so
	\[P_n(\mu t_1)e^{\rho(t-t_1)}\le e^{\rho \mu t_1}e^{\rho(t-t_1)}= e^{\rho((\mu-1)t_1+t)}\le e^{\rho\mu t}\]
	where the last inequality follows since $t_1\le t$.
	
	Next, from Taylor's theorem, for every $\epsilon>0$ there exists $n_0\in \NN$ such that $P_{n-1}(\mu t)\ge e^{\mu \rho t}-\epsilon$ for all $t\in [0,t_0]$. Fix $0<\epsilon< \frac{\mu -1}{\mu}$, and let $n_0$ be the corresponding threshold. By the above, it now suffices to prove that $\mu (e^{\rho\mu t}-\epsilon)\ge e^{\rho \mu t}$, which holds by our choice of $\epsilon$ for every $n\ge n_0$.
\end{proof}

We can now use \cref{prop: overapproximations} to establish the following Lemma, whose proof follows, \emph{mutatis mutandis}, the proof of \cref{lem: underapproxmiation is invariant}.

\begin{lemma}
  \label{lem: overapproxmiation is invariant}
  For every $\mu>1$ there exists $n_0\in \NN$ such that for every $n\ge n_0$, the set
  \begin{align*}
    {\cal U}_n=\big\{(y,t): y\le Q_{n\mu}(t), 0\le t\le t_0\big\}\cup \big\{(y,t): t>t_0\big\}
  \end{align*}
  is a semi-algebraic invariant that contains the orbit from time $0$. 
\end{lemma}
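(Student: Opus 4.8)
The plan is to mirror, \emph{mutatis mutandis}, the proof of \cref{lem: underapproxmiation is invariant}, swapping the roles of the under-approximations $P_n$ and the over-approximations $Q_{n,\mu}$ and reversing the relevant inequalities. I would first fix $\mu>1$ and take $n_0$ to be the larger of the two thresholds supplied by Property~1 and Property~2 of \cref{prop: overapproximations}, and then argue for an arbitrary fixed $n\ge n_0$.

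The easy parts come first. The set ${\cal U}_n$ is semi-algebraic because $t_0\in\QQ$ and $Q_{n,\mu}(t)=\sum_{k=0}^n (\rho\mu t)^k/k!$ is a polynomial with algebraic coefficients, so both defining pieces, and hence their union, are semi-algebraic. Orbit containment follows from Property~1: for $t>t_0$ the point $(e^{\rho t},t)$ lies in the second piece of ${\cal U}_n$, and for $0\le t\le t_0$ Property~1 gives $e^{\rho t}\le Q_{n,\mu}(t)$, placing $(e^{\rho t},t)$ in the first piece.

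The substance is invariance. Since the flow of the scalar system sends $(y_1,t_1)$ to $(e^{\rho\delta}y_1,t_1+\delta)$ for $\delta\ge 0$, I would take an arbitrary $(y_1,t_1)\in{\cal U}_n$ and $\delta>0$, set $t=t_1+\delta$, and show $(e^{\rho\delta}y_1,t)\in{\cal U}_n$. If $t>t_0$ this is immediate; otherwise $0\le t_1\le t\le t_0$ and $y_1\le Q_{n,\mu}(t_1)$, and the goal reduces to $e^{\rho(t-t_1)}y_1\le Q_{n,\mu}(t)$. Multiplying the inequality $y_1\le Q_{n,\mu}(t_1)$ by the nonnegative factor $e^{\rho(t-t_1)}$ reduces this further to $Q_{n,\mu}(t_1)e^{\rho(t-t_1)}\le Q_{n,\mu}(t)$; this holds with equality at $t=t_1$, and Property~2 of \cref{prop: overapproximations} states precisely that the derivative in $t$ of the right-hand side dominates that of the left-hand side on $[t_1,t_0]$, so the inequality propagates to all of $[t_1,t_0]$.

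I do not expect a genuine obstacle: every analytic claim has already been extracted into \cref{prop: overapproximations}, and what remains is the same bookkeeping as in \cref{lem: underapproxmiation is invariant}. The only points requiring a little care are that $n_0$ must be chosen so that Property~1 and Property~2 hold simultaneously (the former is needed for orbit containment, the latter for invariance), and that no case distinction on the sign of $y_1$ is needed because multiplication by $e^{\rho(t-t_1)}\ge 0$ preserves the inequality $y_1\le Q_{n,\mu}(t_1)$.
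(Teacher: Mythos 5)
Your proof is correct and matches the paper's approach exactly: the paper itself only states that \cref{lem: overapproxmiation is invariant} follows \emph{mutatis mutandis} from the proof of \cref{lem: underapproxmiation is invariant} via \cref{prop: overapproximations}, and your write-up is precisely that adaptation, including the two points that genuinely need care (taking $n_0$ large enough for Properties 1 and 2 simultaneously, and noting that multiplying $y_1\le Q_{n,\mu}(t_1)$ by the positive factor $e^{\rho(t-t_1)}$ needs no sign analysis).
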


Combining \cref{lem: underapproxmiation is invariant,lem: overapproxmiation is invariant} and the properties in \cref{prop: underapproximations,prop: overapproximations}, we regain \cref{lem:hardness invariant}.

\lemchangebasis*
\begin{proof}
  Define $g : \CC^d\to \CC^d$ to be the injective linear map:
  \begin{align*}
    \vect v \mapsto P(\vect v P^{-1})^T, 
  \end{align*}
  and let
  \begin{align*}
    Y \defequals g(\tilde Y)\qquad\qquad \vect x_0\defequals g(\vect {\tilde x}_0)^T.
  \end{align*}

  Both $Y$ and $\vect x_0$ can be seen to be subsets of $\RR^d$ as follows. Without loss of generality, we can assume that $\pi_j(\tilde Y)=0$ for all $n+2<j\le d$, that is the projection to the last $d-n-2$ entries is zero, since $\vect {\tilde x}_0$ ignores these entries. The first $n+2$ columns of $P^{-1}$ are real numbers since they are eigenvectors that span the eigenspace corresponding to the real eigenvalues $\rho_1,\ldots,\rho_n,0$. The same is true for the first $n+2$ rows of $P$. It follows now from the definitions that $Y,\vect x_0\subset\RR^d$.  The set $Y$ is semi-algebraic because semi-algebraic sets are closed under linear maps.

  For the direct implication assume that $\tilde\inv$ is an invariant of $\tup{J,\vect {\tilde x}_0}$ with the properties in the statement. Let $\inv = g(\tilde\inv)$. We prove that $\inv$ is an invariant for $\tup{PJP^{-1},\vect x_0}$. Any point in $\inv$ can be written as
  \begin{align*}
    P(\vect x P^{-1})^T, \text{ where }\vect x\in\tilde\inv,
  \end{align*}
  hence, since $\tilde\inv$ is invariant for all $\delta\ge 0$ we have:
  \begin{align*}
    Pe^{J\delta}P^{-1}\ \cdot\ P(\vect x P^{-1})^T= P(e^{J\delta}\vect x P^{-1})^T\in \inv. 
  \end{align*}
  Moreover, by definition $\vect x_0\in\inv$ since $\vect{\tilde x}_0\in\tilde\inv$, so $\inv$ contains the whole
  orbit. The set $\inv$ can be further shown to be disjoint from $Y$, because the map $g$ is injective. The inverse
  implication follows along the same lines.

  This does not prove the lemma because $\vect x_0$ might have irrational entries. We can amend this by translating the whole system by some vector $\vect v$ such that $\vect x_0+\vect v \in \QQ^d$, which is feasible because the sets $Y+\vect v$, and $\inv+\vect v$ are semi-algebraic.
\end{proof}

%%% Local Variables:
%%% mode: latex
%%% TeX-master: "master"
%%% End:

\end{document}